\newif\ifarxiv
\DeclareMathAlphabet{\mathgtt}{LGR}{cmtt}{m}{n}
\renewcommand{\Gamma}{\mathgtt{G}}
\renewcommand{\Delta}{\mathgtt{D}}
\renewcommand{\Sigma}{\mathgtt{S}}
\renewcommand{\mu}{\mathgtt{m}}
\colorlet{keywordcolor}{blue}
\colorlet{nonlintycolor}{RubineRed}
\colorlet{lintycolor}{PineGreen}
\colorlet{nonlinconstructorcolor}{nonlintycolor}
\colorlet{linconstructorcolor}{lintycolor}
\colorlet{alphabetcolor}{black}
\colorlet{funarrowcolor}{lintycolor}
\definecolor{backcolour}{rgb}{0.95,0.95,0.92}
\newcommand{\uparrowcode}{\color{nonlinconstructorcolor} \uparrow}
\newcommand{\tocode}{\color{nonlinconstructorcolor} \to}
\newcommand{\tocodelin}{\color{linconstructorcolor} \to}
\lstdefinelanguage{Agda}{
alsoletter={0123\&'},
keywords={data, where, module, import, open, public,
record, field, let, in, if, then, else, case, of, with,
do, postulate, primitive, mutual, abstract, private,
forall, exists, cong, set, prop, sort, Level, Data, Type,
Renamer},
morekeywords=[2]{L, Bool, Nat, Fin},
morekeywords=[3]{String, Dyck, Trace, Exp, Atom, O, D, C, A, NotStartsWithLP, NotStartsWithRP},
morekeywords=[4]{bal, .nil, .cons, cons, nil, inl, inr, stop, 1to1, 1to2, 0to2, 0to1,
  done, add, num, parens, left, unexpected, lookAheadRP, lookAheadNot, closeGood, closeBad,
  doneGood, doneBad},
morekeywords=[5]{.states, states, .isAcc, isAcc, transitions, src, dst, .transitions, .dst,
  .src, .label, label, 0 , 1 , 2, 3, true, false},
morekeywords=[6]{NUM},
morekeywords=[7]{\&, '},
sensitive=true,
comment=[l]{--},
morecomment=[s]{\{-}{-\}},
morestring=[b]",
mathescape=true,
moredelim={[is][]{\#}{\#}},
escapeinside={(*@}{@*)},
escapebegin=\color{funarrowcolor},
escapeend={}
}
\tiny\color{gray},
\newcommand{\Subst}{\textrm{Subst}}
\newcommand{\SPF}{SPF}
\newcommand{\Var}{\mathsf{Var}}
\newcommand{\K}{\mathsf{K}}
\newcommand{\map}{\mathsf{map}}
\newcommand{\roll}{\mathsf{roll}}
\newcommand{\fold}{\mathsf{fold}}
\newcommand{\inl}{\mathsf{inl}}
\newcommand{\inr}{\mathsf{inr}}
\newcommand{\sem}[1]{\left\llbracket{#1}\right\rrbracket}
\newcommand{\semg}[1]{\sem{#1}\gamma}
\newcommand{\lto}{\multimap}
\newcommand{\tol}{\mathrel{\rotatebox[origin=c]{180}{$\lto$}}}
\newcommand{\StringSem}{\mathbf{String}}
\newcommand{\CharGram}{\texttt{Char}}
\newcommand{\StringGram}{\texttt{String}}
\newcommand{\Set}{\mathbf{Set}}
\newcommand{\hSet}{\mathbf{hSet}}
\newcommand{\Grammar}{\mathbf{Gr}}
\newcommand{\semcat}{\mathbf{C}}
\newcommand{\Bool}{\mathtt{Bool}}
\newcommand{\true}{\mathtt{true}}
\newcommand{\theoryname}{Dependent Lambek Calculus\xspace}
\newcommand{\theoryabbv}{$\textrm{Lambek}^D$\xspace}
\newcommand{\lnld}{$\textrm{LNL}_D$}
\newcommand{\isTy}{\textrm{ type}}
\newcommand{\isCtx}{\textrm{ ctx}}
\newcommand{\isSmall}{\textrm{ small}}
\newcommand{\isLinTy}{\textrm{ lin. type}}
\newcommand{\isSmallLin}{\textrm{ small lin.}}
\newcommand{\isLinCtx}{\textrm{ lin. ctx.}}
\newcommand{\quoteTy}[1]{\lceil{#1}\rceil}
\newcommand{\unquoteTy}[1]{\lfloor{#1}\rfloor}
\DeclareMathOperator*{\bigamp}{\scalerel*{\&}{\bigoplus}}
\DeclareMathOperator*{\bigwith}{\scalerel*{\&}{\bigoplus}}
\newcommand{\ltonl}[1]{~\uparrow #1}
\newcommand{\nil}{\texttt{nil}}
\newcommand{\cons}{\texttt{cons}}
\newcommand{\epscons}{\epsilon\texttt{cons}}
\newcommand{\Trace}{\texttt{Trace}}
\newcommand{\literal}[1]{\texttt{\textquotesingle#1\textquotesingle}}
\newcommand{\stringquote}[1]{\texttt{\textquotedbl#1\textquotedbl}}
\newcommand{\internalize}[1]{\lceil#1\rceil}
\newcommand{\oplusinj}[2]{\sigma\,#1\,#2}
\newcommand{\withprj}[2]{\pi\,#1\,#2}
\newcommand{\simulsubst}[2]{#1\{#2\}}
\newcommand{\subst}[3]{\simulsubst {#1} {#2/#3}}
\newcommand{\el}{\mathsf{el}}
\newcommand{\letin}[3]{\mathsf{let}\, #1 = #2 \, \mathsf{in}\, #3}
\newcommand{\lamb}[2]{\lambda #1.\, #2}
\newcommand{\lamblto}[2]{\lambda^{{\lto}} #1.\, #2}
\newcommand{\lambtol}[2]{\lambda^{{\tol}} #1.\, #2}
\newcommand{\dlamb}[2]{{\lambda}^{{\&}} #1.\, #2}
\newcommand{\withlamb}[2]{{\lambda}^{{\&}} #1.\, #2}
\newcommand{\app}[2]{#1 \, #2}
\newcommand{\applto}[2]{#1 \, #2}
\newcommand{\apptol}[2]{#1 \mathop{{}^{\tol}} #2}
\newcommand{\PiTy}[3]{\textstyle\prod_{#1 : #2} #3}
\newcommand{\SigTy}[3]{\textstyle\sum_{#1 : #2} #3}
\newcommand{\PiTyLimit}[3]{\textstyle\prod\limits_{#1 : #2} #3}
\newcommand{\SigTyLimit}[3]{\textstyle\sum\limits_{#1 : #2} #3}
\newcommand{\LinPiTy}[3]{\textstyle\bigamp_{#1 : #2} #3}
\newcommand{\LinPiTyLimit}[3]{\bigwith\limits_{#1 : #2} #3}
\newcommand{\LinSigTy}[3]{\textstyle\bigoplus_{#1 : #2} #3}
\newcommand{\LinSigTyLimit}[3]{\bigoplus\limits_{#1 : #2} #3}
\newcommand{\equalizer}[3]{\{#1\,|\,\applto {#2}{#1} = \applto{#3}{#1} \}}
\newcommand{\equalizerin}[1]{\langle #1 \rangle}
\newcommand{\equalizerpi}[1]{#1.\pi}
\newcommand{\ctxwff}[1]{#1 \isCtx}
\newcommand{\ctxwffjdg}[2]{#1 \vdash #2 \isTy}
\newcommand{\linctxwff}[2]{#1 \vdash #2 \isLinCtx}
\newcommand{\linctxwffjdg}[2]{#1 \vdash #2 \isLinTy}
\newcommand{\nonlinterm}[3]{#1 \vdash #2 : #3}
\newcommand{\linterm}[4]{#1 ; #2 \vdash #3 : #4}
\newcommand{\nonlineq}[4]{#1 \vdash #2 \equiv #3 : #4}
\newcommand{\LLL}{\textrm{LL}}
\newcommand{\LRR}{\textrm{LR}}
\newcommand{\LALRR}{\textrm{LALR}}
\newcommand{\LL}[1]{\LLL(#1)}
\newcommand{\LR}[1]{\LRR(#1)}
\newsavebox{\logoagdabox}
\sbox{\logoagdabox}{%
  \raisebox{-2pt}{\includegraphics[height=1em]{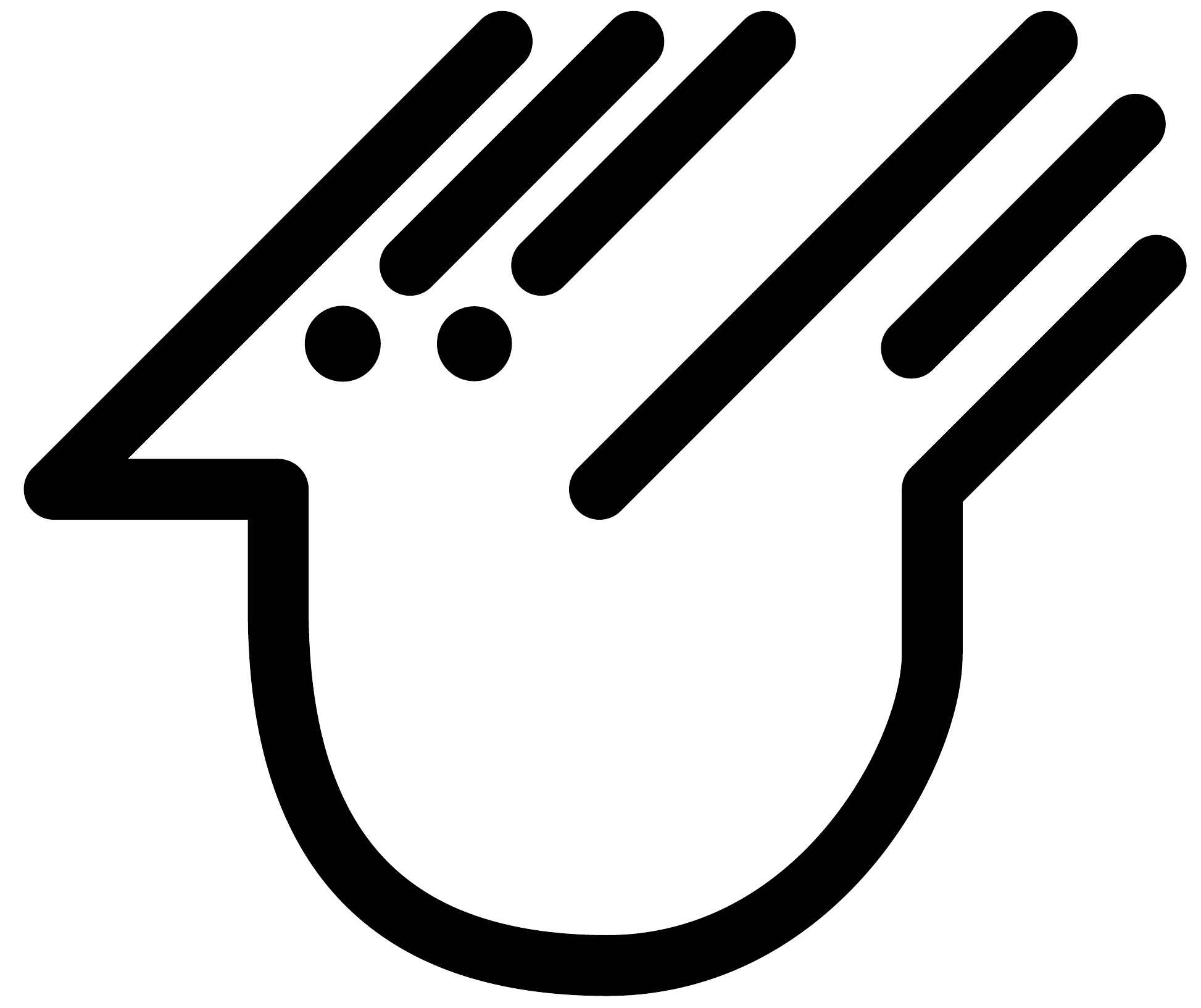}}%
}
\newcommand{\agdalogo}{%
  \usebox{\logoagdabox}}%
\newcommand{\Agda}{\href{\zenodolink}{\agdalogo}}
\newif\ifdraft
\renewcommand{\max}[1]{\ifdraft{\color{blue}[{\bf Max says}: #1]}\fi}
\newcommand{\pedro}[1]{\ifdraft{\color{red}[{\bf Pedro says}: #1]}\fi}
\newcommand{\pipe}{\,|\,}
\newcommand{\states}{\mathtt{states}}
\newcommand{\transitions}{\texttt{transitions}}
\newcommand{\epstransitions}{\epsilon\texttt{transitions}}
\newcommand{\init}{\mathtt{init}}
\newcommand{\pparse}{\mathtt{parse}}
\newcommand{\print}{\mathtt{print}}
\newcommand{\Dyck}{\mathtt{Dyck}}
\newcommand{\N}{\texttt{N}}
\declaretheoremstyle[headfont=\normalfont\itshape]{defstyle}
\declaretheoremstyle[headfont=\normalfont\scshape]{thmstyle}
\declaretheorem[name=Definition, style=defstyle, numberwithin=section]{definition}
\declaretheorem[style=thmstyle, name=Theorem, numberlike=definition]{theorem}
\declaretheorem[style=thmstyle, name=Axiom, numberlike=definition]{axiom}
\declaretheorem[style=thmstyle, name=Corollary, numberlike=definition]{corollary}
\declaretheorem[style=thmstyle, name=Construction, numberlike=definition]{construction}
\crefname{definition}{Definition}{Definitions}
\crefname{theorem}{Theorem}{Theorems}
\crefname{axiom}{Axiom}{Axioms}
\crefname{corollary}{Corollary}{Corollarys}
\crefname{construction}{Construction}{Constructions}
\begin{document}

\pagebreak

\ifarxiv
\title{Intrinsic Verification of Parsers and Formal Grammar Theory in Dependent
  Lambek Calculus (Extended Version)}
\else
\title{Intrinsic Verification of Parsers and Formal Grammar Theory in Dependent Lambek Calculus}
\fi

\author{Steven Schaefer}
\orcid{0009-0007-1258-9501}
\affiliation{%
  \institution{University of Michigan}
  \city{Ann Arbor}
  \country{USA}
}
\email{stschaef@umich.edu}

\author{Nathan Varner}
\orcid{0009-0000-3031-4930}
\affiliation{%
  \institution{University of Michigan}
  \city{Ann Arbor}
  \country{USA}
}
\email{nmvarner@umich.edu}

\author{Pedro Henrique Azevedo de Amorim}
\orcid{0000-0002-8338-8973}
\affiliation{%
  \institution{University of Oxford}
  \city{Oxford}
  \country{United Kingdom}
}
\email{pedro.azevedodeamorim@cs.ox.ac.uk}

\author{Max S. New}
\orcid{0000-0001-8141-195X}
\affiliation{%
  \institution{University of Michigan}
  \city{Ann Arbor}
  \country{USA}
}
\email{maxsnew@umich.edu}

\makeatletter
\let\@authorsaddresses\@empty
\makeatother

\begin{abstract}
  We present \theoryname~(\theoryabbv), a domain-specific dependent
  type theory for verified parsing and formal grammar theory. In
  \theoryabbv, linear types are used as a syntax for formal grammars,
  and parsers can be written as linear terms. The linear typing
  restriction provides a form of intrinsic verification that a parser
  yields only valid parse trees for the input string. We demonstrate
  the expressivity of this system by showing that the combination of
  inductive linear types and dependency on non-linear data can be used
  to encode commonly used grammar formalisms such as regular and
  context-free grammars as well as traces of various types of
  automata. Using these encodings, we define parsers for regular
  expressions using deterministic automata, as well as
  examples of verified parsers of context-free grammars.

  We present a denotational semantics of our type theory that interprets the
  linear types as functions from strings to sets of abstract parse trees and
  terms as parse transformers. Based on this denotational semantics, we have
  made a prototype implementation of \theoryabbv using a shallow embedding in
  the Agda proof assistant. All of our examples parsers have been implemented in
  this prototype implementation.
\end{abstract}

\begin{CCSXML}
<ccs2012>
   <concept>
       <concept_id>10003752.10010124.10010131.10010133</concept_id>
       <concept_desc>Theory of computation~Denotational semantics</concept_desc>
       <concept_significance>300</concept_significance>
       </concept>
   <concept>
       <concept_id>10011007.10011006.10011050.10011017</concept_id>
       <concept_desc>Software and its engineering~Domain specific languages</concept_desc>
       <concept_significance>300</concept_significance>
       </concept>
   <concept>
       <concept_id>10003752.10003766.10003771</concept_id>
       <concept_desc>Theory of computation~Grammars and context-free languages</concept_desc>
       <concept_significance>500</concept_significance>
       </concept>
   <concept>
       <concept_id>10011007.10011006.10011041.10011688</concept_id>
       <concept_desc>Software and its engineering~Parsers</concept_desc>
       <concept_significance>500</concept_significance>
       </concept>
   <concept>
       <concept_id>10011007.10011074.10011099.10011692</concept_id>
       <concept_desc>Software and its engineering~Formal software verification</concept_desc>
       <concept_significance>100</concept_significance>
       </concept>
 </ccs2012>
\end{CCSXML}

\ccsdesc[300]{Theory of computation~Denotational semantics}
\ccsdesc[300]{Software and its engineering~Domain specific languages}
\ccsdesc[500]{Theory of computation~Grammars and context-free languages}
\ccsdesc[500]{Software and its engineering~Parsers}
\ccsdesc[100]{Software and its engineering~Formal software verification}

\ifarxiv
\else
\keywords{Linear types, Dependent types, Lambek calculus, Intrinsic verification}  
\fi

\maketitle

\section{Introduction}
\label{sec:intro}
Parsing structured data from untrusted input is a ubiquitous task in
computing. Any formally verified software system that interacts with the outside
world must contain some parsing component, and unverified parsers can undermine the overall correctness theorem for the system.
For example, in an extensive experiment finding bugs in C compilers
\cite{yangFindingUnderstandingBugs}, an early version of CompCert, the formally
verified C compiler, was found to have bugs only in the then
unverified parsing component \cite{leroy_formal_2009}.
The correctness of the compiler only showed that the abstract syntax tree would be compiled correctly, but this is
not very useful if it did not correctly correspond to the actual
source program. Eventually, CompCert
was updated to use a verified parser that implements
an $\LR{1}$ grammar
\cite{jourdanValidatingLRParsers2012}.

It is entirely understandable from an engineering perspective
\emph{why} verified parsing was not part of the initial releases of
CompCert: parsing algorithms are a complex area,
featuring a variety of domain-specific formalisms such as context-free
grammars and various automata. These formalisms have little relation
to the main components of a verified compiler. For this reason, it is
advantageous for verified parsers to be implemented using a reusable
verified library, just as parser generators and regular expression
matchers have done for many decades in unverified software.

Prior approaches to verified parsing focus on verification of a
particular grammar formalism, such as non-left-recursive grammars or
$\LL{1}$ grammars
\cite{lasserCoStarVerifiedALL2021,EdelmannZippy2020,danielssonTotalParserCombinators2010}.
This has led to a series of isolated solutions, as each new grammar formalism is extended with its own independent
verified implementation.

In this work, we present the design of \theoryname (\theoryabbv), a
domain-specific language for formal verification of parsers. A key
property is that \theoryabbv is an \emph{extensible} framework for
verification of parsers in that it supports the definition of grammar
formalisms of unrestricted complexity. That is, \theoryabbv is not a
system for verifying \emph{one} type of grammar formalism, but instead
is a domain-specific language in which many grammar formalisms and
their verified parsers can be implemented. For example, \theoryabbv itself is
not a verified parser generator compiling regular expressions to
deterministic finite automata but is instead a domain specific
language in which we can write such a verified parser generator.

The design of \theoryabbv
is an extension of Joachim Lambek's \emph{syntactic calculus}
\cite{lambek58}. Lambek calculus is a grammar formalism equivalent
in expressive power to context-free grammars that in modern
terminology would be considered a kind of \emph{non-commutative linear
logic} --- a version of linear logic where the tensor product is not
commutative, reflecting the obvious property that the relative
ordering of characters is significant in parsing problems. We extend
non-commutative linear logic with two key components that increase its
power to support arbitrarily powerful grammar formalisms: inductive
linear\footnote{Throughout, we shall use ``linear types'' to refer to the
non-commutative linear types.} types, as well as dependency of linear types on non-linear
data. The resulting system has two kinds of types: non-linear types
which model sets and linear types which model formal
grammars. Crucially, the non-linear types and linear types are allowed
to be dependent on non-linear types, but not on linear types. This
combination has been used previously in the ``linear-non-linear
dependent'' type theory with \emph{commutative} linear logic to model
imperative programming \cite{krishnaswami_integrating_2015}.

The substructural nature of \theoryabbv is well-aligned with the
requirements intrinsic to parsing and the theory of formal
languages, where strings constitute a clear notion of resource
that cannot be duplicated, reordered, or dropped. Moreover, \theoryabbv ensures that parsers written in the calculus are \emph{correct-by-construction}. That is, our type system is rich
enough that typing derivations carry intrinsic proofs of parser correctness.
Parsers written in \theoryabbv take on a linear functional style, which makes
them familiar to write and amenable to compositional verification techniques.

To show the feasibility of our design, we have implemented \theoryabbv
as a shallowly embedded domain-specific language in the Cubical Agda
proof assistant \cite{VezzosiMortbergAbel2019}. We have implemented many example
grammars and parsers in our system including regular expressions,
non-deterministic and deterministic automata, as well as some example
$\LL{1}$ context-free grammars and parsers using stack-based
automata. Throughout this paper \Agda~ will mark
results that are mechanized in our Agda development and provide a link to their
implementation.

Our Agda prototype is based on a \emph{denotational semantics} of
\theoryabbv. The core idea of the denotational semantics stems from an observation of Elliott:
an abstract notion of formal grammar can be given by a ``proof-relevant'' predicate on strings \cite{elliottSymbolicAutomaticDifferentiation2021}. That is, a \emph{formal grammar} $A$ is a
function $\StringSem \to \Set$ such that for a string $w$, $A~w$ is the
set of ``proofs'' showing that $w$ belongs to the language recognized
by $A$. We show that all linear types in \theoryabbv can be so
interpreted as an abstract formal grammar in this sense, and that
linear terms can be interpreted as a kind of \emph{parse transformer}, a function that
takes a parse tree from one grammar to a parse tree in a different
grammar but over the same underlying string.

Our contributions are then:
\begin{itemize}
  \item The design of \theoryname (\theoryabbv): A dependent
    linear-non-linear type theory for building verified parsers, which
    extends prior work on dependent linear-non-linear type theory to
    support inductive linear types.
  \item A demonstration of how to encode many common grammar and parser formalisms
    (regular expressions, (non-)deterministic automata, context-free
    grammars) within our type theory.
  \item A prototype implementation of \theoryabbv in Agda with all
    examples mechanized.
  \item A denotational semantics for \theoryabbv that shows that the
    parsers are in fact verified to be correct, and the equational theory is sound.
\end{itemize}

This paper begins in \cref{sec:type-theory-examples} by studying small
example programs from \theoryabbv to build intuition.  From there, in
\cref{sec:tt} we provide the syntax, typing, and equational theory of
\theoryname. In \cref{sec:applications} we demonstrate the
applicability of \theoryabbv for relating familiar grammar and
automata formalisms as well as building concrete parsers.  Then in
\cref{sec:semantics-and-metatheory}, we give a denotational semantics
that makes precise the connection between \theoryabbv syntax and
formal grammars. Finally, in \Cref{sec:discussion} we discuss related
and future work.

\ifarxiv\else {Proofs and syntactic forms elided from this paper are presented
    in full in the extended version \cite{schaefer2025intrinsicverificationparsersformal}.}\fi

\section{\theoryname by Example}
\label{sec:type-theory-examples}
To gain intuition for working in \theoryabbv, we begin with some
illustrative examples drawn from the theory of formal languages. Each
of our examples will be defined for strings over the three character
alphabet $\Sigma = \{ \texttt{a} , \texttt{b}, \texttt{c} \}$.

\newcommand{\A}{\texttt{A}}
\newcommand{\B}{\texttt{B}}
\newcommand{\I}{\texttt{I}}
\newcommand{\f}{\texttt{f}}
\newcommand{\g}{\texttt{g}}
\renewcommand{\L}{\texttt{L}}
\renewcommand{\a}{\texttt{a}}
\renewcommand{\b}{\texttt{b}}
\renewcommand{\c}{\texttt{c}}
\newcommand{\w}{\texttt{w}}

\paragraph{Finite Grammars}
First consider finite grammars --- those built from base types via disjunctions and
concatenations. The base types comprise characters drawn from the alphabet, the
empty string, and the empty grammar.
For each character $a$ in the alphabet we have a type $\literal a$ which
has a single parse tree for the string
\stringquote{a} and no parse trees at any other strings. The grammar $\I$ has a single
parse tree for the empty string $\epsilon = \stringquote{}$ and no parses for any other strings.
The final base type, the empty grammar $0$, has no parses for any string. We
use type-theoretic syntax to represent disjunction $\oplus$ and concatenation
$\otimes$ of
grammars. Over an input string $\w$, a parse of the disjunction $\A \oplus \B$ is either
a parse of $\A$ over the string $\w$ or a
parse of $\B$ over the string $\w$, along with a tag \texttt{inl} or \texttt{inr} indicating which case was taken. A parse of $\A \otimes \B$ for
$\w$ is a splitting of $w$ into two strings $\w_{\A}$ and $\w_{\B}$ with
parses for $\A$
and $\B$, respectively.

A derivation of a word $w$ in a grammar $A$ is given by a term in our calculus that satisfies the typing
$\internalize{w} \vdash e : A$,
where $\internalize \w$ is a context with one variable for each character of
$\w$. The term $e : A$ represents a \emph{parse tree} of $w$ for the grammar $A$. For
example, to define a parse tree for \stringquote{ab}, we use
the context $\internalize{\stringquote{ab}} = a : \literal a , b :
\literal b$. In \Cref{fig:fingram}, we give a lambda term and its
typing derivation to define a parse for a finite grammar.

\begin{figure}
\begin{floatlisting}
f : (*@$\uparrowcode$@*)('(*@\color{alphabetcolor}a@*)' (*@$\otimes$@*) '(*@\color{alphabetcolor}b@*)' (*@$\lto$@*) ('(*@\color{alphabetcolor}a@*)' (*@$\otimes$@*) '(*@\color{alphabetcolor}b@*)') (*@$\oplus$@*) '(*@\color{alphabetcolor}c@*)')
f (a , b) = inl (a , b)
\end{floatlisting}\begin{mathpar}
  \footnotesize
  \inferrule
  {
    \inferrule
    {
      \inferrule
      {~}
      {a : \literal a \vdash a : \literal a}
      \\
      \inferrule
      {~}
      {b : \literal b \vdash b : \literal b}
    }
    {a : \literal a , b : \literal b \vdash (a , b) : \literal a \otimes
      \literal b}
  }
  {a : \literal a , b : \literal b \vdash \texttt{f} := \inl(a , b) :
    (\literal a \otimes \literal b) \oplus \literal c}
\end{mathpar}
\caption{(\Agda) \stringquote{ab} is parsed by $(\literal a \otimes \literal b) \oplus \literal c$
}
\label{fig:fingram}
\end{figure}
For this interpretation of parse trees as terms to make sense, our
calculus cannot allow for \emph{any} of the usual structural rules of
type theory: weakening, contraction and exchange. Weakening allows
for variables to go unused, which would allow for a character in an input string to be ignored, yielding the erroneous parse tree \( {a : \literal a , b : \literal b \not \vdash a : \literal a} \).
Contraction allows for the same variable to be used twice, yielding the erroneous parse
\({a : \literal a , b : \literal b \not \vdash (a, a) : \literal a \otimes \literal a}
\).
Finally, the ordering of characters in a string cannot be ignored while
parsing, so we omit the exchange rule because it would allow
for variables in the context to be reordered. This prevents the erroneous derivation,
\(
  {a : \literal a , b : \literal b \not \vdash (b , a) : \literal b \otimes \literal a}
\).

\paragraph{Regular Expressions}
Regular expressions can be encoded as types generated by base types,
$\oplus$, and $\otimes$, and the Kleene star $(\cdot)^{\ast}$.  For a
grammar $\A$, we define the Kleene star $\A^{*}$ as a particular
\emph{inductive linear type} of linear lists, as shown in
\cref{fig:kleenestarinductive}. Here $\A^{*} : \L$ means we are defining $\A^*$
to be a \emph{linear} type. $\A^{*}$ has two constructors: $\nil$, which
builds a parse of type $\A^{*}$ from nothing; and $\cons$, which
linearly consumes a parse of $\A$ and a parse of $\A^{*}$ and builds a
parse of $\A^{*}$. This linear consumption is defined by the linear
function type $\lto$. The linear function type $\A \lto \B$ defines functions that
take in parses of $\A$ as input, \emph{consume} the input, and return a parse of
$\B$ as output. The arrow, $\uparrow$, wrapping these
constructors means that the constructors are not consumed
upon usage, and so are \emph{non-linear} values themselves, which are not small. That is, the
names $\nil$ and $\cons$ are function symbols that may be reused as many times
as we wish.

\begin{figure}
\begin{lstlisting}
data A(*@$^*$@*) : L where
  nil : (*@$\uparrowcode$@*)(A(*@$^*$@*))
  cons : (*@$\uparrowcode$@*)(A (*@$\lto$@*) A(*@$^*$@*) (*@$\lto$@*) A(*@$^*$@*))
\end{lstlisting}
\caption{Kleene Star as an inductive type}
\label{fig:kleenestarinductive}
\end{figure}

Through repeated application of the Kleene star constructors,
\cref{fig:kleenestarderivation} gives a derivation that shows
\stringquote{ab} is parsed by the regular expression $({\literal a}^{*}
\otimes \literal b) \oplus \literal c$. The leaves of the proof tree
that mention the arrow $\uparrow$ describe a cast from a non-linear
type to a linear type.  For instance, the premise of the leaf
involving $\nil$ views $\nil : \ltonl {({\literal a}^{*})}$ as the
name of a constructor, and a constructor should be nonlinearly valued
because we may call it several times (or not at all). However, the
conclusion of this leaf views $\nil : {\literal a}^{*}$ as a linear
value, which in our syntax is an implicit coercion from a nonlinear
value to a linear value. After we call the constructor it ``returns''
a value that may only be used a single time.

\begin{figure}
\begin{mathpar}
\footnotesize
\inferrule
{
  \inferrule
  {
    \inferrule
    {
      \inferrule
      {
        \inferrule
        {
          \inferrule
          {~}
          {\cdot \vdash \cons : \ltonl {({\literal a} \lto {\literal a}^* \lto {\literal a}^*)}}
        }
        {\cdot \vdash \cons : {\literal a} \lto {\literal a}^* \lto {\literal a}^*}
        \\
        \inferrule
        {~}
        {a : {\literal a} \vdash a : {\literal a}}
      }
      {a : a \vdash \cons~a : {\literal a}^* \lto {\literal a}^*}
      \\
      \inferrule
      {
        \inferrule
        {~}
        {\cdot \vdash \nil : \ltonl {({\literal a}^*)}}
      }
      {\cdot \vdash \nil : {\literal a}^*}
    }
    {a : {\literal a} \vdash \cons~a~\nil : {\literal a}^{*}}
    \\
    \inferrule
    {~}
    {b : {\literal b} \vdash b : {\literal b}}
  }
  {a : {\literal a} , b : {\literal b} \vdash (\cons~a~\nil , b) : {\literal a}^{*} \otimes {\literal b}}
}
{a : \literal a , b : \literal b \vdash \texttt{g} := \inl (\cons~a~\nil , b) : ({\literal a}^{*} \otimes \literal b) \oplus \literal c}
\end{mathpar}
\begin{lstlisting}
g : (*@$\uparrowcode$@*)(('(*@\color{alphabetcolor}a@*)' (*@$\otimes$@*) '(*@\color{alphabetcolor}b@*)') (*@$\lto$@*) ('(*@\color{alphabetcolor}a@*)'(*@$^*$@*) (*@$\otimes$@*) '(*@\color{alphabetcolor}b@*)') (*@$\oplus$@*) '(*@\color{alphabetcolor}c@*)')
g (a , b) = inl (cons a nil , b)
\end{lstlisting}
\caption{(\Agda) \stringquote{ab} is parsed by
  $({\literal a}^{*} \otimes \literal b) \oplus \literal c$}
\label{fig:kleenestarderivation}
\end{figure}

We may also have derivations where the term in context is not simply a
string of literals. In \cref{fig:kleeneabstractproof} we show that every parse
of the grammar $(A \otimes A)^{*}$ induces a parse of $A^{*}$ for an
arbitrary grammar $A$. The context $(A \otimes A)^{*}$ does not correspond directly to a string, so it is
not quite appropriate
to think of a linear term here as a parse \textit{tree}.
The
context $a : (A \otimes A)^{*}$ does not contain concrete data to be parsed; rather, there may be many choices of string underlying the parse tree captured
by the variable $a$. Thus, the term $\texttt{h}$ from
\cref{fig:kleeneabstractproof} is not a parse of a string, and
it is more
appropriate to think of it as a parse \textit{transformer} --- a function from
parses of $(\A \otimes \A)^{*}$ to parses of $\A^{*}$.

We define $h$ by recursion on terms of type $(\A \otimes \A)^{*}$.
This recursion is expressed in the derivation tree by invoking the
elimination principle for Kleene star, written
as $\fold$. The parse transformer $h$ is more intuitively presented in the pseudocode
of \cref{fig:kleeneabstractproof} by pattern matching on the input and making an
explicit recursive call in the body of its definition.

\begin{figure}
\footnotesize
\begin{mathpar}
  \inferrule
  {
    \inferrule
    {~}
    {\cdot \vdash \nil : \A^*}
    \\
    \inferrule
    {
      \inferrule
      {
        \inferrule
        {~}
        {a_1 : \A, a_2 : \A , as : \A^* \vdash \cons~a_1 (\cons~a_2~as) : \A^*}
      }
      {aa : \A \otimes \A , as : \A^* \vdash \letin {(a_1 , a_2)}
        {aa} {\cons~a_1 (\cons~a_2~as)} : \A^*}
    }
    {\cdot \vdash f := \lamblto {aa} {\lamblto {as} {\letin {(a_1 , a_2)}
        {aa} {\cons~a_1 (\cons~a_2~as)}}} : (\A \otimes \A) \lto \A^* \lto \A^*}
  }
  {aas : (\A \otimes \A)^* \vdash \texttt{h} := \fold(\nil , f)(aas) : \A^* }
\end{mathpar}
\begin{lstlisting}
h : (*@$\uparrowcode$@*)((A (*@$\otimes$@*) A)(*@$^*$@*) (*@$\lto$@*) A(*@$^*$@*))
h nil = nil
h (cons (a1 , a2) as) = cons a1 (cons a2 (h as))
\end{lstlisting}
\caption{(\Agda) A parse transformer for abstract grammars}
\label{fig:kleeneabstractproof}
\end{figure}

\paragraph{Non-deterministic Finite Automata}
\newcommand{\s}{\texttt{s}}
\newcommand{\0}{\texttt{0}}
\newcommand{\1}{\texttt{1}}
\newcommand{\2}{\texttt{2}}
Regular expressions are a compact formalism for defining a formal grammar,
but an expression such as
$({\literal a}^{*} \otimes \literal b) \oplus \literal c$ does not give an
operationalized method for parsing. For this reason, most parsers are
implemented by
compiling a grammar to a corresponding automaton, which is readily
implemented. To implement
these algorithms in \theoryabbv, we represent automata as types, just as we did with regular expressions.

Finite automata are precisely the class of machines that recognize regular
expressions. \cref{fig:exampleNFA} shows a non-deterministic
finite automaton (NFA) for the regular expression
$({\literal a}^{*} \otimes \literal b) \oplus \literal c$, along with a type $\Trace$, an \emph{indexed} inductive linear type of traces through this automaton. Defining an indexed inductive type can be thought
of as defining a family of mutually recursive inductive types, one for each element of the indexing type. Here $\Trace$
uses an index $\s : \texttt{Fin 3}$ which picks out which
state in the automaton a trace begins at --- where $\texttt{Fin 3}$ is the
finite type containing inhabitants $\{\0 , \1 , \2\}$. We can think of this as defining three mutually recursive inductive types $\Trace~\0$,
$\Trace~\1$, and $\Trace~\2$.
There are three kinds of constructors for $\Trace$: (1) those that
terminate traces, (2) those that correspond to transitions labeled by
a character, and (3) those that correspond to transitions labeled by
the empty string $\epsilon$. The constructor $\texttt{stop}$
terminates a trace in the accepting state $\2$. The constructors
$\texttt{1to1}$, $\texttt{1to2}$, $\texttt{0to2}$ each define a
labeled transition through the NFA, and each of these consumes a parse
of the label's character and a trace beginning at the destination of a
transition to produce a trace beginning at the source of a
transition. The constructor $\texttt{0to1}$ behaves similarly, except
its transition is labeled with the empty string $\epsilon$. Therefore,
$\texttt{0to1}$ takes in a trace beginning at state $\1$ and returns a
trace beginning at state $\0$ corresponding to the same underlying
string.
Lastly, we give a $\lambda$ term that constructs an accepting trace
starting at the initial state for the string \stringquote{ab}.  Later
in \Cref{sec:applications}, we will show that we can actually
construct mutually inverse functions between the regular expression
$({\literal a}^{*} \otimes \literal b) \oplus \literal c$ and its
corresponding NFA traces ($\Trace~\0$) demonstrating that the regular
expression and the automaton capture the same language. Further, since
the functions are mutually inverse, this shows they are \emph{strongly
equivalent} as grammars.

\begin{figure}
  \begin{minipage}[t]{.6\textwidth}
    \vspace{0pt}
  \begin{floatlisting}
data Trace : (s : Fin 3) (*@$\tocode$@*) L where
  stop : (*@$\uparrowcode$@*)(Trace 2)
  1to1 : (*@$\uparrowcode$@*)('(*@\color{alphabetcolor}a@*)' (*@$\lto$@*) Trace 1 (*@$\lto$@*) Trace 1)
  1to2 : (*@$\uparrowcode$@*)('(*@\color{alphabetcolor}b@*)' (*@$\lto$@*) Trace 2 (*@$\lto$@*) Trace 1)
  0to2 : (*@$\uparrowcode$@*)('(*@\color{alphabetcolor}c@*)' (*@$\lto$@*) Trace 2 (*@$\lto$@*) Trace 0)
  0to1 : (*@$\uparrowcode$@*)(Trace 1 (*@$\lto$@*) Trace 0)

k : (*@$\uparrowcode$@*)(('(*@\color{alphabetcolor}a@*)' (*@$\otimes$@*) '(*@\color{alphabetcolor}b@*)') (*@$\lto$@*)  Trace 0)
k (a , b) = 0to1 (1to1 a (1to2 b stop))
\end{floatlisting}
  \end{minipage}%
  \begin{minipage}[t]{.4\textwidth}
  \vspace{8pt}
  \hspace{10pt}
  \begin{tikzpicture}[node distance = 25mm ]
    \node[state, initial] (0) {$\0$};
    \node[state, below left of=0] (1) {$\1$};
    \node[state, right of=1, accepting] (2) {$\2$};

    \path[->] (0) edge[above] node{$\epsilon$} (1)
              (0) edge[right] node{$\stringquote{c}$} (2)
              (1) edge[loop left] node{$\stringquote{a}$} (1)
              (1) edge[below] node{$\stringquote{b}$} (2);
  \end{tikzpicture}
  \end{minipage}%
  \caption{(\Agda) NFA for $(\a^{*} \otimes \b) \oplus \c$ and its corresponding type}
  \label{fig:exampleNFA}
\end{figure}

\section{Syntax and Typing for \theoryname}
\label{sec:tt}

The design of \theoryabbv is based on the dependent
linear-non-linear calculus (\lnld) and Lambek calculus, also known as
non-commutative linear logic
\cite{krishnaswami_integrating_2015,lambek58}. As in \lnld,
\theoryabbv includes both non-linear dependent types, as well as
linear types, which are allowed to depend on the non-linear types, but
not on other linear types.
The main point of departure from \lnld's design is that, as in Lambek calculus \cite{lambek58}, the linear
typing is \emph{non-commutative} --- i.e., that exchange is not an
admissible structural rule. Furthermore, we add a general-purpose
indexed inductive linear type connective, as well as an
\emph{equalizer} type, which we will show allows us to perform
inductive proofs of equalities between linear terms.
Finally, while \lnld was enhanced with special connectives
inspired by separation logic to model imperative programming, we
instead add base types and axioms to the system specifically to model
formal grammars and parsing.

The formation rules for the judgments of \theoryabbv are shown in
\Cref{fig:contexts,fig:non-linear-types,fig:linear-formation}. $\Gamma$ stands for non-linear contexts; $X,Y,Z$
stand for non-linear types; $M,N$ stand for non-linear terms, these
act as in an ordinary dependent type theory; $\Delta$ stands for
linear contexts; $A,B,C$ for linear types; and, $e,f,g$ for linear
terms. These contexts, types and terms are allowed to depend on an
ambient non-linear context $\Gamma$, but note that linear types $A$
cannot depend on any \emph{linear} variables in $\Delta$. We include
definitional equality judgments for both kinds of type and term
judgments as well. Additionally, we have judgments $\Gamma \vdash X
\isSmall$ and $\Gamma \vdash A \isSmallLin$ which are used in the definition of
universe types.

\begin{figure}
  \begin{mathpar}
    \footnotesize
    \boxed{\ctxwff \Gamma}

    \inferrule{~}{\ctxwff \cdot}

    \inferrule{\ctxwff \Gamma \and \ctxwffjdg \Gamma X}{\ctxwff {\Gamma , x : X}}

    \boxed{\linctxwff \Gamma \Delta}

    \inferrule{~}{\linctxwff \Gamma \cdot}

    \inferrule{\linctxwff \Gamma \Delta \and \linctxwffjdg \Gamma A}{\linctxwff
      {\Gamma} {\Delta , a : A}}
  \end{mathpar}
  \caption{Context well-formedness rules}
  \label{fig:contexts}
\end{figure}

\subsection{Non-linear Typing}

\begin{figure}
  \begin{mathpar}
  \footnotesize
    \boxed{\inferrule{\ctxwff\Gamma}{\ctxwffjdg \Gamma X}}

    \inferrule{~}{\ctxwffjdg \Gamma {U}}

    \inferrule{~}{\ctxwffjdg \Gamma {L}}

    \inferrule{~}{\ctxwffjdg \Gamma 1}

    \inferrule{~}{\ctxwffjdg \Gamma \bot}

    \inferrule{~}{\ctxwffjdg \Gamma {Bool}}

    \inferrule{~}{\ctxwffjdg \Gamma {Nat}}

    \inferrule{\ctxwffjdg \Gamma X \and \ctxwffjdg {\Gamma , x : X} {Y}}{\ctxwffjdg \Gamma {\PiTyLimit {x}{X}{Y}}}

    \inferrule{\ctxwffjdg \Gamma X \and \ctxwffjdg {\Gamma , x : X} {Y}}{\ctxwffjdg \Gamma {\SigTyLimit {x}{X}{Y}}}

    \\

    \inferrule{\nonlinterm \Gamma M U}{\ctxwffjdg \Gamma {\unquoteTy M}}

    \inferrule{\linctxwffjdg \Gamma A}{\ctxwffjdg \Gamma {\ltonl A}}

    \inferrule{\ctxwffjdg \Gamma X \and \nonlinterm \Gamma M X \and \nonlinterm \Gamma N X}{\ctxwffjdg \Gamma {M =_{X} N}}

    \\

    \boxed{\inferrule{\ctxwffjdg\Gamma X \and \ctxwffjdg\Gamma Y}{\ctxwffjdg \Gamma {X \equiv Y}}}

    \inferrule{\nonlinterm \Gamma {X \equiv Y} {U}}{\ctxwffjdg \Gamma {X \equiv Y}}

    \inferrule{\linctxwffjdg \Gamma X}{\linctxwffjdg \Gamma {\unquoteTy
        {\quoteTy X} \equiv X}}

    \boxed{\inferrule{\ctxwffjdg \Gamma X}{\Gamma \vdash X \isSmall}}

    \\
    \boxed{\inferrule{\ctxwffjdg \Gamma X}{\nonlinterm \Gamma M X}}

    \inferrule{\Gamma \vdash  X \isSmall}{\nonlinterm \Gamma {\quoteTy X} U}

    \inferrule{\Gamma \vdash A \isSmallLin}{\nonlinterm \Gamma {\quoteTy A} L}

    \inferrule{\linterm \Gamma \cdot e A}{\nonlinterm \Gamma e \ltonl{A}}

    \inferrule{\nonlinterm \Gamma {M \equiv N} {X}}
      {\nonlinterm \Gamma {refl} {M =_{X} N}}

    \\
    \boxed{\inferrule{\nonlinterm\Gamma M X \and \nonlinterm \Gamma N X}{\nonlinterm \Gamma {M \equiv N} X}}

    \inferrule{\nonlinterm \Gamma P {M =_{X} N}}{\nonlinterm\Gamma{M \equiv N} X}
  \end{mathpar}
  \caption{Non-linear Formation and Typing Rules (selection)}
  \label{fig:non-linear-types}
\end{figure}

We present a selection of the non-linear type constructors in
\Cref{fig:non-linear-types} and provide the rest in
\ifarxiv{\cref{fig:full-non-linear-types,fig:jdg-eq-nonlinear} of
    \cref{sec:syntax}}\else{the extended version of the paper}\fi.  First,
we include universe types $U$ of small non-linear types and $L$ of linear
types. These are defined as
universes ``ala Coquand'' in that we define judgments saying when
non-linear and linear types are \emph{small} and define the universes to internalize
precisely this judgment \cite{coquandPresheafModel,lmcs:7713}. The definition of
smallness is simply that all types are small as long as their sub-formulae are,
with the exception of the two universe types themselves. A formal
description of smallness is given in
\ifarxiv{\cref{fig:small-nonlin-ty,fig:small-lin-ty} of
    \cref{sec:syntax}}\else{the extended version of the paper}\fi. These universe
types are needed so that we can define non-linear and linear types by recursion on natural
numbers.
Next, we include standard $\Sigma,\Pi$, empty, unit, Boolean, and natural number
types. More complex inductive types --- such as sum types, list types, and
$Fin~n$ --- can be defined in terms of these primitives, and we give their
encoding in \ifarxiv{\cref{sec:syntax}}\else{the extended version of the paper}\fi.

We use an \emph{extensional} equality type $M =_{X} N$ with introduction form
$refl$, but no elimination form. Instead we have the equality reflection rule
which allows us to conclude a definitional equality $M =_{X} N$ from an
arbitrary typal equality proof $P$.  The usage of an extensional equality type
matches our implementation, which interprets both judgmental and typal equality
as Cubical Agda's \texttt{Path} type, and so naturally supports the equality
reflection rule. Extensional equality makes type checking of our syntax as such
undecidable \citep{Hofmann_1997} because the conversion rule may require an arbitrarily complex
equality proof with no explicit proof term. However, in our Agda implementation we
must provide all of these equalities manually, so the extensionality does
not raise any issues. This makes
\theoryabbv into an extensional theory, supporting function extensionality and
the uniqueness of identity proofs. The development could be ported to an intensional type theory in the future, possibly requiring the use of setoids to handle function extensionality.

Lastly, we include a non-linear type $\ltonl A$ where $A$ is a linear type. The
intuition for this type is that its elements are the linear terms that are
``resource free'': its introduction rule says we can construct an $\ltonl A$
when we have a linear term of type $A$ with no free linear
variables. Semantically, this is the type of parses of the empty string. This
type is used extensively in our examples, playing a similar role to the $!$
modality of ordinary linear logic or the persistence modality $\square$ of
separation logic \cite{girard_linear_1987,jung_higher-order_2016}.

\subsection{Linear Typing}

\begin{figure}
  \begin{mathpar}
    \footnotesize
    \boxed{\inferrule{\ctxwff\Gamma}{\linctxwffjdg \Gamma A}}

    \inferrule{~}{\linctxwffjdg \Gamma I}

    \inferrule{c \in \Sigma}{\linctxwffjdg \Gamma {\literal{c}}}

    \inferrule{\linctxwffjdg \Gamma A \and \linctxwffjdg \Gamma B}{\linctxwffjdg
      \Gamma {A \otimes B}}

    \inferrule{\linctxwffjdg \Gamma A \and \linctxwffjdg \Gamma B}{\linctxwffjdg
      \Gamma {A \lto B}}

    \inferrule{\linctxwffjdg \Gamma A \and \linctxwffjdg \Gamma B}{\linctxwffjdg
      \Gamma {A \tol B}}

    \inferrule{\linctxwffjdg {\Gamma, x : X} A}{\linctxwffjdg
      \Gamma {\LinSigTyLimit{x}{X}{A}}}

    \inferrule{\linctxwffjdg {\Gamma, x : X} A}{\linctxwffjdg
      \Gamma {\LinPiTyLimit{x}{X}{A}}}

    \inferrule{\nonlinterm \Gamma f {\ltonl{(A \lto B)}} \and
      \nonlinterm \Gamma g {\ltonl{(A \lto B)}}}{\linctxwffjdg \Gamma {\equalizer {a}{f}{g}}}

    \inferrule{\nonlinterm \Gamma M L}{\linctxwffjdg \Gamma {\unquoteTy M}}

    \boxed{\inferrule{\linctxwffjdg \Gamma A}{\Gamma \vdash A \isSmallLin}}

    \\
    \boxed{\inferrule{\linctxwffjdg \Gamma A \and \linctxwffjdg \Gamma B}{\linctxwffjdg \Gamma {A \equiv B}}}

    \inferrule{\nonlinterm \Gamma {A \equiv B} L}{\linctxwffjdg
      \Gamma {A \equiv B}}

    \inferrule{\linctxwffjdg \Gamma A}{\linctxwffjdg \Gamma {\unquoteTy
        {\quoteTy A} \equiv A}}
  \end{mathpar}
  \caption{Linear Type Formers, Type Equivalence}
  \label{fig:linear-formation}
\end{figure}

We give the rules for linear type
formation in \cref{fig:linear-formation} and the definition of linear terms in
\cref{fig:linear-terms}. The equational theory for these
types is straightforward $\beta\eta$ equivalence and is included in
\ifarxiv{\cref{fig:jdgeq} of \cref{sec:denotational}}\else{the extended version of the paper}\fi.

First, the
linear variable rule says that a linear variable
can be used if it is the \emph{only} variable in the context. Next, we cover the
``multiplicative'' connectives of non-commutative
linear logic. The linear unit ($\I$) and tensor product ($\otimes$)
are standard for a non-commutative linear logic: when we construct a
linear unit we cannot use any variables and when we construct a tensor
product, the two sides must use disjoint variables, and the variables
the left side of the product uses must be to the left in the context
of the variables used by the right side of the tensor product. The
elimination rules for unit and tensor are given by pattern
matching. The pattern matching rules split the linear context into
three pieces $\Delta_1,\Delta_2,\Delta_3$: the middle $\Delta_2$ is
used by the scrutinee of the pattern match, and in the continuation
this context is replaced by the variables brought into scope by the
pattern match. This ensures that pattern matches maintain the proper
ordering of resource usage.

Because we are non-commutative, there are two function types: $A \lto
B$ and $B \tol A$, which have similar $\lambda$ introduction forms and
application elimination forms. The difference between these is that
the introduction rule for $A \lto B$ adds a variable to the right side
of the context, whereas the introduction rule for $B \tol A$ adds a
variable to the left side of the context. In our experience, because
by convention parsing algorithms parse from left-to-right, we rarely
need to use the $B \tol A$ connective. As we have already seen, the
$\lto$ connective is frequently used in conjunction with the
$\uparrow$ connective so that we can abstract non-linearly over linear
functions.

Next, we cover the ``additive'' connectives. First, we use the
non-linear types to define \emph{indexed} versions of the additive
disjunction $\oplus$ and additive conjunction $\&$ of linear logic,
which can be thought of as linear versions of the $\Sigma$ and $\Pi$
connectives of ordinary dependent type theory, respectively. The
indexed $\&$ is defined by a $\lambda$ that brings a \emph{non-linear}
variable into scope and eliminated using projection where the index
specified is given by a non-linear term. The rules for indexed
$\oplus$ are analogous to a ``\emph{weak}'' $\Sigma$ type: it has an
injection introduction rule $\sigma$, but its elimination rule is
given by \emph{pattern matching} rather than first and second
projections. We can define the more typical nullary and binary
versions of these connectives by using indexing over the empty and
boolean type respectively. We will freely use $0$ to refer to this
empty disjunction and $\top$ to refer to the empty conjunction, and
use infix $\oplus/\&$ for binary disjunction/conjunction.

Lastly, we include a type $\equalizer {a}{f}{g}$ that we call the
\emph{equalizer} of linear functions $f$ and $g$. We think of this
type as the ``subtype'' of elements of $A$ that satisfy the equation
$f\,a\equiv g\,a$. Note that it is important here that $f, g$
themselves are non-linearly used functions, as linear values cannot be
used in a type.  Equalizer types are not needed for non-linear types
since they can be constructed using the equality type as $\sum_{x:X}
f\,x=_Y g\,x$, but this construction cannot be used for linear types
because it uses a \emph{dependent} version of the equality type, which
we cannot define as a linear type. While the equalizer type is not
used directly in defining any of our parsers or formal grammars, it is
used for several proofs, allowing for inductive arguments about our
indexed inductive types.

In addition to these type-theoretic principles, we need two additional
axioms that do not generally hold in systems based on linear
logic. First, we need that additive conjunction \emph{distributes}
over additive disjunction --- e.g., in the finitary case that $0 \& A
\cong 0$ and $(A + B) \& C \cong (A \& C) + (B \& C)$. More generally, we assume
\cref{ax:dist}.

\begin{axiom}[Distributivity]
\label{ax:dist}
For any $A : (x : X) \to Y(x) \to L$, the definable function distributing
conjunction over disjunction
\(\LinSigTyLimit{f}{\PiTy{x}{X}{Y(x)}}{\LinPiTyLimit{x}{X}{A\,x\,(f\,x)}} \lto
\LinPiTyLimit{x}{X}{\LinSigTyLimit{y}{Y(x)}{A\,x\,y}}
\)
has an inverse.
\end{axiom}

The following corollary is a well known consequence of distributivity \cite{Cockett_1993}, which we use in \cref{lem:unambig-to-disjoint} to prove that unambiguous binary sums have unambiguous summands.
\begin{corollary}
  \label{cor:binary-mono}
  Distributivity implies that the constructors $\inl : A \to A \oplus B$,
  $\inr : B \to A \oplus B$ of a binary sum are injective --- i.e. if
  $\inl\,a \equiv \inl\,a'$, then $a \equiv a'$.
\end{corollary}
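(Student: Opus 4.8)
The plan is to internalize the classical fact that coproduct injections are monomorphisms in a distributive category, using only a binary instance of \cref{ax:dist}. First I would instantiate the axiom (taking $X$ to be a two-element type, with $Y$ a two-element type on one branch and the unit type on the other) to obtain the binary distributivity isomorphism $\Phi : (A \amp A) \oplus (B \amp A) \lto (A \oplus B) \amp A$, with inverse $\Psi := \Phi^{-1}$. By the defining equations of the canonical distributing map we have $\Phi(\inl\langle x, y\rangle) \equiv \langle \inl\,x, y\rangle$ and $\Phi(\inr\langle b, y\rangle) \equiv \langle \inr\,b, y\rangle$, so postcomposing with $\Psi$ gives $\Psi(\langle\inl\,x, y\rangle) \equiv \inl\langle x, y\rangle$ for all linear terms $x, y : A$ over a common context.

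Next I would package the ``extraction'' into a single linear function $F : (A \oplus B) \amp A \lto A$, defined as $\Psi$ followed by the copairing of $\pi_1 : A \amp A \lto A$ on the left summand and $\pi_2 : B \amp A \lto A$ on the right summand. The computation that drives everything is then immediate: $F(\langle \inl\,x, y\rangle) \equiv \pi_1\langle x, y\rangle \equiv x$ for all $x, y : A$. In words, when $F$ is handed an additive pair whose first component is $\inl\,x$, it returns $x$ and discards the ``spare'' second component entirely.

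To conclude, suppose $\Gamma; \Delta \vdash \inl\,a \equiv \inl\,a' : A \oplus B$, where $\Gamma;\Delta \vdash a : A$ and $\Gamma;\Delta \vdash a' : A$. Since additive conjunction allows both components of a pair to reuse $\Delta$, the terms $\langle \inl\,a, a\rangle$ and $\langle \inl\,a', a\rangle$ are well-typed of type $(A \oplus B) \amp A$, and by congruence they are equal: their first components are equal by hypothesis and their second components are literally $a$. Applying $F$ and invoking the key computation twice, at $(x, y) = (a, a)$ and at $(x, y) = (a', a)$, yields $a \equiv F(\langle\inl\,a, a\rangle) \equiv F(\langle\inl\,a', a\rangle) \equiv a'$. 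The case of $\inr$ is symmetric, using instead the isomorphism $(A \amp B) \oplus (B \amp B) \cong (A \oplus B) \amp B$ and extracting along the first projection of the $B \amp B$ summand.

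The one genuinely non-obvious ingredient is the choice of $F$: there is no linear map $A \oplus B \lto A$, so $a$ cannot be recovered from $\inl\,a$ directly; the trick is to smuggle a redundant copy of $a$ through the distributivity isomorphism alongside $\inl\,a$ and then rely on the copaired projections to delete that copy on precisely the $\inl$-branch. Everything else — checking that $F$ is a well-formed linear term, that the cited instances of \cref{ax:dist} really are these binary distributivity isomorphisms, and transcribing the argument into the equational theory of \cref{sec:tt} — is routine.
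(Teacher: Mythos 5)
Your proposal is correct. The paper itself gives no proof of \cref{cor:binary-mono} --- it simply cites the result as a well-known consequence of distributivity \cite{Cockett_1993} --- and your argument is precisely the classical one from that setting: smuggle a redundant copy of $a$ through the inverse of the binary distributivity map $(A \amp A) \oplus (B \amp A) \lto (A \oplus B) \amp A$ and project it back out on the $\inl$-branch. The only nitpick is terminological (you ``precompose,'' not ``postcompose,'' with $\Psi$ when deriving $\Psi\langle \inl\,x, y\rangle \equiv \inl\langle x,y\rangle$ from $\Psi \circ \Phi \equiv \mathrm{id}$); the instantiation of \cref{ax:dist} and the use of the shared context in the $\amp$-pairing are both exactly right.
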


Our primary use of distributivity is to define an equivalence that
expresses a linear type $A$ as a sum over which character it starts with, if any,
\(
A \cong \left(A \& I \right) \oplus \LinSigTy{c}{\Sigma_0}{\left( A \& \left( \literal{c} \otimes \top \right) \right)}
\). We use this equivalence when building a parser for the traces of the lookahead
automaton in \cref{fig:binop-inductive}.

Second, we need that the different constructors of
$\bigoplus$ are disjoint. That is, we want to enforce that $\sigma~x~a$ and
$\sigma~x'~a'$ of type $\LinSigTy{x}{X}{A\,x}$ are not equal whenever $x \neq x'$.
However, because these are linear terms we cannot state their disequality
directly. Instead, we encode the disequality via a function out of an equalizer,

\begin{axiom}[$\sigma$-Disjointness]
\label{ax:disjointness}
For any $A : X \to L$ and $x \neq x' : X$ there is a function,
\[
  \uparrow(\equalizer{b}{\left( \sigma\,x \circ \pi_1 \right)}{\left(  \sigma\,x'\circ \pi_2 \right)}
  \lto 0)
\]
  where $b: A(x) \& A(x')$. That is, the grammar of pairs of an
$a:A(x)$ and an $a': A(x')$ such that $\sigma\,x\,a = \sigma\,x'\,a'$
is empty.
\end{axiom}

We use $\sigma$-disjointness in \cref{lem:unambig-to-disjoint} to prove that
unambiguous sums have disjoint summands.

\begin{figure}
  \footnotesize
  \begin{mathpar}
    \boxed{
      \inferrule
      {\linctxwff \Gamma \Delta \and \linctxwffjdg \Gamma \A}
      {\linterm \Gamma \Delta {e} {A}}}

    \inferrule{~}{\Gamma ; a : A \vdash a : A}
    \and
    \inferrule{\nonlinterm \Gamma {M} {\ltonl A}}{\linterm {\Gamma} {\cdot} {M} {A}}
    \and
    \inferrule{\Gamma ; \Delta \vdash e : B \\ \linctxwffjdg \Gamma {A \equiv B}}{\Gamma ; \Delta \vdash e : A}
    \\
    \inferrule{~}{\Gamma ; \cdot \vdash () : I}
    \and
    \inferrule{\Gamma ; \Delta_2 \vdash e : I \\ \Gamma ; \Delta_1,\Delta_3 \vdash e' : C}{\Gamma ; \Delta_1,\Delta_2,\Delta_3 \vdash \letin {()} e {e'} : C}
    \\
    \inferrule{\Gamma ; \Delta \vdash e : A \\ \Gamma ; \Delta' \vdash e' : B}{\Gamma ; \Delta, \Delta' \vdash (e , e') : A \otimes B}
    \and
    \inferrule{\Gamma ; \Delta_2 \vdash e : A \otimes B \\ \Gamma ; \Delta_1, a : A, b : B, \Delta_2 \vdash e' : C}{\Gamma ;  \Delta_1, \Delta_2, \Delta_3 \vdash \letin {(a , b)} e {e'} : C}
    \\
    \inferrule{\Gamma ; \Delta , a : A \vdash e : B}{\Gamma ; \Delta \vdash \lamblto a e : A\lto B}
    \and
    \inferrule{\Gamma ; \Delta \vdash e : A \lto B \\ \Gamma ; \Delta' \vdash e' : A} {\Gamma ; \Delta, \Delta' \vdash \applto {e} {e'} : B}
    \\
    \inferrule{\Gamma ; a : A , \Delta \vdash e : B}{\Gamma ; \Delta \vdash \lambtol a e : B\tol A}
    \and
    \inferrule{\Gamma ; \Delta \vdash e : A \\ \Gamma ; \Delta' \vdash e'
      : B \tol A}{\Gamma ; \Delta, \Delta' \vdash \apptol {e'} {e} : B}
    \\
    \inferrule{\Gamma, x : X ; \Delta  \vdash e : A}
              {\Gamma ; \Delta \vdash \dlamb x e : \LinPiTy x X A}
    \and
    \inferrule{\Gamma ; \Delta \vdash e : \LinPiTy x X A \\ \Gamma \vdash M : X}{\Gamma ; \Delta \vdash e\,.\pi\,M : \subst A {M} x}
    \\
    \inferrule{\Gamma \vdash M : X \quad \Gamma ; \Delta \vdash e : \subst A M x}{\Gamma ; \Delta \vdash \sigma\,M\,e : \bigoplus\limits_{x:X} A}
    \and
    \inferrule{\Gamma ; \Delta_2 \vdash e : \bigoplus\limits_{x:X} A \quad \Gamma, x : X ; \Delta_1, a : A, \Delta_3 \vdash e' : C}{\Gamma; \Delta_1, \Delta_2, \Delta_3 \vdash \letin {\sigma\,x\,a} e {e'}: C}
    \\
    \inferrule
    { \Gamma ; \Delta \vdash e : A \\
      \Gamma ; \Delta \vdash \applto {f}{e} \equiv \applto {g}{e}}
    {\Gamma ; \Delta \vdash \equalizerin{e} : \equalizer {a}{f}{g}}
    \and
    \inferrule
    {\Gamma ; \Delta \vdash e : \equalizer{a}{f}{g}}
    {\Gamma ; \Delta \vdash \equalizerpi {e} : A}
  \end{mathpar}
  \caption{Linear terms}
  \label{fig:linear-terms}
\end{figure}

\subsection{Indexed Inductive Linear Types}

Next, we introduce the most complex and important linear type
constructors of our development, \emph{indexed inductive linear
types}. We encode these by adding a mechanism for constructing initial
algebras of strictly positive functorial type expressions, following
prior work on inductive types
\cite{nakov_quantitative_2022,altenkirch_indexed_2015}. The syntax is
given in \Cref{fig:iilt}. First, we add a non-linear type $\SPF\,X$ of
\emph{strictly positive functorial} linear type expressions indexed by
a non-linear type $X$. We think of the elements of this type as
syntactic descriptions of linear types that are parameterized by
$X$-many variables standing for linear types that are only used in
strictly positive positions. Accordingly, the $\SPF\,X$ type supports
an operation $\el$ that interprets it as such a type constructor, as
well as an operator $\map$ that defines a functorial action on parse
transformers. The $\SPF\,X$ type supports constructors for a reference
$\Var~x$ to one of the linear type variables, a constant expression
that does not mention any type variables $K$, as well as tensor
products and additive conjunction and disjunction of type expressions.
Further, we add equations in \ifarxiv{\cref{fig:spf-act} of \cref{sec:syntax}
  }\else{the extended version\fi that say that the
$\el$/$\map$ operations correspond to these descriptions of the
constructors.

\begin{figure}
  \begin{footnotesize}
    \begin{mathpar}
      \inferrule{\Gamma \vdash X\isTy}{\Gamma \vdash \SPF\,X \isTy}\and
      \inferrule{\Gamma \vdash X\isSmall}{\Gamma \vdash \SPF\,X \isSmall}\and
    \el : \prod_{X:U}\SPF\,X \to (X \to L) \to L\and
    \map : \prod_{X:U}\prod_{F : \SPF\,X}\prod_{A,B:X\to L}{\left(\prod_{x:X}\uparrow(\unquoteTy{A\,x}\lto \unquoteTy{B\,x})\right)} \to {\uparrow(\unquoteTy{\el(F)(A)} \lto \unquoteTy{\el(F)(B)})}\and
    \mathsf{Var} : \prod_{X:U} X \to \SPF\,X\and
    \mathsf{K} : \prod_{X:U} L \to \SPF\,X\and
    \mathsf{\bigoplus} : \prod_{X:U}\prod_{Y:U}(Y \to \SPF\,X) \to \SPF\,X\and
    \mathsf{\bigamp} : \prod_{X:U}\prod_{Y:U}(Y \to \SPF\,X) \to \SPF\,X\and
    \mathsf{\otimes} : \prod_{X:U}\SPF\,X \to \SPF\,X \to \SPF\,X\and
    \roll : \prod_{X:U}\prod_{F:X \to \SPF\,X}\prod_{x:X}\ltonl{(\el(F\,x)(\mu\,F))}\and
    \fold : \prod_{X:U}\prod_{F:X\to\SPF\,X}\prod_{A:X \to L}
    \left(\prod_{x:X}\ltonl{(\unquoteTy{\el(F\,x)(\unquoteTy{A})} \lto \unquoteTy{A\,x})}\right)
    \to \prod_{x:X}\ltonl{(\mu F\,x \lto A\,x)}\and

    \inferrule*[right=Ind$\beta$]
    {\Gamma \vdash f : \prod_{x:X}\ltonl{(\el(F\,x)(A) \lto A\,x)} \and \Gamma; \Delta \vdash e : \el(F\,x)(\mu\,F)}
    {\Gamma;\Delta \vdash \fold\,F\,f\,x\,(\roll\,e) \equiv f\,x\,(\map(F\,x)\,(\fold\,F\,f)) : A\,x}

    \inferrule*[right=Ind$\eta$]
    {\Gamma \vdash f : \prod_{x:X}\ltonl{(\el(F\,x)(A) \lto A\,x)}
     \and \Gamma \vdash e : \prod_{x:X}\ltonl{(\mu F\,x \lto A\,x)}\\\\
      \Gamma,x:X;a:\el(F\,x)(\mu F) \vdash e\,x\,(\roll\,a) \equiv f\,x\,(\map(F\,x)\,e) : A\,x }
    {\fold\,F\,f\equiv e' : \prod_{x:X}\ltonl{(\mu F\,x \lto A\,x)}}
  \end{mathpar}
  \end{footnotesize}
  \caption{Strictly positive functors and indexed inductive linear types}
  \label{fig:iilt}
\end{figure}

Next, given a family of $X$-many strictly positive linear type
expressions $F : X \to \SPF\,X$, we define a family $\mu F : X \to L$
of $X$-many mutually recursive inductive types. The introduction rule
for this is $\roll$, which constructs an element of $\mu F\,x$ from
the one-level of the $x$th type expression. The elimination principle
is defined by a mutual $\fold$ operation: given a family of output
types $A$ indexed by $X$, we can define a family of functions from
$\mu F\,x \multimap A\,x$ if you specify how to interpret all of the
constructors as operations on $A$ values. We add $\beta\eta$ equations
that specify that this makes the family $\mu F$ into an \emph{initial
algebra} for the functor $\el(F)$. That is, the $\beta$ rule says that
a $\mathsf{fold}$ applied to a $\mathsf{roll}$ is equivalent to
$\mathsf{map}$ping the $\mathsf{fold}$ over all the sub-expressions,
which means that $\mathsf{fold}$ interprets all of the constructors
homomorphically using the provided interpretation $\mathsf{f}$. Then
the $\eta$ rule says that $\mathsf{fold}$ is the \emph{unique} such
homomorphism, i.e. anything that satisfies the recurrence equation of
the $\mathsf{fold}$ is equal to it.

This definition as an initial algebra is well-understood semantically
but the $\eta$ principle in particular is somewhat cumbersome to use
directly in proofs. In dependent type theory, we would have a
dependent \emph{elimination} principle, which can be used to
implement functions by recursion as well as proofs by
induction. Unfortunately, since linear types do not support dependency on
linear types, we cannot directly adapt this approach. However, if we
are trying to prove that two morphisms out of a mutually recursive
type are equal, we can use the \emph{equalizer} type to prove their
equality by induction. That is, if our goal is to prove two functions
$f,g : \uparrow(\mu F\,x \lto A\,x)$ equal, it suffices to
implement a function $\mathsf{ind} : \uparrow(\mu F\,x \lto \equalizer
{a} f g)$ such that $\textrm{ind}(a) \equiv a$. Then an
inductive-style proof can be implemented by constructing
$\mathsf{ind}$ using a $\fold$. This can all be justified using only
the $\beta\eta$ principles for equalizers and inductive types, and
this is how our most complex inductive proofs are implemented in the Agda
formalization.
\pedro{I don't like the inductive proof explanation above, but I
  don't know how I would phrase it}

\subsection{Grammar-specific Additions}

So far, our calculus is a somewhat generic combination of dependent
types with non-commutative linear types. In order to carry out formal
grammar theory and define parsers, we need only add a few
grammar-specific constructions.

\theoryabbv is parameterized by a fixed, finite alphabet $\Sigma$ from which we
build our strings.  For each character $c \in \Sigma$,
we add a corresponding linear type $\literal{c}$. We can then define a
non-linear type $\CharGram$ as the disjunction of all of these characters,
and define a type $\StringGram$ as the Kleene star of $\CharGram$, i.e. as an
inductive linear type.
Then we add a function $\mathsf{read} :
\ltonl{(\top\lto \StringGram)}$ that intuitively ``reads'' the input
string from the input and makes it available. It is important that the
input type of $\mathsf{read}$ is $\top$, which can control any amount
of resources, and not $\I$ which controls no resources.

\begin{axiom}
  \label{ax:string-top}
   $\lambda s. \mathsf{read}(!(s)) \equiv \lambda
s. s$ where $!$ is the unique function $\ltonl(\StringGram\lto \top)$.
\end{axiom}

If we have a string, but then throw it away and read
it from the input, then we, in fact, recover the original string.
This ensures that the elements of the $\StringGram$
type always stand for the actual input string in our reasoning. In the
next section, we will show how these basic principles are enough to
provide a basis for verified parsing and formal grammar theory.

\section{Formal Grammar Theory in \theoryname}
\label{sec:applications}
This section explores the applications of \theoryabbv to formal
grammar theory. We demonstrate that several classical notions and constructions
integral to the theory of formal languages are faithfully represented
in \theoryname.

In the theory of formal grammars, there are two different notions of
equivalence: up to weak generative capacity, meaning just which strings are
accepted by the grammar; and up to \emph{strong} generative capacity, when
the parse trees of the two grammars are isomorphic
\cite{chom1963}. Using linear types as grammars, we can define both of
these notions of equivalence in \theoryabbv.

\begin{definition}
  \label{def:weakequiv}
  Grammars $\A$ and $\B$ are \emph{weakly equivalent} if there exist
  parse transformers $\f : \ltonl{(\A \lto \B)}$ and $\g : \ltonl {(\B
    \lto \A)}$. $\A$ is a \emph{retract} of $\B$ if they are
  weakly equivalent and $\lambda a. g(f(a)) \equiv \lambda a.a$. They
  are \emph{strongly equivalent} if further the other composition is
  the identity, i.e., $\lambda b. f(g(b)) \equiv \lambda b.b$.
\end{definition}

A formal grammar $\A$ is ambiguous if there are multiple parse trees
for the \emph{same} string $\w$.  For example, $\a \oplus \a$ is
ambiguous because there are two parses of $\stringquote{a}$,
constructed using $\inl$ and $\inr$. On the other hand, a formal
grammar is unambiguous when there is at most one parse tree for any
input string. We can capture this notion as a type in \theoryabbv as follows:
\begin{definition}
  \label{def:unambig}
  A grammar $\A$ is \emph{unambiguous} if for every linear type $\B$,
  $\f : \ltonl{(\B \lto \A)}$, and $\g : \ltonl{(\B \lto \A)}$
  then $\f \equiv \g$.
\end{definition}
\cref{def:unambig} can be read more intuitively as stating that $\A$
is unambiguous if there is at most one way to transform parses of any
other grammar $\B$ into parses of $\A$. This notion of an unambiguous
type is the analog for linear types of the definition of a (homotopy)
\emph{proposition} in the terminology of homotopy type
theory \cite{hottbook}. The most basic unambiguous types are $\top$ and
$0$, and in a system of classical logic all unambiguous types would
have to be equivalent to one of these, but with our axioms we can show
also that $\I$ and literals $\literal c$ are unambiguous. To see this, first, we
establish two useful properties of unambiguity.
\begin{lemma}[\Agda]
  \label{lem:retract}
  If $\B$ is unambiguous and $\A$ is a retract of $\B$ then $\A$ is unambiguous.
\end{lemma}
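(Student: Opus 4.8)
The plan is to run the standard argument that a retract of a ``proposition'' is again a proposition, transported into the linear setting via \cref{def:unambig}. Let $\f : \ltonl{(\A \lto \B)}$ and $\g : \ltonl{(\B \lto \A)}$ be the parse transformers witnessing that $\A$ is a retract of $\B$, so that $\lamblto a {\g\,(\f\,a)} \equiv \lamblto a a$. To show $\A$ is unambiguous, fix an arbitrary linear type $C$ together with $p, q : \ltonl{(C \lto \A)}$; the goal is $p \equiv q$.

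First I would push $p$ and $q$ forward along $\f$, forming the composites $\lamblto c {\f\,(p\,c)}$ and $\lamblto c {\f\,(q\,c)}$, both of type $\ltonl{(C \lto \B)}$. This is well-typed because $\f$, $p$, and $q$ are all available non-linearly under $\uparrow$, so each may be used freely. Since $\B$ is unambiguous, \cref{def:unambig} instantiated at the grammar $C$ gives immediately $\lamblto c {\f\,(p\,c)} \equiv \lamblto c {\f\,(q\,c)}$.

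Next I would post-compose both sides with $\g$ and use that definitional equality is a congruence (so it may be applied underneath $\lambda^{\lto}$ and composition), yielding $\lamblto c {\g\,(\f\,(p\,c))} \equiv \lamblto c {\g\,(\f\,(q\,c))}$. Then I rewrite each side using the retract identity $\lamblto a {\g\,(\f\,a)} \equiv \lamblto a a$: applying both sides of this function equality to $p\,c$ (resp.\ $q\,c$) gives $\g\,(\f\,(p\,c)) \equiv p\,c$ (resp.\ $\g\,(\f\,(q\,c)) \equiv q\,c$), and an $\eta$ step then reduces the left side of the displayed equation to $p$ and the right side to $q$. Chaining these equalities by transitivity gives $p \equiv q$, which is exactly unambiguity of $\A$.

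There is no substantive obstacle; the argument is a short equational diagram chase. The only points requiring a moment's care are purely structural: that composition of parse transformers is definable (via $\lambda^{\lto}$ and linear application, as in \cref{fig:linear-terms}), that $\equiv$ is a congruence so the retract equation can be used inside a larger term, and that placing everything under $\uparrow$ removes any concern about linear variables being duplicated when we compose. All of this follows from the rules of \cref{fig:linear-terms} and the $\beta\eta$ equational theory.
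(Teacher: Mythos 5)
Your proof is correct and is the standard argument — pre-compose with $\f$, apply unambiguity of $\B$, post-compose with $\g$, and cancel using the retract equation — which is exactly what the paper's (Agda-mechanized, not textually spelled out) proof does. The structural points you flag (congruence of $\equiv$ under binders, $\eta$, and non-linear reuse of $\f,\g,p,q$ under $\uparrow$) are all available in the equational theory, so there is no gap.
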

\begin{lemma}[\Agda]
  \label{lem:unambig-sum}
  As a consequence of \cref{cor:binary-mono}, if a binary disjunction $A \oplus B$ is unambiguous then
  $A$ and $B$ are each unambiguous.
\end{lemma}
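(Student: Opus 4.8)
The plan is to reduce unambiguity of the two summands to unambiguity of the sum by post-composing with the injections, and then cancel the injection using \cref{cor:binary-mono}. Note that we cannot route this through \cref{lem:retract}, since $\A$ is not in general a retract of $\A\oplus\B$ (e.g.\ when $\A = 0$, $\B = \top$); we genuinely need injectivity of the injection.

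Concretely, to show that $\A$ is unambiguous in the sense of \cref{def:unambig}, I fix an arbitrary linear type $C$ and two parse transformers $\f, \g : \ltonl{(C \lto \A)}$, and aim to prove $\f \equiv \g$. Post-composing each with $\inl : \A \to \A \oplus \B$ yields two parse transformers $\lamblto{c}{\inl(\f\,c)}$ and $\lamblto{c}{\inl(\g\,c)}$ of type $\ltonl{(C \lto \A \oplus \B)}$. Since $\A \oplus \B$ is assumed unambiguous, \cref{def:unambig} gives $\lamblto{c}{\inl(\f\,c)} \equiv \lamblto{c}{\inl(\g\,c)}$. I then move from this equation of functions to an equation of bodies: by the $\eta$ rule for $\lto$ it suffices to prove $\f \equiv \g$ after applying both sides to a fresh linear variable $c : C$, and applying the displayed equation to $c$ together with the $\beta$ rule gives $\inl(\f\,c) \equiv \inl(\g\,c)$ in the context extended by $c : C$. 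Now \cref{cor:binary-mono} --- injectivity of $\inl$ --- yields $\f\,c \equiv \g\,c$, and hence $\f \equiv \g$ by $\eta$. The argument for $\B$ is identical, using $\inr : \B \to \A \oplus \B$ and its injectivity from the same corollary.

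There is no serious obstacle here; the only points requiring a little care are (i) that \cref{cor:binary-mono}, although phrased as injectivity on closed-ish elements, may be instantiated with the linear term $\f\,c$ in a context carrying the free linear variable $c$, which is legitimate since the proof of the corollary is uniform in the ambient non-linear and linear context, and (ii) that definitional equality of linear functions may be checked pointwise on a fresh variable, which is exactly the content of the $\beta\eta$ theory for $\lto$. It is also worth recalling that $\A\oplus\B$ and its injections are the instance of the indexed $\bigoplus$ over $\Bool$, so ``$\inl$'' and ``$\inr$'' denote $\sigma\,\true\,(-)$ and $\sigma\,\false\,(-)$, and \cref{cor:binary-mono} is precisely what makes these injective.
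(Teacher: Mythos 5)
Your proof is correct and is exactly the route the paper intends: the lemma is stated as ``a consequence of \cref{cor:binary-mono}'', and the paper's (elided) argument is precisely your post-compose-with-the-injection-then-cancel step, using unambiguity of $\A \oplus \B$ to equate $\inl \circ \f$ and $\inl \circ \g$ and injectivity of $\inl$ (read as a mono/cancellation property, valid for open terms as you note) to conclude $\f \equiv \g$. Your side remarks --- that the retract lemma is not applicable here, and that $\inl,\inr$ are the $\sigma\,\true,\sigma\,\false$ instances of the indexed $\bigoplus$ --- are also accurate.
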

From \cref{lem:retract}, we can prove that $\StringGram$ is unambiguous,
since it is a retract of $\top$. In fact, observe that if $\A$ is a
retract of $\B$ and $\B$ is unambiguous, then in fact $\A$ and $\B$
are strongly equivalent, as the equation $\lambda b. f(g(b)) \equiv
\lambda b. b$ follows because $\B$ is unambiguous. Therefore
$\StringGram$ is also strongly equivalent to $\top$. Next, since $\StringGram$ is
defined as a Kleene star, we can easily show that $\StringGram \cong \I
\oplus \CharGram \oplus (\CharGram \otimes \CharGram \otimes \StringGram)$. Then by
\cref{lem:unambig-sum}, we have that $\I$ and
$\CharGram$ are unambiguous as well. Using the finiteness of the alphabet
$\Sigma$ and the unambiguity of $\CharGram$,
we have that each literal $\literal c$ is likewise unambiguous.

We now turn to our main task, which is using our linear type system to
implement verified parsers. Given a grammar defined as a linear type
$A$, a first attempt at defining a parser would be to implement a
function $\uparrow{(\StringGram \lto A)}$. But since our linear functions
must be total, this means that we can construct an $A$ parse for
\emph{every} input string, which is impossible for most grammars of
interest. Instead we might try to write a partial function as a
$\ltonl{(\StringGram \lto (A \oplus \top))}$ using the ``option''
monad. This allows for the possibility that the input string does not
parse, but is far too weak as a specification: we can trivially
implement a parser for any type by always returning $\inr$. The
correct notion of a parser should be one that allows for failure, but
only in the case that a parse cannot be constructed.

\begin{definition}
  \label{def:disjoint}
  Linear types $A$ and $B$ are \emph{disjoint} if there is a function
  $\ltonl {\left( A \& B \lto 0 \right)}$.
\end{definition}

\begin{definition}
  \label{def:parser}
  A parser for a linear type $A$ is the choice a type $A_{\neg}$ disjoint from
  $A$ and function $\uparrow{(\StringGram \lto
    A \oplus A_{\neg})}$.
\end{definition}
Here we replace $\top$ in our partial parser type with a type
$A_{\neg}$ that we can think of as a negation of $A$. The function
$\uparrow{(A \& A_{\neg} \lto 0)}$ ensures that it is impossible for
$A$ and $A_{\neg}$ to parse the same input string. This means that in
defining a parser, we will need to define a kind of negative grammar
for strings that do not parse.

Fortunately, we will see that
deterministic automata naturally support such a notion with no
additional effort: the negative grammar is simply the grammar for
traces that end in a rejecting state. This follows from the following
principle, a consequence of \cref{ax:disjointness}.

\begin{lemma}[\Agda]
  \label{lem:unambig-to-disjoint}
  If $\LinSigTy{x}{X}{A\,x}$ is unambiguous, then for $x \neq x'$, $A\,x$ and $A\,x'$ are
  disjoint. In particular, if the binary product $A \oplus A_{\neg}$ is unambiguous, then
  $A$ and $A_{\neg}$ are disjoint.
\end{lemma}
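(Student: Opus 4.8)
The plan is to reduce the lemma to \cref{ax:disjointness} via the equalizer type, using unambiguity to supply the equation that the equalizer needs. Fix $x \neq x' : X$ and abbreviate $S := \LinSigTy{x}{X}{A\,x}$; the goal is to produce a parse transformer $\ltonl{(A\,x \amp A\,x' \lto 0)}$, where the binary $\amp$ is the instance of the indexed additive conjunction over $\Bool$ with $A\,x$ in the $\true$ component and $A\,x'$ in the $\false$ component.

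First I would exhibit two parse transformers of type $\ltonl{(A\,x \amp A\,x' \lto S)}$: namely $\sigma\,x \circ \pi_1$ and $\sigma\,x' \circ \pi_2$, where $\pi_1, \pi_2$ are the two $\amp$-projections ($A\,x\amp A\,x' \lto A\,x$ and $A\,x \amp A\,x' \lto A\,x'$) and $\sigma\,x, \sigma\,x'$ are the $\bigoplus$-injections at $x$ and $x'$. Since $S$ is unambiguous by hypothesis, instantiating \cref{def:unambig} with $\B := A\,x \amp A\,x'$ forces these two transformers to be definitionally equal, $\sigma\,x \circ \pi_1 \equiv \sigma\,x' \circ \pi_2$.

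Next, applying this equation to a linear variable $b : A\,x \amp A\,x'$ gives $(\sigma\,x \circ \pi_1)\,b \equiv (\sigma\,x' \circ \pi_2)\,b$, which is exactly the side condition of the introduction rule for the equalizer $\equalizer{b}{(\sigma\,x \circ \pi_1)}{(\sigma\,x' \circ \pi_2)}$. Hence $\lamblto{b}{\equalizerin{b}}$ has type $\ltonl{(A\,x \amp A\,x' \lto \equalizer{b}{(\sigma\,x \circ \pi_1)}{(\sigma\,x' \circ \pi_2)})}$. Now \cref{ax:disjointness}, instantiated at this family $A$ and this pair $x \neq x'$, supplies a function $\ltonl{(\equalizer{b}{(\sigma\,x \circ \pi_1)}{(\sigma\,x' \circ \pi_2)} \lto 0)}$; composing the two yields $\ltonl{(A\,x \amp A\,x' \lto 0)}$, which is exactly disjointness of $A\,x$ and $A\,x'$ in the sense of \cref{def:disjoint}. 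For the ``in particular'' statement, I would encode $A \oplus A_{\neg}$ as $\LinSigTy{b}{\Bool}{B\,b}$ with $B\,\true := A$ and $B\,\false := A_{\neg}$, use the standard fact $\true \neq \false$, and apply the general statement with $X := \Bool$, $x := \true$, $x' := \false$.

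I expect the only real friction to be bookkeeping rather than mathematics: checking that the composite $\sigma\,x \circ \pi_1$ assembled here is syntactically the one named in \cref{ax:disjointness}, unfolding the binary $\oplus$ and $\amp$ to their $\Bool$-indexed definitions, and verifying that the definitional equality of the two parse transformers genuinely propagates through application to the variable $b$ so that the equalizer's side condition discharges. There is no genuinely hard step: essentially all the content is already packed into \cref{ax:disjointness}, and this lemma just repackages it through the definition of unambiguity.
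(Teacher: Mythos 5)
Your proposal is correct and follows essentially the same route as the paper's own proof: use unambiguity of $\LinSigTy{x}{X}{A\,x}$ to conclude $\sigma\,x \circ \pi_1 \equiv \sigma\,x' \circ \pi_2$, obtain the induced map from $A\,x \amp A\,x'$ into the equalizer, and compose with the function supplied by \cref{ax:disjointness}. The only difference is that you spell out the equalizer-introduction side condition and the $\Bool$-indexed encoding of the binary case more explicitly, which the paper leaves implicit.
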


\begin{proof}
If $\LinSigTy{x}{X}{A\,x}$ is unambiguous, then all functions into it from
$A\,x \& A\,x'$ are equal. In particular,
$\sigma\,x\,\circ\,\pi_{1} \equiv \sigma\,x'\,\circ\,\pi_{2}$ so there is a
function \(
\ltonl{\left(A\,x \& A\,x' \lto
  \equalizer{b}{\left( \sigma\,x \circ \pi_1 \right)}{\left(  \sigma\,x'\circ \pi_2 \right)}
\right)}.
\)
We then compose with the function in \cref{ax:disjointness} to prove that $A\,x$ and $A\,x'$ are disjoint.
\end{proof}

Writing a parser as a linear term intrinsically verifies the \emph{soundness} of
the parser for free from the typing: any $\inl$ parse that we return \emph{must}
correspond to a parse tree of the input string. Further, if we verify the
disjointness property of \cref{def:parser} we then also get the
\emph{completeness} of the parser as well, that when the parser rejects the
input that there are no valid parses.

Our main method for constructing verified parsers is to show that a
grammar $A$ is weakly equivalent to a grammar for a deterministic
automaton. Parsers for deterministic automata are simple to implement
by stepping through the states of the automaton, with the rejecting
traces serving as the negative grammar. This is sufficient due to the
following:
\begin{lemma}[\Agda]
  \label{lem:wk-eqv-parse}
  If $\A$ is weakly equivalent to $\B$ then any parser for $\A$ extends to a
  parser for $\B$.
\end{lemma}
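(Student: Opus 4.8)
The plan is to leave the negative grammar untouched --- taking $\B_{\neg} := \A_{\neg}$ --- and to transport the two components of a parser for $\A$ across the parse transformers $\f : \ltonl{(\A \lto \B)}$ and $\g : \ltonl{(\B \lto \A)}$ witnessing the weak equivalence (\cref{def:weakequiv}). A parser for $\A$ supplies, by \cref{def:parser,def:disjoint}, a type $\A_{\neg}$ together with a disjointness witness $d : \ltonl{(\A \amp \A_{\neg} \lto 0)}$ and a parsing function $p : \ltonl{(\StringGram \lto \A \oplus \A_{\neg})}$, and the goal is to produce the analogous data for $\B$.

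First I would check that $\B$ and $\A_{\neg}$ are disjoint. Using $\g$, I build $h : \B \amp \A_{\neg} \lto \A \amp \A_{\neg}$ as $\lamblto{x}{(\applto{\g}{\pi_1\,x},\, \pi_2\,x)}$ --- applying $\g$ to the first component and leaving the second untouched --- which is well typed precisely because the $\amp$-introduction rule allows both components of the pair to reuse the same linear variable $x$. Composing $d$ with $h$ and wrapping with $\uparrow$ gives $\ltonl{(\B \amp \A_{\neg} \lto 0)}$, so $\B$ and $\A_{\neg}$ are disjoint. Second, I would produce the parsing function $\ltonl{(\StringGram \lto \B \oplus \A_{\neg})}$ by post-composing $p$ with the functorial action of $\oplus$ on its left summand, $\A \oplus \A_{\neg} \lto \B \oplus \A_{\neg}$, which is definable by pattern matching: send $\inl\,a$ to $\inl\,(\applto{\f}{a})$ and $\inr\,a'$ to $\inr\,a'$. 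Every term used is an $\uparrow$-wrapped linear term with no free linear variables, so the composites are again $\uparrow$-wrapped, and together they constitute a parser for $\B$.

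There is no real obstacle here; the only thing needing attention is the substructural bookkeeping --- verifying that the component-wise map on $\amp$ is legal (which is exactly what the $\amp$-introduction rule permits, since both projections of $x$ reuse its context) and that the $\oplus$-elimination respects the ordering discipline of pattern matching (it does, since the branches introduce no further linear variables). It is worth noting that the argument uses none of \cref{ax:dist,ax:disjointness}: it goes through in the core calculus, which is what makes this a lightweight but broadly reusable lemma for building parsers by reduction to deterministic automata.
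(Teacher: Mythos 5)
Your proposal is correct and follows exactly the route the paper sketches: keep $\A_{\neg}$ as the negative grammar, use $\f : \ltonl{(\A \lto \B)}$ to map the left summand of the parser's output, and use $\g : \ltonl{(\B \lto \A)}$ composed with the original disjointness witness to show $\B$ and $\A_{\neg}$ are disjoint. The substructural bookkeeping you flag (reuse of the context in $\amp$-introduction, pattern matching for the $\oplus$ map) is indeed the only content, and your observation that no extra axioms are needed is consistent with the paper.
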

Here we need both directions of the weak equivalence. We need $A \lto
B$ to extend the parser from $\StringGram \lto A \oplus A_{\neg}$ to
$\StringGram \lto B \oplus A_{\neg}$, but then we also need $B \lto A$ to
establish that $A_{\neg}$ is disjoint from $B$.

\subsection{Regular Expressions and Finite Automata}

Next, we describe how to construct an intrinsically
verified parser for regular expressions by compiling it to an NFA and
then a DFA. That is, for each regular expression $A$, we construct an
NFA $N(A)$ and a corresponding DFA $D(A)$ such that $A$ is strongly
equivalent to the traces of $N(A)$ and weakly equivalent to the
accepting traces of $D(A)$. Then we can easily construct a parser for
traces of $D(A)$ and apply \cref{lem:wk-eqv-parse} to get
a verified regular expression parser.

A regular expression in \theoryabbv is a linear type constructed
using only the connectives $\literal c$, $0$, $\oplus$, $I$,
$\otimes$, and Kleene star.  In \cref{sec:type-theory-examples}, we
saw one particular NFA and its corresponding type of traces. More
generally, in \cref{fig:nfatrace} we define a linear type of traces through an
arbitrary NFA $\N$.

$\Trace_{\N}$ is an inductive type indexed by the starting
state of the trace $\s : \N.\states$, and it may be built
through one of three constructors. We may terminate a trace at an
accepting state with the constructor $\nil$. Here we use an Agda-style
Unicode syntax for $\bigamp$, as well as using the function arrow to
mean a non-dependent version of $\bigamp$. If we had a trace beginning
at the destination state of a transition, then we may use the $\cons$
constructor to combine that trace with a parse of the label
of the transition to build a trace beginning at the source of the
transition.  Finally, if we had a trace beginning at the destination
of an $\epsilon$-transition then we may use $\epscons$ to pull it back
along the $\epsilon$-transition and construct a trace beginning at the
source of the $\epsilon$-transition. As a shorthand, we write $Parse_{N}$
for the accepting traces out of $N.init$.
\newcommand{\D}{\texttt{D}}
\begin{figure}
\begin{floatlisting}
data Trace(*@$_{\color{black}\N}$@*) : (s : N .states) (*@$\tocode$@*) L where
  nil : (*@$\uparrowcode$@*)(&[ s : N .states ] N .isAcc s (*@$\tocodelin$@*) Trace(*@$_{\color{black}\N}$@*) s)
  cons : (*@$\uparrowcode$@*)(&[ t : N .transitions ] ('#N .label t#' (*@$\lto$@*) Trace(*@$_{\color{black}\N}$@*) (N .dst t)
                                                 (*@$\lto$@*) Trace(*@$_{\color{black}\N}$@*) (N .src t)))
  (*@$\color{linconstructorcolor}\epsilon$@*)cons : (*@$\uparrowcode$@*)(&[ t : N .(*@$\color{nonlinconstructorcolor}\epsilon$@*)transitions ] (Trace(*@$_{\color{black}\N}$@*) (N .(*@$\color{nonlinconstructorcolor}\epsilon$@*)dst t) (*@$\lto$@*) Trace(*@$_{\color{black}\N}$@*) (N .(*@$\color{nonlinconstructorcolor}\epsilon$@*)src t)))
data Trace(*@$_{\color{black}\D}$@*) : (s : (*@\color{black}D@*) .states) (b : Bool) (*@$\tocode$@*) L where
  nil : (*@$\uparrowcode$@*)(&[ s : (*@\color{black}D@*) .states ] Trace(*@$_{\color{black}\D}$@*) s (*@\color{black}D@*) .isAcc s)
  cons : (*@$\uparrowcode$@*)(&[ c : (*@$\color{nonlintycolor}\Sigma$@*) ] &[ s : (*@\color{black}D@*) .states ]
           &[ b : Bool ] ('(*@\color{alphabetcolor}c@*)' (*@$\lto$@*) Trace(*@$_{\color{black}\D}$@*) ((*@\color{black}D@*) .(*@$\color{nonlinconstructorcolor}\delta$@*) c s) b (*@$\lto$@*) Trace(*@$_{\color{black}\D}$@*) s b))
\end{floatlisting}
\caption{Traces of an NFA $N$ and a DFA $D$}
\label{fig:nfatrace}
\end{figure}

$\Trace_{\D}$, the linear type of traces through $\D$, is given
next. Unlike traces for an NFA, we parameterize this type additionally
by a boolean which says whether the trace is accepting or
rejecting. These traces may be terminated in an accepting state $\s$
with the $\nil$ constructor.  The $\cons$ constructor builds a trace
out of state $\s$ by linearly combining a parse of some character $\c$
with a trace out of the state $\D.\delta~\c~\s$. The trace built with
$\cons$ is accepting if and only if the trace out of $D.\delta~c~s$ is
accepting.

Because DFAs are deterministic, we are able to prove that their types
of traces are unambiguous and define a parser for them directly. In particular
we show that for any start state $s$, ${\LinSigTy {b} {\Bool}
  \Trace_{D}~s~b}$ is a retract of $\StringGram$ and apply \cref{lem:retract} to
derive unambiguity. That is, we first
construct a function $\pparse_{D}$ which is a parser for
$\Trace_D~{s}~{true}$ with $\Trace_D~{s}~{false}$ being the disjoint
type used, and disjointness follows from the
unambiguity of $\LinSigTy {b} {\Bool} {\Trace_{D}~s~b}$ by \cref{lem:unambig-sum}.

The parser,
$\pparse_{D}$,
is defined by recursion on strings in \cref{fig:printdfa}. If this string is empty, then $\pparse_{D}$
terminates a trace at the input state $s$. If the
string is nonempty, then $\pparse_{D}$ walks forward in $D$ from the input state $s$ by
the character at the head of the string.
The inverse, $\print_{D}$ is defined by recursion on traces. If the trace is defined
via $\nil$, then $\print_{D}$ returns the empty string. Otherwise, if the trace
is defined by $\cons$ then $\print_{D}$ appends the character from the most
recent transition to the output
string and recurses.
We prove this is a retraction by induction on traces.

\begin{theorem}[\Agda]
  \label{thm:dfa-parser}
  $\pparse_D s$ is a parser for $\Trace_D~s~\true$.
\end{theorem}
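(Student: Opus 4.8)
The plan is to unwind \cref{def:parser}. To present $\pparse_D\,s$ as a parser for $\Trace_D~s~\true$, I take the negative grammar to be $A_{\neg} := \Trace_D~s~\false$. The function component is then $\pparse_D\,s$ itself, which by construction already has type $\ltonl{(\StringGram \lto \Trace_D~s~\true \oplus \Trace_D~s~\false)}$, since the binary sum $\Trace_D~s~\true \oplus \Trace_D~s~\false$ is by definition the $\Bool$-indexed sum $\LinSigTy{b}{\Bool}{\Trace_D~s~b}$; soundness of the parser then comes for free from this typing. So the only genuine obligation left is disjointness: a function $\ltonl{(\Trace_D~s~\true \amp \Trace_D~s~\false \lto 0)}$. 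The strategy for this is to show that the whole family $\LinSigTy{b}{\Bool}{\Trace_D~s~b}$ is unambiguous and then apply \cref{lem:unambig-to-disjoint} with $X := \Bool$ and the inequality $\true \neq \false$.

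To establish that $\LinSigTy{b}{\Bool}{\Trace_D~s~b}$ is unambiguous I would exhibit it as a retract of $\StringGram$ and invoke \cref{lem:retract}, using that $\StringGram$ is unambiguous (being a retract of $\top$, as shown earlier). The two maps realizing the retract are $\print_D : \LinSigTy{b}{\Bool}{\Trace_D~s~b} \lto \StringGram$ and $\pparse_D\,s : \StringGram \lto \LinSigTy{b}{\Bool}{\Trace_D~s~b}$; the content to check is the retraction equation $\lambda t.\,\pparse_D\,s\,(\print_D\,t) \equiv \lambda t.\,t$. I would prove this by induction on the trace $t$, carried out via the equalizer-based induction scheme of \cref{sec:tt} --- i.e. construct by $\fold$ over $\Trace_D$ a term of type $\ltonl{(\LinSigTy{b}{\Bool}{\Trace_D~s~b} \lto E)}$ into the equalizer $E := \equalizer{a}{(\pparse_D\,s)\circ\print_D}{\ident}$ whose underlying map is the identity, which amounts to checking that the $\fold$ recurrence respects the equation. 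In the $\nil$ case, $\print_D$ emits the empty string and $\pparse_D\,s$ on the empty string returns the terminating trace $\nil$ at $s$; in the $\cons$ case $\print_D$ emits the transition character $c$ and recurses, while $\pparse_D\,s$ on a string headed by $c$ steps to state $\delta\,c\,s$, runs $\pparse_D$ there (where the induction hypothesis applies to the sub-trace, preserving the accept/reject boolean index), and re-wraps the result with $\cons$; the two sides coincide after unfolding the $\beta$-rules for $\pparse_D$, $\print_D$, and $\fold$.

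With $\LinSigTy{b}{\Bool}{\Trace_D~s~b}$ shown unambiguous, \cref{lem:unambig-to-disjoint} yields disjointness of $\Trace_D~s~\true$ and $\Trace_D~s~\false$, which together with the typing of $\pparse_D\,s$ completes the verification (and also gives completeness of the parser). The main obstacle is the $\cons$ case of the retraction equation: this is precisely where the \emph{determinism} of $D$ is used --- a $\cons$-trace out of $s$ labeled $c$ must continue with a trace out of $\delta\,c\,s$, which is exactly the unique state $\pparse_D$ transitions to on reading $c$, so print-then-parse reconstructs the original trace. Making the bookkeeping of this step line up with the $\fold$ recurrence, and threading the accept/reject boolean through it ($\nil$ pins it to $\isAcc\,s$, $\cons$ passes it through unchanged), is the fiddly part; everything else is definitional unfolding and appeals to the cited lemmas.
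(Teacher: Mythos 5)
Your proposal is correct and follows essentially the same route as the paper: exhibit $\LinSigTy{b}{\Bool}{\Trace_D~s~b}$ as a retract of $\StringGram$ via $\pparse_D$ and $\print_D$ (retraction proved by induction on traces), derive unambiguity from \cref{lem:retract}, and obtain disjointness of the $\true$ and $\false$ summands from the unambiguity of the sum. Your explicit appeal to \cref{lem:unambig-to-disjoint} for the disjointness step is in fact the right citation, and your remarks on where determinism enters the $\cons$ case match the intended argument.
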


\begin{figure}
\begin{lstlisting}
parse(*@$\color{black}_D$@*) : (*@$\uparrowcode$@*)(String (*@$\lto$@*) &[ s : (*@\color{black}D@*) .states ] (*@$\oplus$@*)[ b : Bool ] Trace(*@$_{\color{black}\D}$@*) s b)
parse(*@$\color{black}_D$@*) String .nil s = (*@$\color{linconstructorcolor}\sigma$@*) ((*@\color{black}D@*) .isAcc s) (Trace(*@$_{\color{black}\D}$@*) .nil s)
parse(*@$\color{black}_D$@*) (String .cons ((*@$\color{linconstructorcolor}\sigma$@*) c a) w) s = let (*@$\color{linconstructorcolor}\sigma$@*) b t = parse w ((*@\color{black}D@*) .(*@$\color{nonlinconstructorcolor}\delta$@*) c s) in
                                    (*@$\color{linconstructorcolor}\sigma$@*) b (Trace(*@$_{\color{black}\D}$@*) .cons c s b a t)
print(*@$\color{black}_D$@*) : (s : (*@\color{black}D@*) .states) (*@$\tocode$@*) (*@$\uparrowcode$@*)(((*@$\oplus$@*)[ b : Bool ] Trace(*@$_{\color{black}\D}$@*) s b) (*@$\lto$@*) String)
print(*@$\color{black}_D$@*) s ((*@$\color{linconstructorcolor}\sigma$@*) b (Trace(*@$_{\color{black}\D}$@*) .nil .s)) = String .nil
print(*@$\color{black}_D$@*) s ((*@$\color{linconstructorcolor}\sigma$@*) b (Trace(*@$_{\color{black}\D}$@*) .cons c ((*@\color{black}D@*) .(*@$\color{nonlinconstructorcolor}\delta$@*) c .s) b a trace)) =
  String .cons ((*@$\color{linconstructorcolor}\sigma$@*) c a) (print(*@$\color{black}_D$@*) ((*@\color{black}D@*) .(*@$\color{nonlinconstructorcolor}\delta$@*) c s) ((*@$\color{linconstructorcolor}\sigma$@*) b trace))
\end{lstlisting}
\caption{Parser/printer for DFA traces}
\label{fig:printdfa}
\end{figure}

Working backwards, we can then show the traces of an NFA are weakly
equivalent to the traces of a DFA implementing a variant of Rabin and
Scott's classic powerset construction
\cite{rabinFiniteAutomataTheir1959}. Here we note that this is
\emph{only} a weak equivalence and not a strong equivalence, as the
DFA is unambiguous even if the NFA is not.
\begin{construction}[Determinization, \Agda]
  \label{cons:determinization}
  Given an NFA $N$, we construct a DFA $D$ such that
  $Parse_{N}$ is weakly equivalent to $Parse_{D}$.
\end{construction}
\newcommand{\X}{\texttt{X}}
\newcommand{\Y}{\texttt{Y}}
\newcommand{\x}{\texttt{x}}
\newcommand{\y}{\texttt{y}}
\begin{proof}
  Define the states of $\D$ to be $\mathbb{P}_{\epsilon}(\N.\states)$ --- the type
  of $\epsilon$-closed\footnote{A subset of states $\X$ is $\epsilon$-closed if
    for every $\s \in \X$ and $\epsilon$-transition $\s \overset{\epsilon}{\to} \s'$ we have $\s' \in \X$.} subsets of
  $\N.\states$. A subset is accepting in $\D$ if it contains an accepting state
  from $\N$. The initial state of $\D$ is the
  $\epsilon$-closure of $\N.\init$. Lastly, the transition function
  of $\D$ sends the subset $X$ under the character $\c$ to the
  $\epsilon$-closure of all the states reachable from $X$ via a transition
  labeled with the character $\c$.

  We demonstrate the weak equivalence between $Parse_{N}$ and
  $Parse_{D}$ by constructing parse transformers between the two
  grammars. To build the parse transformer
  $\ltonl{(Parse_{N} \lto Parse_{D})}$, we strengthen our inductive hypothesis to quantify over every start state
  and build a term
\newcommand{\NtoD}{\texttt{NtoD}}
\newcommand{\DtoN}{\texttt{DtoN}}
  \(
    \NtoD : \ltonl {\left(\Trace_{\N}~\s~\true \lto \LinPiTy{\X}{\D.\states}{\LinPiTy{\texttt{sInX}}{\X \ni \s}{ \Trace_{\D}~\X~\true}} \right) }
  \)
  that maps a trace in $\N$ from an arbitrary state $\s$ to a trace in $\D$
  that may begin at any subset of states $\X$ that contains $\s$. $\NtoD$ may then be instantiated at
  $\s = \N.\init$ and $\X = D.\mathsf{init}$ to get the desired
  parse transformer.

To construct a term from DFA traces to NFA traces, we similarly strengthen our
inductive hypothesis and build a parse transformer that ranges over arbitrary $\epsilon$-closed subsets $X : \mathbb{P}_{\epsilon}{\N.\states}$,
\(
  \DtoN : \ltonl {\left( \Trace_{\D}~\X~\true \lto \LinSigTy{s}{\N.\states}{\LinSigTy{\texttt{sInX}}{\X \ni \s}{ \Trace_{\N}~s~\true}} \right)}
\).
Because the states of $D$ are $\epsilon$-closed subsets of $N.\states$, if two
states $s$ and $s'$ belong to the the same $\epsilon$-closed subset
$X : \mathbb{P}_{\epsilon}(N.\states)$ then there exists a $\epsilon$-path in
$N$ between the two, but we do not necessarily know which one. Similarly, the
data contained in the type $\Trace_{\D}~X~true$ is the \emph{existence} of an accepting trace in
$\N$ beginning at some state in $\D$.

To define $\DtoN$, we need a choice function that extracts out a trace in $N$
from the mere existence of one. We achieve this by choosing the smallest trace
through $\N$ subject to an ordering on the non-linear types $N.\states$,
$N.\transitions$, and $N.\epstransitions$. In essence, the given orderings
specify a global disambiguation strategy.
\end{proof}

Finally, given any regular expression we can construct a
\emph{strongly} equivalent NFA. While only weak equivalence is
required to construct a parser, proving the strong equivalence shows
that other aspects of formal grammar theory are also verifiable in
\theoryabbv.
\newcommand{\R}{\texttt{R}}
\begin{construction}[Thompson's Construction \cite{thompsonProgrammingTechniquesRegular1968}, \Agda]
  \label{cons:thompson}
  Given a regular expression $\R$, we build an NFA $\N$ such that $\R$ is
  strongly equivalent to $\Trace_{N}(\N.\init)$.
\end{construction}

\begin{corollary}[\Agda]
  We may build a parser for every regular expression $\R$.
\end{corollary}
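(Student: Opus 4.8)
The plan is to assemble the corollary from three results already in hand: Thompson's construction (\cref{cons:thompson}), the powerset determinization (\cref{cons:determinization}), and the fact that every DFA has a parser for its accepting traces (\cref{thm:dfa-parser}). The glue is \cref{lem:wk-eqv-parse}, which transports a parser along a weak equivalence. Concretely: given a regular expression $\R$, produce an NFA, then a DFA, exhibit a parser for the DFA's accepting traces, and pull that parser back along the chain of weak equivalences to $\R$.

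In detail, I would first apply \cref{cons:thompson} to $\R$ to obtain an NFA $\N$ with $\R$ \emph{strongly} equivalent to $\Trace_{\N}(\N.\init) = Parse_{\N}$; since a strong equivalence is by \cref{def:weakequiv} in particular a weak equivalence (simply forget the two round-trip equations), $\R$ is weakly equivalent to $Parse_{\N}$. Next, apply \cref{cons:determinization} to $\N$ to get a DFA $\D$ with $Parse_{\N}$ weakly equivalent to $Parse_{\D} = \Trace_{\D}~\D.\init~\true$. Then \cref{thm:dfa-parser}, instantiated at the start state $\D.\init$, yields the parser $\pparse_{\D}~\D.\init$ for $Parse_{\D}$: its negative grammar is $\Trace_{\D}~\D.\init~\false$, and the disjointness $\uparrow(Parse_{\D} \amp \Trace_{\D}~\D.\init~\false \lto 0)$ follows from \cref{lem:unambig-sum,lem:unambig-to-disjoint} exactly as in the proof of \cref{thm:dfa-parser}. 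Finally, since weak equivalence is symmetric, $Parse_{\D}$ is weakly equivalent to $Parse_{\N}$, which is weakly equivalent to $\R$, so two applications of \cref{lem:wk-eqv-parse} transport the DFA parser first to a parser for $Parse_{\N}$ and then to a parser for $\R$. (Equivalently, one records once that weak equivalence composes, by composing the underlying parse transformers, and then applies \cref{lem:wk-eqv-parse} a single time.)

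There is essentially no genuine obstacle here: the corollary is a bookkeeping composition, and all the real work lives in the cited constructions --- in particular the global-disambiguation choice function inside \cref{cons:determinization} and the inductive retraction argument behind \cref{thm:dfa-parser}. The only points needing care are (i) confirming that the grammar named $Parse_{\N}$ is exactly the one Thompson's construction produces, so the strong equivalence of \cref{cons:thompson} applies on the nose; (ii) checking that \cref{lem:wk-eqv-parse} may be iterated --- equivalently, that weak equivalence is transitive --- which is immediate since one just composes the underlying linear functions; and (iii) noting, directly from \cref{def:weakequiv}, that strong equivalence implies weak equivalence, so that the output of Thompson's construction is a legitimate input to \cref{lem:wk-eqv-parse}. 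None of these requires more than unfolding definitions, so I expect the proof to be a short corollary-style chaining of the preceding results.
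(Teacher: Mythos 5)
Your proposal is correct and matches the paper's own proof: the paper likewise composes the strong equivalence from \cref{cons:thompson} with the weak equivalence from \cref{cons:determinization} to get a weak equivalence between $\R$ and the determinized automaton's traces, and then applies \cref{lem:wk-eqv-parse} once to transport the parser of \cref{thm:dfa-parser}. Your parenthetical variant (composing the weak equivalences first and invoking \cref{lem:wk-eqv-parse} a single time) is exactly how the paper phrases it.
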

\begin{proof}
We combine the strong equivalence of \cref{cons:thompson} with the weak
equivalence of \cref{cons:determinization} to show that $\R$ is weakly
equivalent to the traces of its determinized automaton. Then, we use
\cref{lem:wk-eqv-parse} to extend the parser from \cref{thm:dfa-parser}
with respect to this weak equivalence.
\end{proof}

\subsection{Context-free grammars}

Next, we give two examples for parsing context-free
grammars (CFGs). CFGs can be encoded in our type
theory similarly to regular expressions, as CFGs are equivalent
to the formalism of \emph{$\mu$}-regular expressions, where the Kleene
star is replaced by an arbitrary fixed point
operation \cite{leis_towards_1992}.

A simple example of a CFG is the Dyck grammar of balanced
parentheses, which we define in \Cref{fig:dyckinductive}.
$\Dyck$ is a grammar over the alphabet $\{ \stringquote{(}, \stringquote{)} \}$. The $\nil$ constructor shows that
the empty string is balanced, and the $bal$ constructor builds a balanced
parse by wrapping an already balanced parse in an additional set of parentheses
then following it with another balanced parse. We construct a parser for $\Dyck$ by building a deterministic automaton $M$ such
that $Parse_{M}$ is strongly equivalent to $\Dyck$.
\begin{figure}
\begin{floatlisting}
data Dyck : L where
  nil : (*@$\uparrowcode$@*) Dyck
  bal : (*@$\uparrowcode$@*)('(' (*@$\lto$@*) Dyck (*@$\lto$@*) ')' (*@$\lto$@*) Dyck (*@$\lto$@*) Dyck)
\end{floatlisting}
\caption{The Dyck grammar as an inductive linear type}
\label{fig:dyckinductive}
\end{figure}

We implement the parser for the Dyck language using an \emph{infinite} state deterministic
automaton, in \Cref{fig:DyckAutomaton}. Here the state is a ``stack''
counting how many open parentheses have been seen so far. Functions
$parse_{M}$ and $print_{M}$ for this automaton can be defined
analogously to those for DFAs, and so $\LinSigTy {s}
{M .states} {\LinSigTy {b}{Bool} {Trace_{M}~s~b}}$ is likewise
unambiguous.

\begin{theorem}[\Agda]
  \label{thm:dyck}
  $Dyck$ and $Parse_{M}$ are strongly equivalent, so we may build a parser for $\Dyck$.
\end{theorem}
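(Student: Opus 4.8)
The statement decomposes into the strong equivalence $\Dyck \simeq Parse_M$ together with the observation that, since $Parse_M$ carries a parser, \cref{lem:wk-eqv-parse} transports it to $\Dyck$. The plan is to exhibit parse transformers $f : \ltonl{(\Dyck \lto Parse_M)}$ and $g : \ltonl{(Parse_M \lto \Dyck)}$, prove that $g \circ f$ is the identity on $\Dyck$ by induction, and obtain that $f \circ g$ is the identity on $Parse_M$ for free. For the automaton side, the stack automaton $M$ of \cref{fig:DyckAutomaton}, whose states count unmatched open parentheses (reading $\literal{)}$ at state $0$ going to a rejecting sink), is deterministic, so the DFA parser/printer construction of \cref{fig:printdfa} and \cref{thm:dfa-parser} applies verbatim: $\LinSigTy{s}{M.\states}{\LinSigTy{b}{\Bool}{\Trace_M\,s\,b}}$ is a retract of $\StringGram$, hence unambiguous by \cref{lem:retract}, each $\Trace_M\,s\,b$ is unambiguous by \cref{lem:unambig-sum}, $\Trace_M\,s\,\true$ and $\Trace_M\,s\,\false$ are disjoint by \cref{lem:unambig-to-disjoint}, and there is a parser $\pparse_M$ for $\Trace_M\,s\,\true$. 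In particular $Parse_M = \Trace_M\,0\,\true$ is unambiguous and has a parser.

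To build $f$ I would strengthen the statement to all stack heights: a balanced word can be spliced in front of any accepting trace at any height without changing its endpoints. Concretely, a $\fold$ on $\Dyck$ produces a term $F : \ltonl{(\Dyck \lto \bigamp_{n:\nat}(\Trace_M\,n\,\true \lto \Trace_M\,n\,\true))}$, sending $\nil$ to $\dlamb{n}{\lamblto{t}{t}}$ and the constructor $bal$ with recursive results $r_1,r_2$ to the map which, at height $n$ and trace $t$, reads $\literal{(}$ (stepping to height $n{+}1$), runs $r_1.\pi\,(n{+}1)$, reads $\literal{)}$ (stepping back to $n$), runs $r_2.\pi\,n$, and then continues with $t$, all assembled using the $\cons$ constructor of $\Trace_M$. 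Then $f := \lamblto{d}{(F\,d).\pi\,0\,\nil}$, using that $0$ is accepting.

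To build $g$ I would introduce a family of "$n$-times right-open" grammars $\widehat\Dyck : \nat \to L$ by recursion, $\widehat\Dyck\,0 = \Dyck$ and $\widehat\Dyck\,(n{+}1) = \Dyck \otimes \literal{)} \otimes \widehat\Dyck\,n$: these are exactly the words $w$ such that $w$ prefixed by $n$ open parentheses is balanced. A $\fold$ on $\Trace_M$ over all state/boolean indices --- mapping rejecting traces to $\top$ and the (empty) accepting-trace index at the sink to $0$, which is what makes the $\literal{)}$-from-$0$ clause consistent --- produces $G : \LinPiTy{n}{\nat}{\ltonl{(\Trace_M\,n\,\true \lto \widehat\Dyck\,n)}}$: $\nil$ at state $0$ yields $\nil : \Dyck$; a $\literal{)}$-step from $n{+}1$ down to $n$ followed by a trace with image $r : \widehat\Dyck\,n$ yields $(\nil,\literal{)},r) : \widehat\Dyck\,(n{+}1)$; and a $\literal{(}$-step from $n$ up to $n{+}1$ followed by a trace with image $(b_0,\literal{)},r) : \widehat\Dyck\,(n{+}1)$ yields the result of prepending the balanced word $bal\,(\cdot)\,b_0\,(\cdot)\,\nil$ to the leading $\Dyck$-component of $r$, an operation packaged as an auxiliary $\mathsf{prependDyck}_n : \ltonl{(\Dyck \lto \widehat\Dyck\,n \lto \widehat\Dyck\,n)}$ (itself an easy recursion on $n$, bottoming out at Dyck concatenation $\Dyck \otimes \Dyck \lto \Dyck$, a $\fold$). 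Then $g := \lamblto{t}{G\,0\,t}$.

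Finally the round trips. Since $Parse_M$ is unambiguous, \cref{def:unambig} forces $f \circ g$ to equal the identity on $Parse_M$. The real content is $g \circ f \equiv \mathrm{id}$ on $\Dyck$, which I would prove by induction using the equalizer scheme of \cref{fig:iilt} (construct a $\ltonl{(\Dyck \lto \equalizer{d}{g \circ f}{\mathrm{id}})}$ by $\fold$ that is a section of the projection). A naive induction does not close, since $f$ uses $F$ internally at height $1$; instead one proves the stronger invariant that for every $d : \Dyck$, $\dlamb{n}{\lamblto{t}{G\,n\,((F\,d).\pi\,n\,t)}} \equiv \dlamb{n}{\lamblto{t}{\mathsf{prependDyck}_n\,d\,(G\,n\,t)}}$ in $\bigamp_{n}(\Trace_M\,n\,\true \lto \widehat\Dyck\,n)$: the $\nil$ case reduces to $\mathsf{prependDyck}_n\,\nil \equiv \mathrm{id}$, and the $bal$ case unwinds one layer of $F$ and $G$ and then applies both inductive hypotheses together with a small library of equalities --- the right unit law $d \cdot \nil \equiv d$, the fusion law $\mathsf{prependDyck}_n\,a\,(\mathsf{prependDyck}_n\,b\,x) \equiv \mathsf{prependDyck}_n\,(a \cdot b)\,x$, and $(bal\,p\,d_1\,p'\,\nil)\cdot d_2 \equiv bal\,p\,d_1\,p'\,d_2$. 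Specializing at $n = 0$ and $t = \nil$ gives $g(f(d)) \equiv d$. This establishes the strong equivalence, hence a weak equivalence (\cref{def:weakequiv}), so \cref{lem:wk-eqv-parse} yields the parser for $\Dyck$. The main obstacle is finding these generalizations: the "splice at arbitrary height" form that makes $f$'s $\fold$ well-typed, the $\widehat\Dyck$ family (with the sink state carrying the empty grammar) that makes $g$'s $\fold$ well-typed, and especially the strengthened invariant for $g \circ f$ and the handful of concatenation/prepend lemmas its inductive step consumes.
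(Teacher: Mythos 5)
Your proposal is correct and matches the route the paper takes (and defers to its Agda mechanization): treat $M$ as a deterministic automaton so that the parse/print retraction gives unambiguity of the traces and a parser, build the two parse transformers by strengthening over stack heights (the splice-at-height-$n$ form of $f$ and the right-open family $\widehat\Dyck\,n$ for $g$), prove the $\Dyck$-side round trip by an equalizer-based induction with a strengthened invariant, and get the $Parse_M$-side round trip for free from unambiguity. The paper prints no proof for this theorem beyond the surrounding sketch, but your elaboration is consistent with it and I see no gaps.
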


\begin{figure}
  \footnotesize
  \begin{tikzpicture}[node distance = 17mm ]
    \node[state, initial above, accepting] (0) {$0$};
    \node[state, right of=0] (1) {$1$};
    \node[state, right of=1] (2) {$2$};
    \node[right of=2] (3) {$\dots$};
    \node[state, left of=0] (fail) {$fail$};

    \path[->] (0) edge[above, bend left] node{\stringquote{(}} (1)
              (1) edge[below, bend left] node{\stringquote{)}} (0)
              (1) edge[above, bend left] node{\stringquote{(}} (2)
              (2) edge[below, bend left] node{\stringquote{)}} (1)
              (2) edge[above, bend left] node{\stringquote{(}} (3)
              (3) edge[below, bend left] node{\stringquote{)}} (2)
              (fail) edge[loop left] node{\stringquote{(}, \stringquote{)}} (fail)
              (0) edge[above] node{\stringquote{)}} (fail);
  \end{tikzpicture}
  \caption{Automaton $M$ for the Dyck grammar}
  \label{fig:DyckAutomaton}
\end{figure}

Our final example is of a simple grammar of arithmetic expressions
with an associative operation. Here we take the alphabet to be $\{
\stringquote{(}, \stringquote{)}, \stringquote{+} , \stringquote{NUM}
\}$. In \Cref{fig:binop-inductive} we define the expression grammar
using two mutually
recursive types, corresponding to the two non-terminals we would use
in a CFG syntax. The syntactic structure encodes that the binary
operation is right associative. In the same figure, we define the
traces of an automaton with one token of lookahead. The automaton\footnote{In
  the automaton definition,
  $\texttt{NotStartsWithLP}$ is defined as
  \(I \oplus
  (\literal{)} \oplus \literal{+} \oplus \literal{NUM}) \otimes \top
  \). Similarly,
  $\texttt{NotStartsWithRP}$ is defined as
  \(I \oplus
  (\literal{(} \oplus \literal{+} \oplus \literal{NUM}) \otimes \top
  \).
}
has four different ``states'', each with access to a natural number
``stack''. The ``opening'' state $O$ expects either a left paren, in
which case it increments the stack and stays in the opening state, or
sees a number and proceeds to the $D$ state. The ``done opening''
state $D$ is where lookahead is used: if the next token will be a
right paren, then we proceed to $C$; otherwise, we proceed to $A$.
In the ``closing'' state $C$ if
we observe a right paren, then we decrement the count and continue to the
$D$ state. In the ``adding'' state $A$, we succeed if the string
ends and the count is $0$; otherwise, if we see a plus we continue to the $O$
state. Additionally, since the automaton need parse all of the
incorrect strings, we add all of the failing cases.

\begin{theorem}[\Agda]
  \label{thm:exp-parser}
  We construct a parser for $\mathsf{Exp}$ by showing it is weakly
  equivalent to $\mathsf{O}~0~\true$.
\end{theorem}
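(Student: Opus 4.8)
The plan is to follow the same three-step recipe used for \cref{thm:dyck}: build the (infinite-state, lookahead) automaton, prove that the grammar is weakly equivalent to its accepting traces, and then extend the automaton's parser along that equivalence using \cref{lem:wk-eqv-parse}. First I would dispatch the automaton side. By direct analogy with \cref{thm:dfa-parser}, I would define a parser $\pparse_M$ and a printer $\print_M$ for the automaton $M$ of \cref{fig:binop-inductive} by mutual recursion on strings and traces respectively, and prove that $\pparse_M \circ \print_M$ is the identity by induction on traces. This exhibits $\LinSigTy{b}{\Bool}{\Trace_M\,s\,n\,b}$ as a retract of $\StringGram$ for every configuration $(s,n)$ of $M$ (with $s$ ranging over the four control states $\{O,D,C,A\}$ and $n$ the stack value); since $\StringGram$ is unambiguous, \cref{lem:retract} gives unambiguity of each such type, whence \cref{lem:unambig-sum} and \cref{lem:unambig-to-disjoint} make $\Trace_M\,s\,n\,\true$ and $\Trace_M\,s\,n\,\false$ disjoint. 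Taking $A_{\neg} := \mathsf{O}\,0\,\false$, the restriction of $\pparse_M$ to the initial configuration is then a parser for $\mathsf{O}\,0\,\true$ in the sense of \cref{def:parser}.

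The one genuinely new ingredient relative to the DFA and Dyck cases is the single token of lookahead used at state $D$: the transition out of $D$ branches on whether the \emph{next} input character is a right parenthesis. To define $\pparse_M$, $\print_M$, and the parse transformers wherever they pass through $D$, I would invoke the equivalence derived from \cref{ax:dist} that decomposes an arbitrary linear type as a sum over which character, if any, it begins with, namely $A \cong (A \amp I) \oplus \LinSigTy{c}{\Sigma_0}{(A \amp (\literal{c} \otimes \top))}$. Case-splitting the remaining input along this equivalence is precisely what lets the definition peek at the next character without consuming it, and the $\texttt{NotStartsWithLP}$ and $\texttt{NotStartsWithRP}$ grammars appearing in \cref{fig:binop-inductive} are exactly the images of the "does not begin with a left/right paren" summands.

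Next I would establish the weak equivalence between $\mathsf{Exp}$ and $\mathsf{O}\,0\,\true$ of \cref{def:weakequiv} by constructing parse transformers in both directions. For the forward transformer $\ltonl{(\mathsf{Exp} \lto \mathsf{O}\,0\,\true)}$, I would generalize the statement to a mutually recursive family indexed by the two non-terminals of $\mathsf{Exp}$ and by an arbitrary stack depth $n$, turning a parse of a non-terminal (sitting under $n$ pending open parentheses) into a trace beginning at the corresponding control state of $M$ with stack $n$; this family is defined by a mutual $\fold$ over $\mathsf{Exp}$, and the transformer we want is its $n = 0$, state-$O$ instance. For the converse $\ltonl{(\mathsf{O}\,0\,\true \lto \mathsf{Exp})}$, I would recurse on traces, again using the first-character equivalence above to discharge the $D$-state lookahead: the $O$ self-loop and the $C$-step rebuild the parenthesis nesting, and the $A$-state $\stringquote{+}$-transition rebuilds the right-associated application. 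Since \cref{def:weakequiv} only asks for the existence of the two transformers, no round-trip equations are needed here. Finally, composing the parser for $\mathsf{O}\,0\,\true$ from the first step with this weak equivalence via \cref{lem:wk-eqv-parse} yields the desired parser for $\mathsf{Exp}$.

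The main obstacle I anticipate is making the strengthened induction hypotheses for the two transformers line up with the automaton's control flow. The automaton interleaves an "opening" phase ($O$), a "done-opening" phase with lookahead ($D$), a "closing" phase ($C$), and an "adding" phase ($A$) over a single natural-number stack, whereas $\mathsf{Exp}$ is organized around two non-terminals, so the correspondence is not one constructor per transition: a matched parenthesis pair must be bundled from several automaton steps (an $O$ self-loop together with a later $C\to D$ step) into a single $\mathsf{Exp}$ constructor and conversely. Threading the distributivity equivalence through these mutual folds while keeping every linear context in the correct left-to-right order is where the real bookkeeping — and the bulk of the mechanized argument — lies.
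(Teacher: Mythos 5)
Your proposal matches the paper's approach: the paper likewise constructs the parser by generalizing the DFA recipe (parse/print retraction of $\StringGram$, unambiguity via \cref{lem:retract}, disjointness via \cref{lem:unambig-sum,lem:unambig-to-disjoint}), establishes the weak equivalence with $\mathsf{O}~0~\true$, and extends along it with \cref{lem:wk-eqv-parse}. You also correctly identify the one point the paper singles out as essential --- that \cref{ax:dist} is what turns the lookahead product $(A~n~b)\,\&\,((\literal{)}\otimes\top)\oplus\texttt{NotStartsWithRP})$ into a sum so the $D$-state branch can actually be taken.
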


With \cref{ax:dist}, it is straightforward to implement a parser for this lookahead automaton,
generalizing the approach for deterministic automata. Without access to
distributivity, we may define a rudimentary lookahead operation via the chain of
equivalences,
\[
  A~n~b \cong (A~n~b) \& \top \cong (A~n~b)\,\& \left(\left(\literal{)} \otimes \top \right) \oplus \texttt{NotStartsWithRP}\right)
\]
When defining a parser, if the lookahead character is a right paren we would
like to apply $\texttt{lookAheadRP}$; otherwise, apply $\texttt{lookAheadNot}$.
However, without distributivity we have no means of relating the bit of
information learned by looking ahead to the input $A~n~b$ parse.
That is, to choose which constructor to apply we would need the end of this
chain of equivalences to be a binary sum rather than a binary product,
necessitating the distributivity axiom.

\begin{figure}
\begin{floatlisting}
data Exp : L where
  done : (*@$\uparrowcode$@*)(Atom (*@$\lto$@*) Exp)
  add : (*@$\uparrowcode$@*)(Atom (*@$\lto$@*) '+' (*@$\lto$@*) Exp (*@$\lto$@*) Exp)
data Atom : L where
  num : (*@$\uparrowcode$@*)('#NUM#' (*@$\lto$@*) Atom)
  parens : (*@$\uparrowcode$@*)('(' (*@$\lto$@*) Exp (*@$\lto$@*) ')' (*@$\lto$@*) Atom)

data O : Nat (*@$\tocode$@*) Bool (*@$\tocode$@*) L where
  left : (*@$\uparrowcode$@*)(&[ n : Nat ] &[ b : Bool ] '(' (*@$\lto$@*) O (n + 1) b (*@$\lto$@*) O n b)
  num : (*@$\uparrowcode$@*)(&[ n : Nat ] &[ b : Bool ] '#NUM#' (*@$\lto$@*) D n b (*@$\lto$@*) O n b)
  unexpected : (*@$\uparrowcode$@*)(&[ n : Nat ] (NotStartsWithLP (*@$\lto$@*) O n false)
data D : Nat (*@$\tocode$@*) Bool (*@$\tocode$@*) L where
  lookAheadRP : (*@$\uparrowcode$@*)(&[ n : Nat ] &[ b : Bool ] ((')' (*@$\otimes$@*) (*@$\color{lintycolor}\top$@*)) (*@$\&$@*) C n b) (*@$\lto$@*) D n b)
  lookAheadNot : (*@$\uparrowcode$@*)(&[ n : Nat ] &[ b : Bool ] (NotStartsWithRP (*@$\&$@*) A n b) (*@$\lto$@*) D n b)
data C : Nat (*@$\tocode$@*) Bool (*@$\tocode$@*) L where
  closeGood : (*@$\uparrowcode$@*)(&[ n : Nat ] &[ b : Bool ] ')' (*@$\lto$@*) D n b (*@$\lto$@*) C (n + 1) b)
  closeBad : (*@$\uparrowcode$@*)(')' (*@$\lto$@*) C 0 false)
  unexpected : (*@$\uparrowcode$@*)(&[ n : Nat ] NotStartsWithRP (*@$\lto$@*) C n false)
data A : Nat (*@$\tocode$@*) Bool (*@$\tocode$@*) L where
  doneGood : (*@$\uparrowcode$@*)(A 0 true)
  doneBad : (*@$\uparrowcode$@*)(&[ n : Nat ] A (n + 1) false)
  add : (*@$\uparrowcode$@*)(&[ n : Nat ] &[ b : Bool ] '+' (*@$\lto$@*) O n b (*@$\lto$@*) A n b)
  unexpected : (*@$\uparrowcode$@*)(&[ n : Nat ] ('(' (*@$\oplus$@*) ')' (*@$\oplus$@*) '#NUM#') (*@$\lto$@*) (*@$\color{lintycolor} \top$@*) (*@$\lto$@*) A n false)
\end{floatlisting}
\caption{Associative arithmetic expressions and a corresponding lookahead automaton}
\label{fig:binop-inductive}
\end{figure}

\subsection{Unrestricted Grammars}

While we have shown only examples for context-free grammars, in fact
arbitrarily complex grammars are encodable in \theoryabbv. To
demonstrate this, we show that for any \emph{non-linear} function $P :
\StringSem \to U$, where here $\StringSem$ is the \emph{non-linear} type
of strings over the alphabet, we can construct a grammar whose parses
correspond to $P$.
\(
Reify~P = \LinSigTy{w} {\StringSem} {\LinSigTy {x} {P~w} {~\lceil w \rceil}}
\)
where $\lceil \stringquote{} \rceil = I$ and
$\lceil c :: w \rceil = \literal c \otimes \lceil w \rceil$.

This reification operation on functions $\StringSem \to U$ is very expressive,
as it sidesteps our linear typing connectives and utilizes the whole of nonlinear
dependent type theory to define a grammar. For example, given a
Turing machine $T$ one may
define a non-linear predicate $accepts : \StringSem \to U$ that encodes that $T$
halts and accepts an input string. Then, $Reify~accepts$ is a linear type that
captures precisely the string the same language as $T$. In general,
$Lang(Reify~accepts)$ is recursively enumerable --- the most general class of
languages in the Chomsky hierarchy.
\begin{construction}[\Agda]
  \label{cons:turing}
  For any Turing machine $T$, we can construct a grammar in \theoryabbv
  that accepts the same language as $T$.
\end{construction}

\section{Denotational Semantics and Implementation}
\label{sec:semantics-and-metatheory}

To justify our assertion that \theoryabbv is a syntax for formal
grammars and parse transformers, we will now define a
\emph{denotational semantics} that makes this mathematically precise
by defining a notion of formal grammar and parse transformer then
showing that our type theory can be soundly interpreted in this
model. We then discuss how this denotational semantics provides the
basis for our prototype implementation in Agda.

\subsection{Formal Grammars and Parse Transformers}

The most common definition of a formal grammar is as generative
grammars, defined by a set of non-terminals, a specified start symbol
and set of production rules. We instead use a more abstract
formulation that is closer in spirit to the standard definition of a
formal \emph{language} \cite{elliottSymbolicAutomaticDifferentiation2021}:
\begin{definition}
  A \emph{formal language} $L$ is a function from strings to propositions.
  A (small) \emph{formal grammar} $A$ is a function from strings to (small) sets.
\end{definition}
We think of the grammar $A$ as taking a string to the set of all parse
trees for that string. However since $A$ could be any function
whatsoever there is no requirement that an element of $A(w)$ be a
``tree'' in the usual sense. This definition provides a simple,
syntax-independent definition of a grammar that can be used for any
formalism: generative grammars, categorial grammars, or our own
type-theoretic grammars. Note that the definition of a formal grammar
is a generalization of the usual notion of formal language since a
proposition can be equivalently defined as a subset of a one-element
set. Then the difference between a formal grammar and a formal
language is that formal grammars can be \emph{ambiguous} in that there
can be more than one parse of the same string. Even for unambiguous
grammars, we care not just about \emph{whether} a string has a parse
tree, but \emph{which} parse tree it has, i.e., what the structure of
the element of $A(w)$ is.  To interpret our universes $U , L$ we
assume we have a universe of \emph{small} sets. In the remainder, all
formal grammars are assumed to be small.

We then interpret linear \emph{terms} as \emph{parse transformers}:
\begin{definition}
  Let $A_1,A_2$ be formal grammars. Then a parse transformer $f$ from
  $A_1$ to $A_2$ is a function assigning to each string $w$ a function
  $f_w : A_1(w) \to A_2(w)$.
\end{definition}
Just as formal grammars generalize formal languages, parse
transformers generalize formal language inclusion: if $A_1(w), A_2(w)$
are all subsets of a one-element set, then a parse transformer is
equivalent to showing that $A_1(w) \subseteq A_2(w)$. In our
denotational semantics, linear terms will be interpreted as such parse
transformers, and the notions of unambiguous grammar, parsers,
disjointness, etc, introduced in \Cref{sec:applications} can be
verified to correspond to their intended meanings under this
interpretation.

Parse transformers can be composed: given two parse transformers $f$ and $g$,
their composition is defined pointwise, i.e. $(f\circ g)_w = f_w \circ g_w$.
Furthermore, given a formal grammar $A$, its identity transformer is $id_w =
id_{A(w)}$, where $id_{A(w)}$ is the identity function on the set $A(w)$. This
defines a \emph{category}.
\begin{definition}
  Define $\Grammar$ to be the category whose objects are formal
  grammars and morphisms are parse transformers.
\end{definition}
This category is equivalent to the slice category $\Set/\Sigma^*$ and
is very well-behaved. It is complete, co-complete, Cartesian
closed and carries a monoidal biclosed structure. We will use these
structures to model the linear types, terms and equalities in
\theoryabbv.

\subsection{Semantics}

We now define our denotational semantics.
\begin{definition}[Grammar Semantics]
  We define the following interpretations by mutual recursion on the
  judgments of \theoryabbv:
  \begin{enumerate}
  \item For each non-linear context $\Gamma \isCtx$, we define a set $\sem \Gamma$.
  \item For each non-linear type $\Gamma \vdash X \isTy$, and element
    $\gamma \in \sem\Gamma$, we define a set $\sem X \gamma$.
  \item For each linear type $\Gamma \vdash A \isLinTy$ and element $\gamma
    \in \sem\Gamma$, we define a formal grammar $\sem{A}\gamma$. We
    similarly define a formal grammar $\sem\Delta\gamma$ for each
    linear context $\linctxwff \Gamma \Delta$.
  \item For each non-linear term $\Gamma \vdash M : X$ and $\gamma \in \sem{\Gamma}$, we define an element $\sem{M}\gamma \in \sem{X}\gamma$.
  \item For each linear term $\Gamma; \Delta \vdash e : A$ and $\gamma \in \sem{\Gamma}$, we define a parse transformer from $\sem{\Delta}\gamma$ to $\sem{A}\gamma$.
  \end{enumerate}
  And we verify the following conditions:
  \begin{enumerate}
  \item If $\Gamma \vdash X \isSmall$, then $\sem X \gamma$ is a small set.
  \item If $\Gamma \vdash X \equiv X'$ then for every $\gamma$, $\sem{X}\gamma = \sem{X'}\gamma$.
  \item If $\Gamma \vdash A \equiv A'$ then for every $\gamma$, $\sem{A}\gamma = \sem{A'}\gamma$.
  \item If $\Gamma \vdash M \equiv M' : X$ then for every $\gamma$, $\sem{M}\gamma = \sem{M'}\gamma$.
  \item If $\Gamma;\Delta \vdash e \equiv e' : A$ then for every $\gamma$, $\sem{e}\gamma = \sem{e'}\gamma$.
  \end{enumerate}
\end{definition}

The interpretation of dependent types as sets is standard
\cite{Hofmann_1997}. We present the concrete descriptions of the
semantics of linear types, as well as our non-standard non-linear
types in \Cref{fig:semantics}. The grammar for a literal $c$ has a
single parse precisely when the input string consists of the single
character. The grammar for the unit similarly has a single parse for
the empty string.  A parse of the tensor product $A \otimes B$
consists of a \emph{splitting} of the empty string into a prefix $w_1$
and suffix $w_2$ along with an $A$ parse of $w_1$ and $B$ parse of
$w_2$. A parse of $\bigoplus_{x:X} A$ is a pair of an element of the
set $X$ and a parse of $A(x)$, while dually a parse of $\bigwith_{x:X}
A$ is a \emph{function} taking any $x:X$ to a parse of $A(x)$. A
$w$-parse of $A \lto B$ is a function that takes an $A$ parse of some
other string $w'$ to a $B$ parse of $ww'$, and $B \tol A$ is the same
except the $B$ parse is for the reversed concatenation $w'w$.
The set $\uparrow A$ is the set of parse for the empty string for
$A$. This definition means that $\sem{\uparrow (A \lto B)}$ (or $\sem{\uparrow (B \tol A)}$) is
equivalent to the set of parse transformers:
\( \sem{\uparrow (A \lto B)}\gamma = \sem{A \lto B}\gamma\,\epsilon = \prod_{w'} \sem{A}\gamma\,w' \to \sem{B}\gamma\,w'\).

Next, a parse in the equalizer $\equalizer{a}{f}{g}$ is defined as
a parse in $\sem{A}$ that is mapped to the same parse by the parse transformers $\sem{f}$ and
$\sem{g}$.
The universe $L$ of linear types is interpreted as the set of all
small grammars.
\begin{figure}
  \begin{minipage}[t]{.3\textwidth}
  \begin{footnotesize}
    \begin{flalign*}
    &\sem{c}\gamma\,w = \{ c | w = c\}\\
    &\sem{I}\gamma\,w = \{ () \pipe w = \epsilon \}\\
    &\sem{\uparrow A}\gamma = \sem{A}\gamma\,\epsilon\\
    &\sem{A \lto B}\gamma\,w = \prod_{w'} \sem{A}\gamma\,w' \to \sem{B}\gamma\,ww'\\
    &\sem{B \tol A}\gamma\,w = \prod_{w'} \sem{A}\gamma\,w' \to \sem{B}\gamma\,w'w\\
    &\sem{\el(F)}\gamma\,G = \sem{F}\gamma\,G\\
    &\sem{\map(F)}\gamma\,f = \sem{F}\gamma\,f\\
    &\sem{\mu A}\gamma = \mu (\sem{A}\gamma)
  \end{flalign*}
  \end{footnotesize}
  \end{minipage}%
  \begin{minipage}[t]{.7\textwidth}
  \begin{footnotesize}
    \begin{flalign*}
    &\sem{A \otimes B}\gamma\,w = \{ (w_1,w_2,a,b) \pipe w_1w_2 = w \wedge a \in \sem{A}\gamma\,w_1 \wedge b \in \sem{B}\gamma\,w_2 \}\\
    &\sem{\bigoplus_{x:X} A}\gamma\,w = \coprod_{x \in \sem{X}\gamma} \sem{A}(\gamma, x)\,w \\
    &\sem{\bigwith_{x:X} A}\gamma\,w = \prod_{x\in \sem{X}\gamma} \sem{A}(\gamma, x)\,w \\
    &\sem{\equalizer{a}{f}{g}}\gamma\,w = \{ a \in \sem{A}\gamma\,w \pipe \sem{f}\gamma\,w\,a = \sem{g}\gamma\,w\,a \} \\
    &\sem{L}\gamma = \Grammar_0\\
    &\sem{SPF\,X}\gamma = \textrm{DepPolyFunctor}(\sem{X}\gamma\times \Sigma^*,\Sigma^*)
  \end{flalign*}
  \end{footnotesize}
  \end{minipage}
  \caption{Grammar Semantics}
  \label{fig:semantics}
\end{figure}
The most complex part of the semantics is the interpretation of
strictly positive functors and indexed inductive linear types.  We
interpret a strictly positive functor as a \emph{dependent polynomial
functor} on the category of sets, also sometimes called an
\emph{indexed container} \cite{gambino_wellfounded_2004,altenkirch_indexed_2015}.
\begin{definition}
  Let $I$ and $O$ be sets. A (dependent) \emph{polynomial} (of sets)
  from $I$ to $O$ consists of a set of shapes $S$, a set of positions
  $P$ and functions $f : P \to I$, $g : P \to S$ and $h : S \to O$. The \emph{extension}
  of a polynomial is a functor $\Set/I \to \Set/O$ defined as the
  composite
  \[\begin{tikzcd}
	    {\Set/I} & {\Set/P} & {\Set/S} & {\Set/O}
	    \arrow["{f^*}", from=1-1, to=1-2]
	    \arrow["{\Pi_g}", from=1-2, to=1-3]
	    \arrow["{\Sigma_h}", from=1-3, to=1-4]
  \end{tikzcd}\]
  Where $f^*$ is the pullback functor along $f$; $\Pi_g$ is the dependent
  product operation and $\Sigma_h$ is the dependent sum operation, which
  are, respectively, the right and left adjoint of their pullback functors
  $g^*$ and $h^*$. A dependent polynomial functor from $I$ to $O$ is a functor $\Set/I
  \to \Set/O$ that is naturally isomorphic to the extension of a
  dependent polynomial from $I$ to $O$.
\end{definition}
With this interpretation of $F$ as a polynomial functor, $\el(F)$ and
$\map(F)$ are just interpreted as the action of the functor on objects
and morphisms, respectively. We interpret the constructors
$\mathsf{K},\Var$, etc. on functors in the obvious way that matches
the definitional behavior of $\el$ and $\map$. The non-trivial part of
the construction is verifying that such such constructions are closed
under being polynomial. The details are tedious but straightforward
extension of prior work on dependent polynomials and indexed
containers and we have verified the construction in Agda.

We use dependent polynomials functors on sets as these are guaranteed
to have initial algebras. Further, these initial algebras are readily
constructed in our Agda implementation as an inductive type of $IW$
trees which are already available in the Cubical library of Agda
\cite{The_Agda_Community_Cubical_Agda_Library_2024}.  Then an element $F \in \sem{X \to
  \SPF\,X}\gamma$ is an $\sem{X}\gamma$-indexed family of polynomial
functors from $\sem{X}\gamma\times \Sigma^*$ to $\Sigma^*$, and taking
the product of these constructs a polynomial functor from
$\sem{X}\gamma\times \Sigma^*$ to itself. Then $\sem{\mu\,F}\gamma$ is
defined to be the initial algebra of this functor, and the initial
algebra structure is used to interpret $\roll, \fold$ and the
corresponding axioms.

The remaining details of the interpretation of linear terms as parse
transformers is a relatively straightforward extension of existing semantics for
linear logic in monoidal categories \cite{seely89}. In
\ifarxiv{\cref{sec:denotational}}\else{the extended version of the paper}\fi, we include the denotations
of linear terms and prove that our semantics respects the equational theory of
\theoryabbv --- as well as \cref{ax:dist,ax:disjointness,ax:string-top}.

\subsection{Agda Implementation \Agda}
\label{subsec:impl}
This denotational semantics in grammars and parse transformers serves
as the basis for our prototype implementation of \theoryabbv in
Cubical Agda \cite{schaefer_2025_15243560}. The implementation is a shallow embedding, meaning that
rather than formalizing a syntax of \theoryabbv types and terms, we interpret
the non-linear universe $U$ as Cubical Agda's universe of (homotopy) sets (at
some universe level $\ell$), $\hSet~\ell$, and the linear universe $L$ as the
function type
$\StringSem \to \hSet~\ell$. We use $\hSet~\ell$ as it ensures that uniqueness of identity proofs holds for the interpretation of all types in \theoryabbv, as it does in any extensional type theory. Then we implement each of the type and
term constructors of \theoryabbv as combinators on formal grammars or
parse transformers, and use Cubical Agda's equality type to model the
term equalities. Cubical Agda is convenient for this purpose as it has
built-in support for function extensionality which is convenient for the verification of equality rules. Axioms such as distributivity of $\oplus/\&$ and
disjointness of constructors are then provable directly in Agda, and
we are careful to only construct grammars, terms and proofs using
constructs from \theoryabbv.

The main difference
between our shallow embedding and \theoryabbv is that our linear terms
are written in a combinator-style, without being able to use named
variables in linear terms. For example, the function $h$ from
\cref{fig:kleeneabstractproof} would be written with the fold combinator applied
to subexpressions for the $nil$ and $cons$ cases,
\(
h = fold~nil~(cons \circ id \otimes cons \circ assoc^{-1})
\). Notice in the $cons$ case of the fold, we manually reassociate
$(\literal{a} \otimes \literal{a}) \otimes \literal{a}^{*}$ to
$\literal{a} \otimes (\literal{a} \otimes \literal{a}^{*})$, then act on the
left by the identity and on the right by $cons$ to get a parse
$\literal{a} \otimes \literal{a}^{*}$, and finally we act again by $cons$ to
finally produce a parse of $\literal{a}^{*}$.
Even in the small example from
\cref{fig:kleeneabstractproof} we must manage additional complexity --- such as
manually reassociations --- and
this only grows more complex in larger programs. On the other hand, a benefit of this shallow embedding is that
the parsers are immediately available to a larger Agda development, as
they are just normal Agda code. In future work we will look to
implement a type checker for a syntax closer to the presentation in
this paper, while maintaining the easy integration with an
existing proof assistant.

\section{Related and Future Work}
\label{sec:discussion}

\subsection{Related Work}
\paragraph{Grammars as (Linear) Types}
Lambek's original syntactic calculus \cite{lambek58} describes a
logical system for linguistic derivations, and it can be given
semantics inside of a non-commutative biclosed monoidal category
\cite{lambek1988categorial}. This led to many uses of non-commutative
linear logic and lambda calculi in linguistics
\cite{buszkowskiTypeLogicsGrammar2003} --- including mechanized
categorial grammar parsers \cite{Guillaume2024,ranta-2011}. This style
of grammar formalism has gone by the names Lambek calculus or
\emph{categorial grammar}, and it is equal in expressivity to
context-free grammars.  The existing works on categorial grammar are
different in nature to our approach: they are based on non-commutative
linear logic, but their terms do not include elimination rules, and so
can only express parse trees, not verified parsers.

The most similar prior work to our own is Luo's Lambek calculus with
dependent types \cite{luo}. They present two systems: one like ours where linear
types may only depend on non-linear ones, and another that
allows linear types to depend on other linear types and supports
directed $\Pi$ and $\Sigma$ types that have no analog in our
system. They do not provide a semantics for these connectives, and it
is unclear how to interpret their connectives in our grammar
semantics. Further, we provide several examples showing that
\theoryabbv is a practical system for describing grammar formalisms
and parsers, and it is unclear if these could be implemented in their calculus.

Frisch and Cardelli introduced the use of a simple type system to reason about
regular expressions up to weak equivalence \cite{frischCardelli}.
Henglein and Nielsen build on this type-theoretic view of grammars, pointing out that the values of the types correspond precisely to the parse trees \cite{henglein_regular_2011}.
\theoryabbv extends this view of grammars as types to a much broader class of grammars, as well as providing a syntax and equational theory for \emph{parse transformers}, showing that the terms can be viewed not just as parse trees, but as intrinsically verified parsers.

\citet{elliottSymbolicAutomaticDifferentiation2021} uses essentially the same denotational semantics as ours, interpreting regular expressions as type-valued functions on strings. Our denotational semantics of \theoryabbv extends this to a much broader class of grammars.

\paragraph{Dependent Linear Types}
Our syntax for non-commutative dependent linear type theory is based on the
dependent commutative linear type theory of
\citet{krishnaswami_integrating_2015}, itself an extension of Benton's
linear-non-linear calculus \cite{benton1994}. A distinct feature of these
systems is that linear types and non-linear types are distinct sorts, so the
linear logic $!$ operator is not a primitive, but is instead definable, for
instance in \theoryabbv as $!A = \LinSigTy{\_}{\ltonl A}{I}$.  V\'ak\'ar
develops a dependent linear type theory where instead the non-linear types are
accessed using the $!$ modality, similar to Girard's original approach
\cite{vakar2015,girard_linear_1987}. Additionally, V\'ak\'ar develops a general
categorical semantics for their system, whereas we have only developed a single
denotational model. It should be straightforward to adapt V\'ak\'ar's general
semantics approach to a non-commutative variant that would apply to
\theoryabbv. This may have applications in finding alternative models, or
developing logical relations proofs categorically.

\paragraph{Relation to Separation Logic}

\theoryabbv is similar in spirit to separation logic
\cite{reynolds_separation_2002}. Semantically, they are closely
related: linear types in \theoryabbv denote families of sets indexed
by a monoid of strings, whereas separation logic formulae typically
denote families of predicates indexed by an ordered partial
commutative monoid of worlds \cite{jung_higher-order_2016}. The
monoidal structure in both cases is are instances of the
category-theoretic notion of Day convolution monoidal structure
\cite{day1970construction}. From a separation-logic perspective, our
notion of memory is very primitive: a memory shape is just a string of
characters and the state of the memory is never allowed to evolve.

This semantic connection to separation logic suggests an avenue of
future work: to develop a program logic based on non-commutative
separation logic for verifying imperative implementations of
parsers. This could be implemented by modifying an existing separation
logic implementation or embedding the logic within \theoryabbv.

\subsection{Future Work}
In this work we have demonstrated feasibility of \theoryabbv for the
verification of formal grammar theory and sound-by-construction parsers. In
future work, we aim to extend \theoryabbv to be a practical tool for developing
verified parsing components of larger verified software systems.

\paragraph{Verified Parser Generators}
We aim to extend our work on parsing to verify other algorithms. For instance,
our regular expression parser makes somewhat arbitrary choices
when the grammar is ambiguous. In the future we aim to verify Frisch and
Cardelli's \emph{greedy} algorithm for regular expression
disambiguation. Defining this algorithm in \theoryabbv would provide soundness
by construction, but it is not obvious if the greediness property could be
verified easily as well.

We also aim to adapt the approach used for our context-free grammar parsers to
$\LLL$ and $\LRR$/$\LALRR$ parsers, with the aim of developing a shared library
of intrinsically verified parsing utilities.
We would also like to investigate how high-performance verified parsers
can be implemented in \theoryabbv. This might be done by developing an efficient
compiler for \theoryabbv directly or by developing a verified compiler to an
imperative system within \theoryabbv itself.

\paragraph{Semantic Actions}

Our verification has mainly focused on the verification that a parser outputs a
correct concrete syntax tree for a grammar. However, in practice, parsers are
combined with a \emph{semantic action} that emits an \emph{abstract} syntax tree
that omits superfluous syntactic details that are unneeded in later stages of
the overall program. We can define a semantic action in \theoryabbv for a linear
type $A$ with semantic outputs in a non-linear type $X$ to be a function $\ltonl
{(A \lto \LinSigTy{\_}{X}{\top})}$. That is, a semantic action is a function that
produces a semantic element of $X$ from the concrete parses of $A$. In future
work, we aim to study the question of verifying efficient implementations of
parsers with semantic actions, and integrating them into larger verified
systems.

\paragraph{Implementation}

Our Agda prototype implementation serves as a useful proof of concept
for showing what can be implemented in \theoryabbv, but it has
downsides we aim to address in future work. Firstly, it would be
preferable to work with the more intuitive type theoretic syntax we
have used in this work, rather than the combinator-style our shallow
embedding requires. Additionally, Agda itself does not have a
high-performance implementation, and so the parsers we implement in
Agda do not have competitive performance to industry parser
generators. In future work we aim to study if we can embed a proof of
the correctness of a parser generator that produces imperative
programs, and if the correctness of those imperative
programs can be proven within \theoryabbv.

\paragraph{Type Checking and Semantic Analysis}
Our focus in this work has been on the verification of parsers for
grammars over strings, but because \theoryabbv allows for the
definition of arbitrarily powerful grammars, the system could also be
used in principle for more sophisticated semantic analysis such as
scope checking or type checking. Alternatively, we could more directly
encode type type systems as linear types in a modified version of
\theoryabbv where linear types are not grammars over strings, but
\emph{type systems} over trees. This could analogously serve as a
framework for verified type checking and static analysis.

\newpage

\begin{acks}
  This material is based upon work supported by the Air Force Office of
  Scientific Research under Grant No.~\grantnum{}{FA9550-23-1-0760}, the ERC
  Consolidator Grant BLAST and the ARIA programme on Safeguarded AI. Any
  opinions, findings, and conclusions or recommendations expressed in this
  material are those of the authors and do not necessarily reflect the views of
  the U.S. Department of Defense, the European Research Council or the Advanced
  Research and Invention Agency.
\end{acks}

\section*{Data-Availability Statement}
The Agda implementation of \theoryabbv and the formalization of our proofs are archived on Zenodo \cite{schaefer_2025_15243560} and the source code is available \href{https://github.com/maxsnew/grammars-and-semantic-actions/tree/PLDI25Artifact}{on GitHub}.

\bibliography{refs.bib}


\begin{thebibliography}{38}


\ifx \showCODEN    \undefined \def \showCODEN     #1{\unskip}     \fi
\ifx \showISBNx    \undefined \def \showISBNx     #1{\unskip}     \fi
\ifx \showISBNxiii \undefined \def \showISBNxiii  #1{\unskip}     \fi
\ifx \showISSN     \undefined \def \showISSN      #1{\unskip}     \fi
\ifx \showLCCN     \undefined \def \showLCCN      #1{\unskip}     \fi
\ifx \shownote     \undefined \def \shownote      #1{#1}          \fi
\ifx \showarticletitle \undefined \def \showarticletitle #1{#1}   \fi
\ifx \showURL      \undefined \def \showURL       {\relax}        \fi
\providecommand\bibfield[2]{#2}
\providecommand\bibinfo[2]{#2}
\providecommand\natexlab[1]{#1}
\providecommand\showeprint[2][]{arXiv:#2}

\bibitem[Altenkirch et~al\mbox{.}(2015)]%
        {altenkirch_indexed_2015}
\bibfield{author}{\bibinfo{person}{Thorsten Altenkirch}, \bibinfo{person}{Neil
  Ghani}, \bibinfo{person}{Peter Hancock}, \bibinfo{person}{Conor Mcbride},
  {and} \bibinfo{person}{Peter Morris}.} \bibinfo{year}{2015}\natexlab{}.
\newblock \showarticletitle{Indexed containers}.
\newblock \bibinfo{journal}{\emph{Journal of Functional Programming}}
  \bibinfo{volume}{25} (\bibinfo{date}{Jan.} \bibinfo{year}{2015}).
\newblock
\href{https://doi.org/10.1017/S095679681500009X}{doi:\nolinkurl{10.1017/S095679681500009X}}


\bibitem[Benton(1994)]%
        {benton1994}
\bibfield{author}{\bibinfo{person}{P.~N. Benton}.}
  \bibinfo{year}{1994}\natexlab{}.
\newblock \showarticletitle{A Mixed Linear and Non-Linear Logic: Proofs, Terms
  and Models} \emph{(\bibinfo{series}{CSL 1994})}.
\newblock
\href{https://doi.org/10.1007/BFb0022251}{doi:\nolinkurl{10.1007/BFb0022251}}


\bibitem[Buszkowski(2003)]%
        {buszkowskiTypeLogicsGrammar2003}
\bibfield{author}{\bibinfo{person}{W. Buszkowski}.}
  \bibinfo{year}{2003}\natexlab{}.
\newblock \bibinfo{booktitle}{\emph{Type Logics in Grammar}}.
\newblock \bibinfo{publisher}{Springer Netherlands},
  \bibinfo{address}{Dordrecht}, \bibinfo{pages}{337--382}.
\newblock
\showISBNx{978-94-017-3598-8}
\href{https://doi.org/10.1007/978-94-017-3598-8_12}{doi:\nolinkurl{10.1007/978-94-017-3598-8_12}}


\bibitem[Chomsky(1963)]%
        {chom1963}
\bibfield{author}{\bibinfo{person}{Noam Chomsky}.}
  \bibinfo{year}{1963}\natexlab{}.
\newblock \showarticletitle{Formal Properties of Grammars}.
\newblock \bibinfo{journal}{\emph{Handbook of Mathematical Psychology}}
  \bibinfo{volume}{II} (\bibinfo{year}{1963}), \bibinfo{pages}{323--418}.
\newblock
\urldef\tempurl%
\url{https://archive.org/details/handbookofmathem017893mbp/page/322/mode/2up}
\showURL{%
Retrieved April 9, 2025 from \tempurl}


\bibitem[Cockett(1993)]%
        {Cockett_1993}
\bibfield{author}{\bibinfo{person}{J.~R.~B. Cockett}.}
  \bibinfo{year}{1993}\natexlab{}.
\newblock \showarticletitle{Introduction to distributive categories}.
\newblock \bibinfo{journal}{\emph{Mathematical Structures in Computer Science}}
  \bibinfo{volume}{3}, \bibinfo{number}{3} (\bibinfo{year}{1993}),
  \bibinfo{pages}{277–307}.
\newblock
\href{https://doi.org/10.1017/S0960129500000232}{doi:\nolinkurl{10.1017/S0960129500000232}}


\bibitem[Coquand(2013)]%
        {coquandPresheafModel}
\bibfield{author}{\bibinfo{person}{Thierry Coquand}.}
  \bibinfo{year}{2013}\natexlab{}.
\newblock \bibinfo{title}{Presheaf model of type theory}.
  (\bibinfo{year}{2013}).
\newblock
\urldef\tempurl%
\url{https://www.cse.chalmers.se/~coquand/presheaf.pdf}
\showURL{%
Retrieved April 8, 2025 from \tempurl}


\bibitem[Danielsson(2010)]%
        {danielssonTotalParserCombinators2010}
\bibfield{author}{\bibinfo{person}{Nils~Anders Danielsson}.}
  \bibinfo{year}{2010}\natexlab{}.
\newblock \showarticletitle{Total Parser Combinators}. In
  \bibinfo{booktitle}{\emph{Proceedings of the 15th ACM SIGPLAN International
  Conference on Functional Programming}} (Baltimore, Maryland, USA)
  \emph{(\bibinfo{series}{ICFP '10})}. \bibinfo{pages}{285--296}.
\newblock
\href{https://doi.org/10.1145/1863543.1863585}{doi:\nolinkurl{10.1145/1863543.1863585}}


\bibitem[Day(1970)]%
        {day1970construction}
\bibfield{author}{\bibinfo{person}{Brian~John Day}.}
  \bibinfo{year}{1970}\natexlab{}.
\newblock \emph{\bibinfo{title}{Construction of biclosed categories}}.
\newblock \bibinfo{thesistype}{Ph.\,D. Dissertation}.
  \bibinfo{school}{University of New South Wales PhD thesis}.
\newblock
\urldef\tempurl%
\url{https://web.science.mq.edu.au/~street/DayPhD.pdf}
\showURL{%
Retrieved April 8, 2025 from \tempurl}


\bibitem[Edelmann et~al\mbox{.}(2020)]%
        {EdelmannZippy2020}
\bibfield{author}{\bibinfo{person}{Romain Edelmann}, \bibinfo{person}{Jad
  Hamza}, {and} \bibinfo{person}{Viktor Kun\v{c}ak}.}
  \bibinfo{year}{2020}\natexlab{}.
\newblock \showarticletitle{Zippy LL(1) parsing with derivatives}. In
  \bibinfo{booktitle}{\emph{Proceedings of the 41st ACM SIGPLAN Conference on
  Programming Language Design and Implementation}} (Online)
  \emph{(\bibinfo{series}{PLDI 2020})}.
\newblock
\href{https://doi.org/10.1145/3385412.3385992}{doi:\nolinkurl{10.1145/3385412.3385992}}


\bibitem[Elliott(2021)]%
        {elliottSymbolicAutomaticDifferentiation2021}
\bibfield{author}{\bibinfo{person}{Conal Elliott}.}
  \bibinfo{year}{2021}\natexlab{}.
\newblock \showarticletitle{Symbolic and automatic differentiation of
  languages}.
\newblock \bibinfo{journal}{\emph{Proceedings of the ACM on Programming
  Languages (PACMPL)}} \bibinfo{volume}{5}, \bibinfo{number}{ICFP}, Article
  \bibinfo{articleno}{78} (\bibinfo{date}{Aug.} \bibinfo{year}{2021}).
\newblock
\href{https://doi.org/10.1145/3473583}{doi:\nolinkurl{10.1145/3473583}}


\bibitem[Frisch and Cardelli(2004)]%
        {frischCardelli}
\bibfield{author}{\bibinfo{person}{Alain Frisch} {and} \bibinfo{person}{Luca
  Cardelli}.} \bibinfo{year}{2004}\natexlab{}.
\newblock \showarticletitle{Greedy Regular Expression Matching}. In
  \bibinfo{booktitle}{\emph{Automata, Languages, and Programming}} (Turku,
  Finland) \emph{(\bibinfo{series}{ICALP 2004})}.
\newblock
\href{https://doi.org/10.1007/978-3-540-27836-8_53}{doi:\nolinkurl{10.1007/978-3-540-27836-8_53}}


\bibitem[Gambino and Hyland(2003)]%
        {gambino_wellfounded_2004}
\bibfield{author}{\bibinfo{person}{Nicola Gambino} {and}
  \bibinfo{person}{Martin Hyland}.} \bibinfo{year}{2003}\natexlab{}.
\newblock \showarticletitle{Wellfounded {Trees} and {Dependent} {Polynomial}
  {Functors}}. In \bibinfo{booktitle}{\emph{Types for {Proofs} and {Programs}}}
  (Torino, Italy) \emph{(\bibinfo{series}{TYPES 2003})}.
  \bibinfo{pages}{210--225}.
\newblock
\href{https://doi.org/10.1007/978-3-540-24849-1_14}{doi:\nolinkurl{10.1007/978-3-540-24849-1_14}}


\bibitem[Girard(1987)]%
        {girard_linear_1987}
\bibfield{author}{\bibinfo{person}{Jean-Yves Girard}.}
  \bibinfo{year}{1987}\natexlab{}.
\newblock \showarticletitle{Linear logic}.
\newblock \bibinfo{journal}{\emph{Theoretical Computer Science}}
  \bibinfo{volume}{50}, \bibinfo{number}{1} (\bibinfo{year}{1987}),
  \bibinfo{pages}{1--101}.
\newblock
\href{https://doi.org/10.1016/0304-3975(87)90045-4}{doi:\nolinkurl{10.1016/0304-3975(87)90045-4}}


\bibitem[Gratzer et~al\mbox{.}(2021)]%
        {lmcs:7713}
\bibfield{author}{\bibinfo{person}{Daniel Gratzer}, \bibinfo{person}{G.~A.
  Kavvos}, \bibinfo{person}{Andreas Nuyts}, {and} \bibinfo{person}{Lars
  Birkedal}.} \bibinfo{year}{2021}\natexlab{}.
\newblock \showarticletitle{{Multimodal Dependent Type Theory}}.
\newblock \bibinfo{journal}{\emph{Logical Methods in Computer Science}}
  \bibinfo{volume}{17}, \bibinfo{number}{3} (\bibinfo{date}{July}
  \bibinfo{year}{2021}).
\newblock
\href{https://doi.org/10.46298/lmcs-17(3:11)2021}{doi:\nolinkurl{10.46298/lmcs-17(3:11)2021}}


\bibitem[Guillaume et~al\mbox{.}(2024)]%
        {Guillaume2024}
\bibfield{author}{\bibinfo{person}{Maxime Guillaume}, \bibinfo{person}{Sylvain
  Pogodalla}, {and} \bibinfo{person}{Vincent Tourneur}.}
  \bibinfo{year}{2024}\natexlab{}.
\newblock \showarticletitle{ACGtk: A Toolkit for Developing and Running
  Abstract Categorial Grammars}. In \bibinfo{booktitle}{\emph{Functional and
  Logic Programming}} (Kumamoto, Japan) \emph{(\bibinfo{series}{17th
  International Symposium, FLOPS 2024})}. \bibinfo{pages}{13–30}.
\newblock
\href{https://doi.org/10.1007/978-981-97-2300-3_2}{doi:\nolinkurl{10.1007/978-981-97-2300-3_2}}


\bibitem[Henglein and Nielsen(2011)]%
        {henglein_regular_2011}
\bibfield{author}{\bibinfo{person}{Fritz Henglein} {and} \bibinfo{person}{Lasse
  Nielsen}.} \bibinfo{year}{2011}\natexlab{}.
\newblock \showarticletitle{Regular expression containment: coinductive
  axiomatization and computational interpretation}. In
  \bibinfo{booktitle}{\emph{Proceedings of the 38th Annual ACM SIGPLAN-SIGACT
  Symposium on Principles of Programming Languages}} (Austin, Texas, USA)
  \emph{(\bibinfo{series}{POPL '11})}. \bibinfo{pages}{385–398}.
\newblock
\href{https://doi.org/10.1145/1926385.1926429}{doi:\nolinkurl{10.1145/1926385.1926429}}


\bibitem[Hofmann(1997)]%
        {Hofmann_1997}
\bibfield{author}{\bibinfo{person}{Martin Hofmann}.}
  \bibinfo{year}{1997}\natexlab{}.
\newblock \bibinfo{booktitle}{\emph{Syntax and Semantics of Dependent Types}}.
\newblock \bibinfo{publisher}{Cambridge University Press},
  \bibinfo{pages}{79–130}.
\newblock


\bibitem[Jourdan et~al\mbox{.}(2012)]%
        {jourdanValidatingLRParsers2012}
\bibfield{author}{\bibinfo{person}{Jacques-Henri Jourdan},
  \bibinfo{person}{François Pottier}, {and} \bibinfo{person}{Xavier Leroy}.}
  \bibinfo{year}{2012}\natexlab{}.
\newblock \showarticletitle{Validating {LR}(1) {Parsers}}. In
  \bibinfo{booktitle}{\emph{Programming Languages and Systems, 21st European
  Symposium on Programming}} (Tallinn, Estonia) \emph{(\bibinfo{series}{ESOP
  2012})}. \bibinfo{pages}{397–416}.
\newblock
\href{https://doi.org/10.1007/978-3-642-28869-2_20}{doi:\nolinkurl{10.1007/978-3-642-28869-2_20}}


\bibitem[Jung et~al\mbox{.}(2016)]%
        {jung_higher-order_2016}
\bibfield{author}{\bibinfo{person}{Ralf Jung}, \bibinfo{person}{Robbert
  Krebbers}, \bibinfo{person}{Lars Birkedal}, {and} \bibinfo{person}{Derek
  Dreyer}.} \bibinfo{year}{2016}\natexlab{}.
\newblock \showarticletitle{Higher-order ghost state}. In
  \bibinfo{booktitle}{\emph{Proceedings of the 21st ACM SIGPLAN International
  Conference on Functional Programming}} (Nara, Japan)
  \emph{(\bibinfo{series}{ICFP 2016})}. \bibinfo{pages}{256--269}.
\newblock
\href{https://doi.org/10.1145/2951913.2951943}{doi:\nolinkurl{10.1145/2951913.2951943}}


\bibitem[Krishnaswami et~al\mbox{.}(2015)]%
        {krishnaswami_integrating_2015}
\bibfield{author}{\bibinfo{person}{Neelakantan~R. Krishnaswami},
  \bibinfo{person}{Pierre Pradic}, {and} \bibinfo{person}{Nick Benton}.}
  \bibinfo{year}{2015}\natexlab{}.
\newblock \showarticletitle{Integrating Linear and Dependent Types}. In
  \bibinfo{booktitle}{\emph{Proceedings of the 42nd Annual ACM SIGPLAN-SIGACT
  Symposium on Principles of Programming Languages}} (Mumbai, India)
  \emph{(\bibinfo{series}{POPL '15})}. \bibinfo{pages}{17–30}.
\newblock
\href{https://doi.org/10.1145/2676726.2676969}{doi:\nolinkurl{10.1145/2676726.2676969}}


\bibitem[Lambek(1958)]%
        {lambek58}
\bibfield{author}{\bibinfo{person}{Joachim Lambek}.}
  \bibinfo{year}{1958}\natexlab{}.
\newblock \showarticletitle{The Mathematics of Sentence Structure}.
\newblock \bibinfo{journal}{\emph{The American Mathematical Monthly}}
  \bibinfo{volume}{65}, \bibinfo{number}{3} (\bibinfo{year}{1958}),
  \bibinfo{pages}{154--170}.
\newblock
\href{https://doi.org/10.1080/00029890.1958.11989160}{doi:\nolinkurl{10.1080/00029890.1958.11989160}}


\bibitem[Lambek(1988)]%
        {lambek1988categorial}
\bibfield{author}{\bibinfo{person}{J. Lambek}.}
  \bibinfo{year}{1988}\natexlab{}.
\newblock \bibinfo{booktitle}{\emph{Categorial and Categorical Grammars}}.
\newblock \bibinfo{pages}{297--317}.
\newblock
\showISBNx{978-94-015-6878-4}
\href{https://doi.org/10.1007/978-94-015-6878-4_11}{doi:\nolinkurl{10.1007/978-94-015-6878-4_11}}


\bibitem[Lasser et~al\mbox{.}(2021)]%
        {lasserCoStarVerifiedALL2021}
\bibfield{author}{\bibinfo{person}{Sam Lasser}, \bibinfo{person}{Chris
  Casinghino}, \bibinfo{person}{Kathleen Fisher}, {and} \bibinfo{person}{Cody
  Roux}.} \bibinfo{year}{2021}\natexlab{}.
\newblock \showarticletitle{{{CoStar}}: A Verified {{ALL}}(*) Parser}. In
  \bibinfo{booktitle}{\emph{Proceedings of the 42nd ACM SIGPLAN International
  Conference on Programming Language Design and Implementation}} (Online)
  \emph{(\bibinfo{series}{PLDI 2021})}. \bibinfo{pages}{420--434}.
\newblock
\href{https://doi.org/10.1145/3453483.3454053}{doi:\nolinkurl{10.1145/3453483.3454053}}


\bibitem[Lei{\ss}(1992)]%
        {leis_towards_1992}
\bibfield{author}{\bibinfo{person}{Haas Lei{\ss}}.}
  \bibinfo{year}{1992}\natexlab{}.
\newblock \showarticletitle{Towards Kleene Algebra with recursion}
  \emph{(\bibinfo{series}{CSL 1991})}.
\newblock
\href{https://doi.org/10.1007/BFb0023771}{doi:\nolinkurl{10.1007/BFb0023771}}


\bibitem[Leroy(2009)]%
        {leroy_formal_2009}
\bibfield{author}{\bibinfo{person}{Xavier Leroy}.}
  \bibinfo{year}{2009}\natexlab{}.
\newblock \showarticletitle{Formal verification of a realistic compiler}.
\newblock \bibinfo{journal}{\emph{Commun. ACM}} \bibinfo{volume}{52},
  \bibinfo{number}{7} (\bibinfo{date}{July} \bibinfo{year}{2009}),
  \bibinfo{pages}{107–115}.
\newblock
\href{https://doi.org/10.1145/1538788.1538814}{doi:\nolinkurl{10.1145/1538788.1538814}}


\bibitem[Luo(2018)]%
        {luo}
\bibfield{author}{\bibinfo{person}{Zhaohui Luo}.}
  \bibinfo{year}{2018}\natexlab{}.
\newblock \showarticletitle{Substructural Calculi with Dependent Types}. In
  \bibinfo{booktitle}{\emph{Linearity \& TLLA Joint Workshop}} (Oxford, UK).
\newblock
\href{https://doi.org/10.29007/qrqp}{doi:\nolinkurl{10.29007/qrqp}}


\bibitem[Nakov and Nordvall~Forsberg(2022)]%
        {nakov_quantitative_2022}
\bibfield{author}{\bibinfo{person}{Georgi Nakov} {and} \bibinfo{person}{Fredrik
  Nordvall~Forsberg}.} \bibinfo{year}{2022}\natexlab{}.
\newblock \showarticletitle{Quantitative {Polynomial} {Functors}}. In
  \bibinfo{booktitle}{\emph{27th International Conference on Types for Proofs
  and Programs (TYPES 2021)}} (Leiden, The Netherlands)
  \emph{(\bibinfo{series}{Leibniz International Proceedings in Informatics
  (LIPIcs)}, Vol.~\bibinfo{volume}{239})}. \bibinfo{pages}{10:1--10:22}.
\newblock
\href{https://doi.org/10.4230/LIPIcs.TYPES.2021.10}{doi:\nolinkurl{10.4230/LIPIcs.TYPES.2021.10}}


\bibitem[Rabin and Scott(1959)]%
        {rabinFiniteAutomataTheir1959}
\bibfield{author}{\bibinfo{person}{M.~O. Rabin} {and} \bibinfo{person}{D.
  Scott}.} \bibinfo{year}{1959}\natexlab{}.
\newblock \showarticletitle{Finite {Automata} and {Their} {Decision}
  {Problems}}.
\newblock \bibinfo{journal}{\emph{IBM Journal of Research and Development}}
  \bibinfo{volume}{3}, \bibinfo{number}{2} (\bibinfo{date}{April}
  \bibinfo{year}{1959}), \bibinfo{pages}{114--125}.
\newblock
\href{https://doi.org/10.1147/rd.32.0114}{doi:\nolinkurl{10.1147/rd.32.0114}}


\bibitem[Ranta(2011)]%
        {ranta-2011}
\bibfield{author}{\bibinfo{person}{Aarne Ranta}.}
  \bibinfo{year}{2011}\natexlab{}.
\newblock \bibinfo{booktitle}{\emph{{Grammatical Framework: Programming with
  Multilingual Grammars}}}.
\newblock \bibinfo{publisher}{{CSLI Publications}},
  \bibinfo{address}{Stanford}.
\newblock
\newblock
\shownote{ISBN-10: 1-57586-626-9 (Paper), 1-57586-627-7 (Cloth)}.


\bibitem[Reynolds(2002)]%
        {reynolds_separation_2002}
\bibfield{author}{\bibinfo{person}{J.C. Reynolds}.}
  \bibinfo{year}{2002}\natexlab{}.
\newblock \showarticletitle{Separation logic: a logic for shared mutable data
  structures}. In \bibinfo{booktitle}{\emph{Proceedings 17th {Annual} {IEEE}
  {Symposium} on {Logic} in {Computer} {Science}}} (Copenhagen, Denmark)
  \emph{(\bibinfo{series}{LICS 2002})}. \bibinfo{pages}{55--74}.
\newblock
\href{https://doi.org/10.1109/LICS.2002.1029817}{doi:\nolinkurl{10.1109/LICS.2002.1029817}}


\bibitem[Schaefer et~al\mbox{.}(2025)]%
        {schaefer_2025_15243560}
\bibfield{author}{\bibinfo{person}{Steven Schaefer}, \bibinfo{person}{Nathan
  Varner}, \bibinfo{person}{Pedro~Henrique Azevedo~de Amorim}, {and}
  \bibinfo{person}{Max~S. New}.} \bibinfo{year}{2025}\natexlab{}.
\newblock \bibinfo{booktitle}{\emph{Agda Formalization of "Intrinsic
  Verification of Parsers and Formal Grammar Theory in Dependent Lambek
  Calculus"}}.
\newblock
\href{https://doi.org/10.5281/zenodo.15243560}{doi:\nolinkurl{10.5281/zenodo.15243560}}


\bibitem[Seely(1989)]%
        {seely89}
\bibfield{author}{\bibinfo{person}{R.~A.~G. Seely}.}
  \bibinfo{year}{1989}\natexlab{}.
\newblock \showarticletitle{Linear logic, {$*$}-autonomous categories and
  cofree coalgebras}.
\newblock In \bibinfo{booktitle}{\emph{Categories in computer science and logic
  ({B}oulder, {CO}, 1987)}}. \bibinfo{series}{Contemp. Math.},
  Vol.~\bibinfo{volume}{92}. \bibinfo{publisher}{Amer. Math. Soc., Providence,
  RI}, \bibinfo{pages}{371--382}.
\newblock
\showISBNx{0-8218-5100-4}
\href{https://doi.org/10.1090/conm/092/1003210}{doi:\nolinkurl{10.1090/conm/092/1003210}}


\bibitem[{The Agda Community}(2024)]%
        {The_Agda_Community_Cubical_Agda_Library_2024}
\bibfield{author}{\bibinfo{person}{{The Agda Community}}.}
  \bibinfo{year}{2024}\natexlab{}.
\newblock \bibinfo{booktitle}{\emph{{Cubical Agda Library}}}.
\newblock
\urldef\tempurl%
\url{https://github.com/agda/cubical}
\showURL{%
Retrieved April 8, 2025 from \tempurl}


\bibitem[Thompson(1968)]%
        {thompsonProgrammingTechniquesRegular1968}
\bibfield{author}{\bibinfo{person}{Ken Thompson}.}
  \bibinfo{year}{1968}\natexlab{}.
\newblock \showarticletitle{Programming {Techniques}: {Regular} expression
  search algorithm}.
\newblock \bibinfo{journal}{\emph{Commun. ACM}} \bibinfo{volume}{11},
  \bibinfo{number}{6} (\bibinfo{date}{June} \bibinfo{year}{1968}),
  \bibinfo{pages}{419--422}.
\newblock
\showISSN{0001-0782, 1557-7317}
\href{https://doi.org/10.1145/363347.363387}{doi:\nolinkurl{10.1145/363347.363387}}


\bibitem[{Univalent Foundations Program}(2013)]%
        {hottbook}
\bibfield{author}{\bibinfo{person}{The {Univalent Foundations Program}}.}
  \bibinfo{year}{2013}\natexlab{}.
\newblock \bibinfo{booktitle}{\emph{Homotopy Type Theory: Univalent Foundations
  of Mathematics}}.
\newblock \bibinfo{address}{Institute for Advanced Study}.
\newblock
\urldef\tempurl%
\url{https://homotopytypetheory.org/book}
\showURL{%
Retrieved April 8, 2025 from \tempurl}


\bibitem[V{\'a}k{\'a}r(2015)]%
        {vakar2015}
\bibfield{author}{\bibinfo{person}{Matthijs V{\'a}k{\'a}r}.}
  \bibinfo{year}{2015}\natexlab{}.
\newblock \showarticletitle{A Categorical Semantics for Linear Logical
  Frameworks}. In \bibinfo{booktitle}{\emph{Foundations of Software Science and
  Computation Structures}} (London, UK) \emph{(\bibinfo{series}{{FoSSaCS
  2015}})}. \bibinfo{pages}{102--116}.
\newblock
\href{https://doi.org/10.1007/978-3-662-46678-0_7}{doi:\nolinkurl{10.1007/978-3-662-46678-0_7}}


\bibitem[Vezzosi et~al\mbox{.}(2019)]%
        {VezzosiMortbergAbel2019}
\bibfield{author}{\bibinfo{person}{Andrea Vezzosi}, \bibinfo{person}{Anders
  M\"{o}rtberg}, {and} \bibinfo{person}{Andreas Abel}.}
  \bibinfo{year}{2019}\natexlab{}.
\newblock \showarticletitle{Cubical Agda: A Dependently Typed Programming
  Language with Univalence and Higher Inductive Types}.
\newblock \bibinfo{journal}{\emph{Proceedings of the ACM on Programming
  Languages (PACMPL)}} \bibinfo{volume}{3}, \bibinfo{number}{ICFP}
  (\bibinfo{date}{July} \bibinfo{year}{2019}).
\newblock
\href{https://doi.org/10.1145/3341691}{doi:\nolinkurl{10.1145/3341691}}


\bibitem[Yang et~al\mbox{.}(2011)]%
        {yangFindingUnderstandingBugs}
\bibfield{author}{\bibinfo{person}{Xuejun Yang}, \bibinfo{person}{Yang Chen},
  \bibinfo{person}{Eric Eide}, {and} \bibinfo{person}{John Regehr}.}
  \bibinfo{year}{2011}\natexlab{}.
\newblock \showarticletitle{Finding and understanding bugs in C compilers}. In
  \bibinfo{booktitle}{\emph{Proceedings of the 32nd ACM SIGPLAN Conference on
  Programming Language Design and Implementation}} (San Jose, California, USA)
  \emph{(\bibinfo{series}{PLDI '11})}. \bibinfo{pages}{283–294}.
\newblock
\href{https://doi.org/10.1145/1993498.1993532}{doi:\nolinkurl{10.1145/1993498.1993532}}


\end{thebibliography}

\newpage
\ifarxiv
\appendix
\section{Syntax}
\label{sec:syntax}

\begin{figure}
  \begin{align*}
    \el\left(\Var\,M\right) B &= B\,M\\
    \el\left(\mathsf{K}\,A\right) B &= A\\
    \el\left(\bigoplus A\right) B &= \bigoplus_{y:Y}\el(A y)B\\
    \el\left(\bigamp A\right) B &= \bigamp_{y:Y}\el(A y)B\\
    \el\left(A \otimes A'\right) B &= \el(A)B \otimes \el(A')B \\
    \map\left(\Var\,M\right)\,f &= f\,M\\
    \map\left(\K\,A\right)\,f &= \lambda a. a\\
    \map\left(\bigoplus A\right)\,f &= \lambda a. \letin{\oplusinj y {a_y}}{a}{\oplusinj y {\map(A\,y)\,f\,a_y}}\\
    \map\left(\bigwith\,A\right)\,f &= \lambda a. \withlamb{y}{\map(A\,y)\,f\,(\withprj y a)}\\
    \map\left(A \otimes A'\right)\,f &= \lambda b. \letin{(a,a')}{b}{(\map(A)\,f\,a,\map(A')\,f\,a')}
  \end{align*}
  \caption{Strictly positive functors functorial actions}
  \label{fig:spf-act}
\end{figure}

In this section we include the elided syntactic forms, as well as
definitions and basic properties of linear and non-linear
substitution.

\begin{itemize}
\item In \cref{fig:spf-act}, we provide the functorial actions of the $\el$
  and $\map$ operations used to define the indexed inductive types in
  \cref{fig:iilt}.
\item In \cref{fig:small-nonlin-ty,fig:small-lin-ty} we give the inference rules
  for the smallness judgments on nonlinear types and linear types, respectively.
\item In \cref{fig:full-non-linear-types}, we include the full set of rules for
  non-linear types, and in \cref{fig:jdg-eq-nonlinear} we provide their
  judgmental equalities.
\item In \cref{fig:jdgeq}, we give the judgmental equalities between
  linear terms in \theoryabbv.
\end{itemize}

\subsection{Non-linear Types}
\label{subsec:nonlinty}
We define sum types, list types, and $Fin~n$ from primitive non-linear types.

For $X , Y : U$, define the sum type $X + Y$ as,

\[
X + Y = \SigTyLimit{b}{Bool}{elim_{Bool}(U, X, Y)(b)}
\]

For $n : Nat$, define $Fin~n$ as,

\begin{align*}
  Fin~0 &= \bot \\
  Fin~(suc~n) &= 1 + (Fin~n)
\end{align*}

For $X : U$, define $List~X$ as,

\[
  List~X = \SigTyLimit{n}{Nat}{\PiTyLimit{k}{Fin~n}{X}}
  \]

\begin{figure}
  \footnotesize
  \begin{mathpar}
    \inferrule{~}{\Gamma \vdash 1 \isSmall}

    \inferrule{~}{\Gamma \vdash Bool \isSmall}

    \inferrule{~}{\Gamma \vdash \bot \isSmall}

    \inferrule{~}{\Gamma \vdash Nat \isSmall}

    \inferrule{\Gamma \vdash X \isSmall \and \Gamma,x:X \vdash Y \isSmall}{\Gamma \vdash \PiTyLimit {x}{X}{Y} \isSmall}

    \inferrule{\Gamma \vdash X \isSmall \and \Gamma,x:X \vdash Y \isSmall}{\Gamma \vdash \SigTyLimit {x}{X}{Y} \isSmall}

    \inferrule{\Gamma \vdash M : U}{\Gamma \vdash \unquoteTy M \isSmall}

    \inferrule{\Gamma \vdash A \isSmallLin}{\Gamma \vdash {\ltonl A} \isSmall}

    \inferrule{\Gamma \vdash X \isSmall \and \nonlinterm \Gamma M X \and
      \nonlinterm \Gamma N X}{\Gamma \vdash {M =_{X} N} \isSmall}
  \end{mathpar}
  \caption{Small Non-linear Types}
  \label{fig:small-nonlin-ty}
\end{figure}

\begin{figure}
  \footnotesize
  \begin{mathpar}
    \inferrule{~}{\Gamma \vdash I \isSmallLin}

    \inferrule{c \in \Sigma}{\Gamma \vdash {\literal{c}} \isSmallLin}

    \inferrule{\Gamma \vdash A \isSmallLin \and \Gamma \vdash B \isSmallLin}{
      \Gamma \vdash {A \otimes B} \isSmallLin}

    \inferrule{\Gamma \vdash A \isSmallLin \and \Gamma \vdash B \isSmallLin}{
      \Gamma \vdash {A \lto B} \isSmallLin}

    \inferrule{\Gamma \vdash A \isSmallLin \and \Gamma \vdash B \isSmallLin}{
      \Gamma \vdash {A \tol B} \isSmallLin}

    \inferrule{\Gamma \vdash X \isSmall \and \Gamma,x:X \vdash A \isSmallLin}{\Gamma \vdash \LinSigTyLimit {x}{X}{A} \isSmallLin}

    \inferrule{\Gamma \vdash X \isSmall \and \Gamma,x:X \vdash A \isSmallLin}{\Gamma \vdash \LinPiTyLimit {x}{X}{A} \isSmallLin}

    \inferrule{\Gamma \vdash A \isSmallLin \and
      \Gamma \vdash B \isSmallLin \and
      \nonlinterm \Gamma f {\ltonl{(A \lto B)}} \and
      \nonlinterm \Gamma g {\ltonl{(A \lto B)}}}{\Gamma \vdash {\equalizer
        {a}{f}{g}} \isSmallLin}

    \inferrule{\Gamma \vdash M : L}{\Gamma \vdash \unquoteTy M \isSmallLin}
  \end{mathpar}
  \caption{Small Linear Types}
  \label{fig:small-lin-ty}
\end{figure}

\begin{figure}
  \begin{mathpar}
  \footnotesize
    \boxed{\inferrule{\ctxwffjdg \Gamma X}{\nonlinterm \Gamma M X}}

    \inferrule{~}{\nonlinterm {\Gamma , x : X , \Gamma'} {x} {X}}

    \inferrule{\nonlinterm \Gamma M Y \and \ctxwffjdg \Gamma {X \equiv Y}}{\nonlinterm {\Gamma} {M} {X}}

    \inferrule{~}{\nonlinterm {\Gamma} {tt} {1}}

    \inferrule
    {\nonlinterm \Gamma M \bot \and \ctxwffjdg \Gamma X}
    {\nonlinterm \Gamma {elim_{\bot}(X, M)} {X}}

    \inferrule{~}{\nonlinterm \Gamma {true} {Bool}}

    \inferrule{~}{\nonlinterm \Gamma {false} {Bool}}

    \inferrule
    {
      \ctxwffjdg {\Gamma , b : Bool} {X(b)} \and
      \nonlinterm \Gamma {M_{0}} {X(false)} \and
      \nonlinterm \Gamma {M_{1}} {X(true)}
    }
    {\nonlinterm
        {\Gamma, b : Bool}
        {elim_{Bool}(X, M_0, M_1)(b)}
        {X(b)}
    }

    \\

    \inferrule{~}{\nonlinterm {\Gamma} {0} {Nat}}

    \inferrule{\nonlinterm {\Gamma} {n} {Nat}}{\nonlinterm {\Gamma} {suc~n} {Nat}}

    \inferrule{\ctxwffjdg {\Gamma , n : Nat} {X(n)} \and
           \nonlinterm {\Gamma} {M_0} {X(0)} \and
           \nonlinterm {\Gamma, n : Nat, x : X(n)} {M_{suc}(n , x)} {X(suc~n)}}
         {\nonlinterm
           {\Gamma, n : Nat}
           {elim_{Nat}(X,M_0,M_{suc})(n)}
           {X(n)}}

    \\

    \inferrule{\nonlinterm \Gamma M X \and \nonlinterm \Gamma {N} {\subst Y {M}
        {x}}}{\nonlinterm {\Gamma} {(M , N)} {\SigTyLimit {x} {X} {Y}}}

    \inferrule{\nonlinterm {\Gamma} {M} {\SigTyLimit {x} {X} {Y}}}
              {\nonlinterm {\Gamma} {M.fst} X}

    \inferrule{\nonlinterm {\Gamma} {M} {\SigTyLimit {x} {X} {Y}}}
              {\nonlinterm {\Gamma} {M.snd} {\subst Y {M.fst} {x}}}

    \inferrule{\ctxwffjdg \Gamma {\PiTyLimit {x} {X} {Y}} \and \nonlinterm
      {\Gamma , x : X} {M} {Y}}
              {\nonlinterm {\Gamma} {\lamb {x} {M}} {\PiTyLimit {x} {X} {Y}}}

    \inferrule{\nonlinterm {\Gamma} {M} {\PiTyLimit {x} {X} {Y}} \and
      \nonlinterm \Gamma N X}
              {\nonlinterm \Gamma {M~N} {\subst Y {N} {x}}}

    \inferrule{\ctxwffjdg \Gamma X}{\nonlinterm \Gamma {\quoteTy X} U}

    \inferrule{\linctxwffjdg \Gamma A}{\nonlinterm \Gamma {\quoteTy A} L}

    \inferrule{\linterm \Gamma \cdot e A}{\nonlinterm \Gamma e \ltonl{A}}

  \end{mathpar}
  \caption{Non-linear Typing}
  \label{fig:full-non-linear-types}
\end{figure}

\begin{figure}
  \footnotesize
  \begin{mathpar}
    \boxed{
      \inferrule
      {\nonlinterm \Gamma M X \and
       \nonlinterm \Gamma N X}
      {\nonlinterm \Gamma {M \equiv N} {X}}}

  \inferrule{\nonlinterm \Gamma M 1 \and \nonlinterm \Gamma N 1}{\nonlineq \Gamma M N 1}


  \inferrule{\nonlinterm {\Gamma , x : X} {M} {Y} \and \nonlinterm {\Gamma} {N} {X}}{\nonlineq \Gamma {\left(\lamb {x} {M} \right)~N} {\subst M {N}
      {x}} {X}}

  \inferrule{\nonlinterm \Gamma M {\PiTyLimit {x}{X}{Y}}}{\nonlineq \Gamma {M} {\lamb {x} {M~x}} {\PiTyLimit{x}{X}{Y}}}

  \inferrule{\nonlinterm \Gamma M {X} \and \nonlinterm {\Gamma , x : X} {N}
    {\subst Y {M} {x}}}{\nonlineq \Gamma {(M , N).fst} {M} {X}}

  \inferrule{\nonlinterm \Gamma M {X} \and \nonlinterm {\Gamma , x : X} {N}
    {\subst Y {M} {x}}}{\nonlineq \Gamma {(M , N).snd} {N} {\subst Y {M} {x}}}

  \inferrule{\nonlinterm \Gamma {M} {\SigTy {x} {X} {Y}}}{\nonlineq \Gamma {(M.fst , M.snd)} {M} {\SigTyLimit{x}{X}{Y}}}

  \inferrule
  {
    \nonlinterm \Gamma {M_{0}} {X(false)} \and
    \nonlinterm \Gamma {M_{1}} {X(true)}
  }
  {\nonlineq
      {\Gamma}
      {elim_{Bool}(X, M_0,M_1)(false)}
      {M_0}
      {X(b)}
  }

  \inferrule
  {
    \nonlinterm \Gamma {M_{0}} {X(false)} \and
    \nonlinterm \Gamma {M_{1}} {X(true)}
  }
  {\nonlineq
      {\Gamma}
      {elim_{Bool}(X, M_0,M_1)(true)}
      {M_1}
      {X(b)}
  }

  \inferrule{\nonlinterm {\Gamma} {M_0} {X(0)} \and
      \nonlinterm {\Gamma, n : Nat, x : X(n)} {M_{suc}(n , x)} {X(suc~n)}}
      {\nonlineq {\Gamma} {elim_{Nat}(X,M_0,M_{suc})(0)} {M_0} {X(0)}}

  \inferrule{\nonlinterm {\Gamma} {M_0} {X(0)} \and
      \nonlinterm {\Gamma, n : Nat, x : X(n)} {M_{suc}(n , x)} {X(suc~n)}
    }
      {\nonlineq {\Gamma, n : Nat} {elim_{Nat}(X,M_0,M_{suc})(suc~n)} {M_{suc}(n ,
          elim_{Nat}(X,M_0,M_{suc})(n))} {X(suc~n)}}

  \inferrule{\nonlinterm \Gamma M U}{\nonlineq \Gamma {\quoteTy {\unquoteTy {M}}}
  {M} {U}}

  \inferrule{\nonlinterm \Gamma M L}{\nonlineq \Gamma {\quoteTy {\unquoteTy {M}}}
  {M} {L}}
  \end{mathpar}
  \caption{Judgmental equality for non-linear terms}
  \label{fig:jdg-eq-nonlinear}
\end{figure}

\begin{figure}
  \footnotesize
  \begin{mathpar}
    \boxed{
      \inferrule
      {\linterm \Gamma \Delta e A \and
       \linterm \Gamma \Delta e' A}
      {\Gamma ; \Delta \vdash e \equiv e' : A}}

%
%
    \inferrule{\Gamma ; \Delta , a : A \vdash e : C \\ \Gamma ; \Delta' \vdash e' : A }{\Gamma; \Delta,\Delta' \vdash \app {(\lamblto a e)} {e'} \equiv \subst e {e'} a : C}
    \and
    \inferrule{\Gamma ; \Delta \vdash e : A \lto B}{\Gamma; \Delta \vdash e \equiv \lamblto a {\app e a} : A \lto B}
    \and
    \inferrule{\Gamma ; a : A , \Delta \vdash e : C \\ \Gamma ; \Delta' \vdash e' : A }{\Gamma; \Delta , \Delta' \vdash \apptol {(\lambtol a e)} {e'} \equiv \subst e {e'} a : C}
    \and
    \inferrule{\Gamma ; \Delta \vdash e : B \tol A}{\Gamma; \Delta \vdash e \equiv \lambtol a {\apptol e a} : B \tol A}
    \and
    \inferrule{\Gamma , x : X \vdash e : A \\ \Gamma \vdash M : X}{\Gamma; \Delta \vdash \app {(\dlamb x e)} {M} \equiv \subst {e} M x : C}
    \and
    \inferrule{\Gamma ; \Delta \vdash e : \LinPiTyLimit {x} {X} {A}}{\Gamma; \Delta \vdash e \equiv \dlamb x {\app e x} : \LinPiTyLimit x X A}
    \and
%
%
%
    \and
    \inferrule{\Gamma ; \Delta_1 , \Delta_2 \vdash e : C}{\Gamma; \Delta_1 , \Delta_2 \vdash \letin {()} {()} e \equiv e : C}
    \and
    \inferrule{\Gamma ; \Delta_2 \vdash e : I \\ \Gamma ; \Delta_1 , a : A , \Delta_3 \vdash e' : C}{\Gamma; \Delta_1 , \Delta_3 \vdash \letin {()} e {\subst {e'} {()} a} \equiv \subst {e'} e a : C}
    \and
    \inferrule{\Gamma ; \Delta_2 \vdash e : A \\ \Gamma ; \Delta_3 \vdash e' : B \\ \Gamma ; \Delta_1 , a : A , b : B , \Delta_4 \vdash e'' : C}{\Gamma; \Delta_1 , \Delta_2, \Delta_3 , \Delta_4 \vdash \letin {(a, b)} {(e , e')} e'' \equiv e'' \{ e/a , e'/b \} : C}
    \and
    \inferrule{\Gamma ; \Delta_2 \vdash e : A \otimes B \\ \Gamma ; \Delta_1 , c : A \otimes B , \Delta_3 \vdash e' : C }{\Gamma; \Delta_1, \Delta_2, \Delta_3 \vdash \letin {(a , b)} e {\subst {e'} {(a, b)} c} \equiv \subst {e'} e c : C}
    \and
    \inferrule{\Gamma \vdash M : X \\ \Gamma ; \Delta_2 \vdash e : A \\ \Gamma , x : X \vdash \Delta_1 , a : A , \Delta_3 }{\Gamma;\Delta_1 , \Delta_2 , \Delta_3 \vdash \letin {\sigma~x~a} {\sigma~M~e} {e'} \equiv e' \{ M/x , e/a \} : C}
    \and
    \inferrule{\Gamma ; \Delta_1, y : \LinSigTyLimit {x} {X} {A}, \Delta_2 \vdash e' : C \\ \Gamma ; \Delta_2 \vdash e : \LinSigTyLimit {x} {X} {A}}{\Gamma; \Delta_1 , \Delta_2 , \Delta_3 \vdash \letin {\sigma~x~a} e {\subst {e'} {\sigma~x~a} y} \equiv \subst {e'} e y : C}
    \and
    \inferrule
    {\Gamma ; \Delta \vdash e : A \\ \Gamma ; \Delta \vdash f~e \equiv g~e}
    {\Gamma ; \Delta \vdash \equalizerpi {\equalizerin {e}} \equiv e : A}
    \and
    \inferrule
    {\Gamma ; \Delta \vdash e : \equalizer e f g}
    {\Gamma ; \Delta \vdash \equalizerin {\equalizerpi {e}} \equiv e :
      \equalizer e f g}
\end{mathpar}
  \caption{Judgmental equality for linear terms}
  \label{fig:jdgeq}
\end{figure}

\subsection{Substitutions}
\begin{definition}
  The set of (non-linear) substitutions $\gamma \in
  \textrm{Subst}(\Gamma,\Gamma')$ where $\Gamma \isCtx$ and $\Gamma'
  \isCtx$ is defined by recursion on $\Gamma$:
  \begin{align*}
    \textrm{Subst}(\Gamma,\cdot) &= \{\cdot \}\\
    \textrm{Subst}(\Gamma,\Gamma',x:A) &= \{ (\gamma,M/x) \pipe \gamma \in \textrm{Subst}(\Gamma,\Gamma') \wedge \Gamma \vdash M : A[\gamma] \}
  \end{align*}
  simultaneously with an action of substitution on types, terms,
  etc. in the standard way.

  It is straightforward, but laborious to establish that all forms in
  the type theory that are parameterized by a non-linear context
  $\Gamma$ support the admissible actions of a substitution $\gamma
  \in \Subst(\Gamma',\Gamma)$ given in \cref{fig:non-linear-substitution,fig:substact}.
\end{definition}

\begin{figure}
  \footnotesize
  \begin{mathpar}
    \inferrule
    {\ctxwffjdg \Gamma X}
    {\ctxwffjdg {\Gamma'} {X[\gamma]}}

    \inferrule
    {\Gamma \vdash X \isSmall}
    {\Gamma' \vdash X[\gamma] \isSmall}

    \inferrule
    {\ctxwffjdg \Gamma {X \equiv Y}}
    {\ctxwffjdg {\Gamma'} {X[\gamma] \equiv Y[\gamma]}}

    \inferrule
    {\nonlinterm \Gamma {M} {X}}
    {\nonlinterm {\Gamma'} {M[\gamma]} {X[\gamma]}}

    \inferrule
    {\nonlineq \Gamma {M} {N} {X}}
    {\nonlineq {\Gamma'} {M[\gamma]} {N[\gamma]} {X[\gamma]}}

    \inferrule
    {\Gamma \vdash \Delta \isLinCtx}
    {\Gamma' \vdash \Delta[\gamma] \isLinCtx}

    \inferrule
    {\Gamma \vdash A \isLinTy}
    {\Gamma' \vdash A[\gamma] \isLinTy}

    \inferrule
    {\linctxwffjdg \Gamma {A \equiv B}}
    {\linctxwffjdg {\Gamma'} {A[\gamma] \equiv B[\gamma]}}

    \inferrule
    {\Gamma;\Delta \vdash e : A}
    {\Gamma'; \Delta[\gamma] \vdash e[\gamma] : A[\gamma]}

    \inferrule
    {\Gamma;\Delta \vdash e \equiv f : A}
    {\Gamma'; \Delta[\gamma] \vdash e[\gamma] \equiv f[\gamma] : A[\gamma]}
  \end{mathpar}
  \caption{Non-linear substitution}
  \label{fig:non-linear-substitution}
\end{figure}

\begin{definition}
  Let $\Gamma \vdash \Delta \isLinCtx$ and $\Gamma \vdash \Delta'
  \isLinCtx$. The set of linear substitutions $\Subst(\Delta',\Delta)$
  is defined by recursion on $\Delta$:
  \begin{align*}
    \Subst(\Delta',\cdot) &= \{ \cdot \pipe \Delta' = \cdot \}\\
    \Subst(\Delta',(\Delta,a:A)) &= \{ (\delta, e/a) \pipe \delta \in \Subst(\Delta_1, \Delta) , \Delta_2 \vdash e : A , \Delta' = (\Delta_1,\Delta_2)\}
  \end{align*}

  Given substitutions $\delta_1 \in \Subst(\Delta_1', \Delta_1)$ and
  $\delta_2 \in \Subst(\Delta_2', \Delta_2)$, we can define a
  substitution $\delta_1,\delta_2 \in
  \Subst((\Delta_1',\Delta_2'),(\Delta_1,\Delta_2))$. Furthermore, for
  any substitution $\delta \in \Subst(\Delta,(\Delta_1,\Delta_2))$, we
  can deconstruct $\delta = \delta_1,\delta_2$ with $\delta_1 \in
  \Subst(\Delta_1', \Delta_1)$ and $\delta_2 \in \Subst(\Delta_2',
  \Delta_2)$.
\end{definition}

\begin{figure}
  \footnotesize
  \begin{align*}
    a[e/a] &= e\\
    (e_1,e_2)[\delta_1,\delta_2] &= (e_1[\delta_1], e_2[\delta_2])\\
    (\letin {(a , b)} e {e'})[\delta_1,\delta_2,\delta_3] &= \letin{(a,b)} {e[\delta_2]}{e'[\delta_1,a/a,b/b,\delta_2]}\\
    ()[\cdot] &= ()\\
    \letin{()} e {e'}[\delta_1,\delta_2,\delta_3] &= \letin{()} {e[\delta_2]}{e'[\delta_1,a/a,b/b,\delta_2]}\\
    (\lamblto {a} e)[\delta] &= \lamblto a {e[\delta,a/a]}\\
    (\applto {e'} {e})[\delta_1,\delta_2] &= \applto {e'[\delta_1]} {e[\delta_2]}\\
    (\lamblto {a} e)[\delta] &= \lamblto a {e[\delta,a/a]}\\
    (\applto {e'} {e})[\delta_1,\delta_2] &= \applto {e'[\delta_1]} {e[\delta_2]}\\
    (\lambtol {a} e)[\delta] &= \lamblto a {e[a/a,\delta]}\\
    (\apptol {e'} {e})[\delta_1,\delta_2] &= \apptol {e'[\delta_1]} {e[\delta_2]}\\
    (\dlamb x e)[\delta] &= \dlamb x {e[\delta]}\\
    (e\,.\pi\,M)[\delta] &= (e[\delta]\,.\pi\,M)\\
    (\sigma\,M\,e)[\delta] &= \sigma\,M\,e[\delta]\\
    (\letin{\sigma\,x\,a}{e}{e'})[\delta_1,\delta_2,\delta_3] &= \letin{\sigma\,x\,a} {e[\delta_2]}{e'[\delta_1,\delta_3]}\\
    (\equalizerin{e})[\delta] &= \equalizerin{e[\delta]}\\
    (\equalizerpi{e})[\delta] &= \equalizerpi{e[\delta]}
  \end{align*}
\caption{Action of substitution on linear terms}
\label{fig:substact}
\end{figure}

\begin{definition}
  Given any $\Gamma ; \Delta \vdash e : A$ and $\delta \in
  \Subst(\Delta',\Delta)$, we define the action of the substitution on
  $e$ in \cref{fig:substact}, frequently using the inversion principle to split
  the substitution into constituent components.
  By induction on linear term and equality judgments, we establish the
  following admissible rules for $\delta \in \Subst(\Delta',\Delta)$:
  \begin{mathpar}
    \inferrule
    {\Gamma; \Delta \vdash e : A}
    {\Gamma; \Delta' \vdash e[\delta] : A}

    \inferrule
    {\Gamma; \Delta \vdash e \equiv f : A}
    {\Gamma; \Delta' \vdash e[\delta] \equiv f[\delta'] : A}
  \end{mathpar}
\end{definition}

\section{Denotational Semantics}
\label{sec:denotational}
Here we extend the denotational semantics from
\cref{sec:semantics-and-metatheory} to cover all of \theoryname
syntax. Here we will freely use that the category of grammars is a
complete, co-complete biclosed monoidal category, and use categorical
notation for the constructions in the denotational semantics. For
example, we will use the same notation $I,\otimes,\lto,\tol$ for the
biclosed monoidal structure of $\Grammar$ that we do for the
corresponding syntactic notions.

\begin{definition}[Denotation of Linear Contexts]
  The semantics of linear contexts $\Gamma \vdash \Delta \isLinCtx$ is
  defined as follows:
  \max{make sure this aligns with our interpretation of $\lto$/$\tol$}
  \begin{align*}
    \sem{\cdot}\,\gamma &= I \\
    \sem{\Delta,x:A}\,\gamma &= \sem{\Delta}\,\gamma \otimes \sem{A}\gamma \\
  \end{align*}
\end{definition}

\begin{definition}[Denotation of Linear Substitutions]
  The semantics of a linear substitution
  $\delta : \textrm{Subst}(\Delta' , \Delta)$ are given as maps
  $\semg{\delta} : \semg{\Delta'} \to {\Delta}$. Define $\semg{\delta}$ by
  recursion on $\delta$:
  \begin{align*}
    \semg{\cdot} &= id_{I} \\
    \semg{\delta,e/a} &= \semg{\delta} \otimes \semg{e} \circ m_{\Delta_{1},\Delta_{2}}
  \end{align*}
where $\Delta_{2} \vdash e : A$ and $\Delta' = \Delta_{1},\Delta_{2}$.
\end{definition}

\begin{theorem}
  For any $\Gamma \vdash \Delta_1,\Delta_2 \isLinCtx$ and $\gamma \in
  \sem{\Gamma}$ there is a natural isomorphism $m_{\Delta_1,\Delta_2} :
  \sem{\Delta_1, \Delta_2}\gamma \cong \sem{\Delta_1}\gamma \otimes
  \sem{\Delta_2}\gamma$.

  This can be extended to a sequence of contexts of any length.
\end{theorem}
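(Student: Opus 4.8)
The plan is to prove the statement by induction on the second context $\Delta_2$, using the two defining clauses $\sem{\cdot}\gamma = I$ and $\sem{\Delta,a:A}\gamma = \sem{\Delta}\gamma\otimes\sem{A}\gamma$ together with the coherence isomorphisms of the biclosed monoidal structure on $\Grammar$. Throughout we work at the categorical level, as the paper suggests, rather than unwinding the set-theoretic descriptions of $\otimes$; the latter would also work but only obscures the argument with nested-tuple shuffling.

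First, for the base case $\Delta_2 = \cdot$ we have $\sem{\Delta_1,\cdot}\gamma = \sem{\Delta_1}\gamma$ on the left and $\sem{\Delta_1}\gamma\otimes\sem{\cdot}\gamma = \sem{\Delta_1}\gamma\otimes I$ on the right, so $m_{\Delta_1,\cdot}$ is taken to be the inverse of the right unitor $\rho_{\sem{\Delta_1}\gamma}$. For the inductive step $\Delta_2 = \Delta_2',a:A$, unfolding the definitions gives $\sem{\Delta_1,(\Delta_2',a:A)}\gamma = \sem{\Delta_1,\Delta_2'}\gamma\otimes\sem{A}\gamma$ and $\sem{\Delta_1}\gamma\otimes\sem{\Delta_2',a:A}\gamma = \sem{\Delta_1}\gamma\otimes(\sem{\Delta_2'}\gamma\otimes\sem{A}\gamma)$. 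The induction hypothesis supplies an iso $m_{\Delta_1,\Delta_2'}\colon\sem{\Delta_1,\Delta_2'}\gamma\cong\sem{\Delta_1}\gamma\otimes\sem{\Delta_2'}\gamma$, and composing its whiskering by $\sem{A}\gamma$ with the associator yields $m_{\Delta_1,\Delta_2',a:A} = \alpha\circ(m_{\Delta_1,\Delta_2'}\otimes\mathrm{id}_{\sem{A}\gamma})$. Since each factor is invertible in $\Grammar$, so is the composite. Naturality — whether read as naturality in $\gamma$, in the grammars occurring in $\Delta_1,\Delta_2$, or with respect to the linear substitutions of the following definition — follows from the naturality of $\alpha$ and $\rho$ and the functoriality of $\otimes$; Mac Lane coherence further guarantees that the resulting isomorphism does not depend on the incidental bracketing chosen when unfolding the clauses, and in particular that $m_{\cdot,\Delta}$ agrees with the left unitor.

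The extension to a finite sequence $\Delta_1,\dots,\Delta_n$ is then a second, outer induction on $n$: one sets $m_{\Delta_1,\dots,\Delta_n}$ to be $(\mathrm{id}_{\sem{\Delta_1}\gamma}\otimes m_{\Delta_2,\dots,\Delta_n})\circ m_{\Delta_1,(\Delta_2,\dots,\Delta_n)}$, an iso by the binary case and the inner induction hypothesis. I do not anticipate a genuine mathematical obstacle here; the only real work is the coherence bookkeeping — fixing a canonical association and citing Mac Lane's theorem so that $m$ is well-defined and compatible with reassociation — and this is precisely the care that will be needed when $m$ is later invoked to interpret the splitting and recombination of linear contexts under substitution.
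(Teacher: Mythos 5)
Your construction is exactly the paper's: induction on $\Delta_2$ with $m_{\Delta_1,\cdot}=\rho^{-1}$ and $m_{\Delta_1,(\Delta_2,a:A)}=\alpha\circ(m_{\Delta_1,\Delta_2}\otimes\mathrm{id})$, with invertibility, naturality, and the $n$-ary extension all handled by coherence as the paper implicitly does. The proposal is correct and takes essentially the same approach.
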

\begin{proof}
Construct $m_{\Delta_{1},\Delta_{2}}$ by recursion on $\Delta_{2}$.
\begin{align*}
  m_{\Delta_{1},\cdot} & = \rho^{-1} \\
  m_{\Delta_{1},(\Delta_{2}, a : A)} & = \alpha \circ m_{\Delta_{1},\Delta_{2}} \otimes id
\end{align*}
\end{proof}

\begin{lemma}
  \label{lem:subst}
  For each term $\Delta \vdash e : A$ and substitution
  $\delta : \textrm{Subst}(\Delta', \Delta)$, the semantics of $\delta$ acting
  on $e$ splits into the composition $\semg{e[\delta]} = \semg{e} \circ \semg{\delta}$.
\end{lemma}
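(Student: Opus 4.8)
The plan is to prove the statement by induction on the derivation of $\Gamma ; \Delta \vdash e : A$, treating each clause of the definition of $e[\delta]$ in \cref{fig:substact} in turn. Two routine auxiliary facts are established first, each by a short induction: the empty linear substitution into $\cdot$ denotes $\ident_I$ and the identity substitution $\ident_\Delta$ (sending each variable to itself) denotes $\ident_{\semg\Delta}$; and the denotation of a concatenated substitution $\delta_1,\delta_2$ factors through the splitting isomorphisms, i.e. $\semg{\delta_1,\delta_2}$ equals $\semg{\delta_1}\otimes\semg{\delta_2}$ conjugated by the maps $m$ of the preceding theorem. Together these let us move freely between the syntactic decomposition of a linear context into $\Delta_1,\Delta_2,\Delta_3$ and the tensor factorisation of $\semg\Delta$ supplied by $m$, while respecting the non-commutative left-to-right order of the $\Delta_i$.

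The base cases are immediate. In the variable case $\Delta = (a : A)$, $e = a$, and $\delta = (e'/a)$ for some $\Delta' \vdash e' : A$; then $a[\delta] = e'$, while $\semg a$ is a unitor and $\semg\delta$ composed with it is $\semg{e'}$, so $\semg{a[\delta]} = \semg{e'} = \semg a \circ \semg\delta$ up to the coherence of the unitors. In the coercion case $\Gamma ; \cdot \vdash M : A$, the only linear substitution is the empty one, which acts trivially on both sides. For the introduction forms of $\otimes$, $\lto$, $\tol$, $\bigamp$, $\bigoplus$, the equalizer, and $\roll$, one splits $\delta$ in accordance with how the premises split $\Delta$, applies the inductive hypotheses to the subderivations, and concludes by functoriality of $\otimes$, naturality of the currying/uncurrying isomorphisms for $\lto$ and $\tol$, naturality of $m$, and Mac Lane coherence. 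In the binder cases $\lamblto a e$ and $\lambtol a e$ the substitution is pushed under the binder as $\delta, a/a$; here the first auxiliary fact identifies the semantic effect of extending $\delta$ by the identity on the fresh variable with tensoring $\semg\delta$ by $\ident$, which is exactly what the naturality square for currying requires. The remaining non-eliminator forms ($e\,.\pi\,M$, $\sigma\,M\,e$, $\applto{\cdot}{\cdot}$, $\apptol{\cdot}{\cdot}$, $\equalizerpi{e}$, $\fold$) proceed identically, and the equality judgments are preserved for free since both sides of each $\beta\eta$ rule transform uniformly under $[\delta]$.

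The real work, and the step I expect to be the main obstacle, is the elimination-by-pattern-matching forms $\letin{()}{e}{e'}$, $\letin{(a,b)}{e}{e'}$, and $\letin{\sigma\,x\,a}{e}{e'}$. There the context is split three ways $\Delta_1,\Delta_2,\Delta_3$, the scrutinee uses $\Delta_2$, and the continuation $e'$ is typed in a context where $\Delta_2$ has been replaced by the bound variables; correspondingly $\delta$ must be decomposed as $\delta_1,\delta_2,\delta_3$, with $\delta_2$ substituted into the scrutinee and $\delta_1,\ident,\delta_3$ into the continuation. Showing that the two resulting parse transformers agree amounts to discharging a sizeable diagram assembled from the isomorphisms $m$, the associators, the functorial action of $\otimes$, and the universal property of the eliminator (i.e. its $\beta$-rule denotation); the delicate part is purely the bookkeeping of which tensor factor is which and keeping the left-to-right ordering of the $\Delta_i$ consistent throughout. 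Once these diagrams commute, the lemma follows, and it is exactly the form in which it is consumed when verifying the $\beta$-equations in \cref{fig:jdgeq} and the inductive-type equations of \cref{fig:iilt} in the denotational model.
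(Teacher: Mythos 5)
Your proposal is correct and matches the approach the paper implicitly takes: the paper states \cref{lem:subst} without proof, treating it as a routine structural induction on typing derivations over the clauses of \cref{fig:substact}, discharged by the splitting isomorphisms $m$, naturality of the structural maps, and monoidal coherence --- which is exactly what you carry out, including the correct identification of the three-way context splits in the pattern-matching eliminators as the only genuinely fiddly cases. The one nitpick is that the $I$ and $\otimes$ eliminators have no universal property to appeal to (only the $\bigoplus$ eliminator does, via distributivity and the coproduct); their cases are closed by unfolding the defining diagrams and invoking coherence, but this does not affect the soundness of your argument.
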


\subsection{Grammar Semantics for Linear Terms}
Here we define denotations of linear terms. Note that the denotations
interpret typing derivations, not raw terms, as the data of how
contexts are split is needed in order to construct the correct
associator functions. Further, we demonstrate that the denotational
semantics respects the equational theory of \theoryabbv. The
correctness of the equational theory heavily relies on the
\emph{coherence theorem} for monoidal categories. The coherence
theorem says that any diagram in a monoidal category constructed using
only associators $\alpha_{A,B,C} : (A \otimes B) \otimes C \cong A
\otimes (B \otimes C)$, unitors $\rho_A : A \otimes I \cong A$ and
$\lambda_A : I \otimes A \cong A$ and compositions and tensor products
of these, commutes. We call a morphism built in this way a
\emph{generalized associator}.

\subsubsection{Variables}
Note that the denotation of a singleton context $a : A$ is given as
\[
  \semg{a : A} = \semg{\cdot , a : A} = I \otimes \semg{A}
\]
So for $a: A \vdash a : A$, the denotation of a single variable term
$\semg{a} : \semg{a : A} \to \semg{A}$ is given by the left unitor
\[
  \semg{a} = \lambda
\]

\subsubsection{Linear Unit}
\paragraph{$I$-Introduction}
$\semg{()} : \semg{\cdot} \to \semg{I}$.
\[
\semg{()} = id_{I}
\]

\paragraph{$I$-Elimination}
$\semg{\letin {()} e e'} : \semg{\Delta'_{1},\Delta,\Delta'_{2}} \to \semg{C}$
defined in the following diagram,

\begin{center}
\begin{tikzcd}
  \semg{\Delta'_1,\Delta,\Delta'_2} \arrow[r , "m_{(\Delta'_1,\Delta),\Delta'_2}"] &
  \semg{\Delta'_1,\Delta} \otimes \semg{\Delta'_2}
    \arrow[r , "m_{\Delta'_1,\Delta} \otimes id"] &
  \left(\semg{\Delta'_1} \otimes \semg{\Delta} \right) \otimes \semg{\Delta'_2}
    \arrow[d , "(id \otimes \semg{e}) \otimes id"] \\
    \semg{\Delta'_1,\Delta'_2} \arrow[d , "\semg{e'}"]
  & \arrow[l , "m^{-1}_{\Delta'_1,\Delta'_2}"]
    \semg{\Delta'_1} \otimes \semg{\Delta'_2}
  & \left(\semg{\Delta'_1} \otimes I \right) \otimes \semg{\Delta'_2}
  \arrow[l , "\rho \otimes id"] \\
  \semg{C}
\end{tikzcd}
\end{center}

We demonstrate that the denotations of the introduction and elimination forms
for $I$ obey the $\beta$ and $\eta$ equalities for $I$.

\paragraph{$I\beta$}
Given
$\Delta'_{1},\cdot,\Delta'_{2} \vdash e' : C$, the desired
$\beta$ law is
\[
  \semg{\letin {()} {()} {e'}} = \semg{e}
\]
\begin{proof}
\begin{align*}
  \semg{\letin {()} {()} {e'}}
  & = \semg{e'} \circ m^{-1}_{\Delta'_{1},\Delta'_{2}} \circ \rho \otimes id \circ (id \otimes \semg{()}) \otimes id \circ m_{\Delta_{1},\cdot} \otimes id \circ m_{(\Delta'_{1},\cdot),\Delta'_{2}} \\
  & = \semg{e'} \circ m^{-1}_{\Delta'_{1},\Delta'_{2}} \circ \rho \otimes id \circ (id \otimes id) \otimes id \circ m_{\Delta_{1},\cdot} \otimes id \circ m_{\Delta'_{1},\Delta'_{2}} \\
  & = \semg{e'} \circ m^{-1}_{\Delta'_{1},\Delta'_{2}} \circ \rho \otimes id \circ \rho^{-1} \otimes id \circ m_{\Delta'_{1},\Delta'_{2}} \\
  & = \semg{e'} \tag{coherence}
\end{align*}
\end{proof}

\paragraph{$I\eta$}
Similarly, for $\Delta_{1}, a : A, \Delta_{3} \vdash e' : C$ and
$\Delta_{2} \vdash e : I$ the desired $\eta$ law is
\[
\semg{\letin {()} {e} {e'[()/a]}} = \semg{e'[e/a]}
\]
However, through application of \cref{lem:subst} it suffices to handle the case
where $e$ is a variable $a'$. That is,
\begin{align*}
\letin {()} {e} {e'[()/a]} &= (\letin {()} {a'} {e'[()/a]})[e/a'] \\
e'[e/a] &= e'[a'/a][e/a]
\end{align*}
so without loss of generality we may take may take $e$ to be variable $a'$. We
will additionally use this style of argumentation when necessary throughout this section.

\begin{proof}
\begin{align*}
  \semg{\letin {()} {a'} {e'[()/a]}}
  & = \semg{e'} \circ m^{-1}_{\Delta'_{1},\Delta'_{2}} \circ \rho \otimes id \circ (id \otimes \semg{a'}) \otimes id \circ m_{\Delta_{1},\cdot} \otimes id \circ m_{\Delta'_{1},\Delta'_{2}} \\
  & = \semg{e'} \circ m^{-1}_{\Delta'_{1},\Delta'_{2}} \circ \rho \otimes id \circ (id \otimes \lambda) \otimes id \circ m_{\Delta_{1},\cdot} \otimes id \circ m_{\Delta'_{1},\Delta'_{2}} \\
  & = \semg{e'} \tag{coherence}
\end{align*}

Because
$m^{-1}_{\Delta'_{1},\Delta'_{2}} \circ \rho \otimes id \circ (id \otimes \lambda) \otimes id \circ m_{\Delta_{1},\cdot} \otimes id \circ m_{\Delta'_{1},\Delta'_{2}}$
is a composition of generalized associators from
$\semg{\Delta'_{1},\Delta'_{2}}$ to itself, it is equal to the identity by the coherence theorem for monoidal categories.

Further by \cref{lem:subst},
\begin{align*}
  \semg{e'[a'/a]}
  &= \semg{e'} \circ \semg{a'/a} \tag{\cref{lem:subst}} \\
  &= \semg{e'} \tag{coherence}
\end{align*}
Again by the coherence theorem, $\semg{a'/a} = id$. Thus, $\eta$ law for $I$
holds in the denotational semantics.
\end{proof}

\subsubsection{Tensor}
\paragraph{$\otimes$-Introduction}
$\semg{(e_{1}, e_{2})} : \semg{\Delta,\Delta'} \to \semg{A \otimes B}$ is given
by the diagram

\begin{center}
\begin{tikzcd}[column sep = 1in]
  \semg{\Delta,\Delta'} \arrow[r, "m_{\Delta,\Delta'}"] &
  \semg{\Delta} \otimes \semg{\Delta'} \arrow[r, "\semg{e_1} \otimes \semg{e_2}"] &
  \semg{A} \otimes \semg{G}
\end{tikzcd}
\end{center}

\paragraph{$\otimes$-Elimination}
$\semg{\letin {(a,b)} e f} : \semg{\Delta'_{1},\Delta,\Delta'_{2}} \to \semg{C}$
defined via the diagram,

\begin{center}
\begin{tikzcd}
  \semg{\Delta'_1,\Delta,\Delta'_2} \arrow[r , "m_{(\Delta'_1,\Delta),\Delta'_2}"] &
  \semg{\Delta'_1,\Delta} \otimes \semg{\Delta'_2}
    \arrow[r , "m_{\Delta'_1,\Delta} \otimes id"] &
  \left(\semg{\Delta'_1} \otimes \semg{\Delta} \right) \otimes \semg{\Delta'_2}
    \arrow[d , "(id \otimes \semg{e}) \otimes id"] \\
  \semg{c} & \arrow[l , "\semg{f}"]
  \left( \left( \semg{\Delta'_1} \otimes \semg{A} \right) \otimes \semg{B}\right) \otimes \semg{\Delta'_2}
  & \left(\semg{\Delta'_1} \otimes \left( \semg{A} \otimes \semg{B}\right) \right) \otimes \semg{\Delta'_2}
    \arrow[l , "\alpha \otimes id"]
\end{tikzcd}
\end{center}

\paragraph{$\otimes\beta$}
The desired $\beta$ equality for $\otimes$ is,
\[
  \semg{\letin {(a,b)} {(a',b')} {f}} = \semg{f[a'/a,b'/b]}
\]
\begin{proof}
The left hand side reduces as follows,
\begin{align*}
  \semg{\letin {(a,b)} {(a',b')} {f}}
  & = \semg{f} \circ \alpha \otimes id \circ (id \otimes \semg{(a' , b')}) \otimes id \circ m_{\Delta,\Delta'} \\
  & = \semg{f} \circ \alpha \otimes id \circ (id \otimes \lambda \otimes \lambda \circ m_{a : A, b : B})) \otimes id \circ m_{\Delta,\Delta'} \\
  & = \semg{f} \tag{coherence}
\end{align*}

Which is equal to the right hand side,
\begin{align*}
  \semg{f[a'/a,b'/b]}
  & = \semg{f} \circ \semg{a'/a,b'/b} \\
  & = \semg{f} \tag{coherence}
\end{align*}
\end{proof}

\paragraph{$\otimes\eta$}
The desired $\eta$ equality for $\otimes$ is,
\[
  \semg{\letin {(a,b)} {c'} {f[(a,b)/c]}} = \semg{f[c'/c]}
\]
\begin{proof}
\begin{align*}
  \semg{\letin {(a,b)} {c'} {f[(a,b)/c]}}
  & = \semg{f[(a,b)/c]} \circ \alpha \otimes id \circ (id \otimes \semg{c'}) \otimes id \circ m_{\Delta,\Delta'} \\
  & = \semg{f} \circ \semg{(a,b)/c} \circ \alpha \otimes id \circ (id \otimes \semg{c'}) \otimes id \circ m_{\Delta,\Delta'} \\
  & = \semg{f} \tag{coherence}
\end{align*}

\begin{align*}
  \semg{f[c'/c]}
  &= \semg{f} \circ \semg{c'/c} \tag{\cref{lem:subst}} \\
  &= \semg{f} \tag{coherence}
\end{align*}
\end{proof}

\subsubsection{$\lto$-Functions}
\paragraph{$\lto$-Introduction}
$\semg{\lamblto a e} : \semg{\Delta} \to \semg{A \lto B}$ is defined using the
natural isomorphism
$\phi : Hom(\semg{\Delta} \otimes \semg{A}, \semg{B}) \to Hom(\semg{\Delta}, \semg{A \lto B})$
that is provided by the adjunction between $\semg{- \otimes A}$ and $\semg{A \lto -}$.
\[
\semg{\lamblto a e} = \phi\left( \semg{e} \right)
\]
\paragraph{$\lto$-Elimination}
$\semg{\applto {e'} {e}} : \semg{\Delta,\Delta'} \to \semg{B}$ is defined by the
diagram,

\begin{center}
\begin{tikzcd}[column sep=.75in]
  \semg{\Delta,\Delta'} \arrow[r , "m_{\Delta,\Delta'}"] &
  \semg{\Delta} \otimes \semg{\Delta'} \arrow[r , "id \otimes \semg{e}"] &
  \semg{\Delta} \otimes \semg{A} \arrow[r , "\phi^{-1}\left( \semg{e'} \right)"]
  &
  \semg{B}
\end{tikzcd}
\end{center}

\paragraph{$\lto\beta$}
The $\beta$ rule for $\lto$ is given by,
\[
  \semg{\applto {(\lamblto {a} {e})} {a'}} = \semg{e[a'/a]}
\]
\begin{proof}
\begin{align*}
  \semg{\applto {(\lamblto {a} {e})} {a'}}
  & = \phi^{-1}(\semg{\lamblto {a} {e}}) \circ id \otimes \semg{a'} \circ m_{\Delta,\Delta'} \\
  & = \phi^{-1}(\phi(\semg{e})) \circ id \otimes \semg{a'} \circ m_{\Delta,\Delta'} \\
  & = \semg{e} \circ id \otimes \semg{a'} \circ m_{\Delta,\Delta'} \\
  & = \semg{e} \circ id \otimes \lambda \circ m_{\Delta,\Delta'} \\
  & = \semg{e} \tag{coherence}
\end{align*}

\begin{align*}
  \semg{e[a'/a]}
  & = \semg{e} \circ \semg{a'/a} \tag{\cref{lem:subst}} \\
  & = \semg{e} \tag{coherence}
\end{align*}
\end{proof}

\paragraph{$\lto\eta$}
The $\eta$ rule for $\lto$ is given by,
\[
  \semg{\lamblto {a} {\applto e a}} = \semg{e}
\]
\begin{proof}
\begin{align*}
  \semg{\lamblto {a} {\applto e a}}
  & = \phi(\semg{\applto e a}) \\
  & = \phi(\phi^{-1}(\semg{e}) \circ id \otimes \semg{a} \circ m_{\Delta,a:A}) \\
  & = \phi(\phi^{-1}(\semg{e}) \circ id \otimes \lambda \circ m_{\Delta,a:A}) \\
  & = \phi(\phi^{-1}(\semg{e})) \tag{coherence} \\
  & = \semg{e}
\end{align*}
\end{proof}

\subsubsection{$\tol$-Functions}
\paragraph{$\tol$-Introduction}
Just as with the other linear function type, we have an adjunction between
$\semg{A \otimes -}$ and $\semg{- \tol A}$. $\semg{\lambtol a e} : \semg{\Delta} \to \semg{B \tol A}$ is defined using the
natural isomorphism
$\psi : Hom(\semg{A} \otimes \semg{\Delta}, \semg{B}) \to Hom(\semg{\Delta}, \semg{B \tol A})$
induced by this adjunction. In particular, $\semg{\lambtol a e}$ is given by $\psi$
acting on the following diagram

\begin{center}
\begin{tikzcd}[column sep=.75in]
  \semg{A} \otimes \semg{\Delta} \arrow[r , "\lambda^{-1} \otimes id"] &
  \semg{a : A} \otimes \semg{\Delta} \arrow[r , "m^{-1}_{a : A , \Delta}"] &
  \semg{a : A , \Delta} \arrow[r , "\semg{e}"] &
  \semg{B}
\end{tikzcd}
\end{center}

\paragraph{$\tol$-Elimination}
The application of a $\tol$-function,
$\semg{\apptol {e} {e'}} : \semg{\Delta',\Delta} \to \semg{B}$ defined by the diagram

\begin{center}
\begin{tikzcd}[column sep=.75in]
  \semg{\Delta',\Delta} \arrow[r , "m_{\Delta',\Delta}"] &
  \semg{\Delta'} \otimes \semg{\Delta} \arrow[r , "\semg{e'} \otimes id"] &
  \semg{A} \otimes \semg{\Delta} \arrow[r , "\psi^{-1}\left( \semg{e} \right)"] &
  \semg{B}
\end{tikzcd}
\end{center}

\paragraph{$\tol\beta$}
The $\beta$ rule for $\tol$ is given by,
\[
  \semg{\apptol {(\lambtol {a} {e})} {a'}} = \semg{e[a'/a]}
\]
\begin{proof}
\begin{align*}
  \semg{\apptol {(\lambtol {a} {e})} {a'}}
  & = \psi^{{-1}}(\semg{\lambtol {a} {e}}) \circ \semg{a'} \otimes id \circ m_{a:A,\Delta} \\
  & = \psi^{{-1}}(\semg{\lambtol {a} {e}}) \circ \lambda \otimes id \circ m_{a:A,\Delta} \\
  & = \psi^{{-1}}(\psi(\semg{e} \circ m^{-1}_{a:A,\Delta} \circ \lambda^{-1} \otimes id)) \circ \lambda \otimes id \circ m_{\Delta',\Delta} \\
  & = \semg{e} \circ m^{-1}_{a:A,\Delta} \circ \lambda^{-1} \otimes id \circ \lambda \otimes id \circ m_{a:A,\Delta} \\
  & = \semg{e}
\end{align*}

\begin{align*}
  \semg{e[a'/a]}
  & = \semg{e} \circ \semg{a'/a} \tag{\cref{lem:subst}} \\
  & = \semg{e} \tag{coherence}
\end{align*}
\end{proof}

\paragraph{$\tol\eta$}
The $\eta$ rule for $\tol$ is given by,
\[
  \semg{\lambtol {a} {\apptol e a}} = \semg{e}
\]
\begin{proof}
\begin{align*}
  \semg{\lambtol {a} {\apptol e a}}
  & = \phi(\semg{\apptol e a} \circ m^{-1}_{a:A,\Delta} \circ \lambda^{-1} \otimes id) \\
  & = \phi(\phi^{-1}(\semg{e}) \circ \semg{a} \otimes id \circ m_{a:A,\Delta} \circ m^{-1}_{a:A,\Delta} \circ \lambda^{-1} \otimes id) \\
  & = \phi(\phi^{-1}(\semg{e}) \circ \lambda \otimes id \circ m_{a:A,\Delta} \circ m^{-1}_{a:A,\Delta} \circ \lambda^{-1} \otimes id) \\
  & = \phi(\phi^{-1}(\semg{e})) \\
  & = \semg{e}
\end{align*}
\end{proof}

\subsubsection{$\bigamp$-Products}
\paragraph{$\bigamp$-Introduction}
$\semg{\dlamb x e} : \semg{\Delta} \to \prod_{x : \semg{X}} {\sem{A}(\gamma , x)}$ is
defined by the universal property of the product
\[
\semg{\dlamb x e} = \left( \sem{e}(\gamma , x) \right)_{(x : \semg{X})}
\]

\paragraph{$\bigamp$-Elimination}
$\semg{e.\pi~M} : \semg{\Delta} \to \sem{A}(\gamma , M)$ is defined using the
projection out of the product,

\begin{center}
\begin{tikzcd}
  \semg{\Delta} \arrow[r , "\semg{e}"] &
  \prod_{x : \semg{X}} {\sem{A}(\gamma , x)} \arrow[r , "\pi_M"] &
  \sem{A}(\gamma , M)
\end{tikzcd}
\end{center}

\paragraph{$\bigamp\beta$}
The $\beta$ law for $\bigamp$ is given by,
\[
  \sem{(\dlamb x e).\pi~M}(\gamma, x) = \sem{e[M/x]}(\gamma, x)
\]
\begin{proof}
\begin{align*}
  \semg{(\dlamb x e).\pi~M}
  & = \pi_{M} \circ \semg{\dlamb x e} \\
  & = \pi_{M} \circ (\sem{e}(\gamma , x))_{(x : \semg{X})} \\
  & = \sem{e}(\gamma , M)
\end{align*}
by the universal property of the product.

\begin{align*}
  \sem{e[M/x]}(\gamma , x)
  & = \sem{e}(\gamma , x) \circ  \sem{M/x}(\gamma , x) \\
  & = \sem{e}(\gamma , M)
\end{align*}
\end{proof}

\paragraph{$\bigamp\eta$}
The $\eta$ law for $\bigamp$ is given by,
\[
  \semg{(\dlamb x {e.\pi~x})} = \semg{e}
\]
\begin{proof}
\begin{align*}
  \semg{(\dlamb x {e.\pi~x})}
  & = (\semg{e.\pi~x})_{x : \semg{X}} \\
  & = (\pi_x \circ \semg{e})_{x : \semg{X}} \\
  & = \semg{e}
\end{align*}
\end{proof}
by the universal property of the product.

\subsubsection{$\bigoplus$-Sums}
\paragraph{$\bigoplus$-Introduction}
$\semg{\sigma~M~e} : \semg{\Delta} \to \coprod_{x : \semg{X}} \sem{A}(\gamma, x)$

\begin{center}
\begin{tikzcd}
  \semg{\Delta} \arrow[r , "\semg{e}"] &
  \sem{A}(\gamma, M) \arrow[r , "i_M"] &
  \coprod_{x : \semg{X}} {\sem{A}(\gamma , x)}
\end{tikzcd}
\end{center}

\paragraph{$\bigoplus$-Elimination}
$\semg{\letin {\sigma~x~a} {e} {e'}} : \semg{\Delta'_{1},\Delta,\Delta'_{2}} \to \semg{C}$
is defined in the diagram

\begin{center}
\begin{tikzcd}[column sep =large]
{\semg{\Delta'_1,\Delta,\Delta'_2}} \arrow[r, "{m_{(\Delta'_1,\Delta),\Delta'_2}}"]                                                                                                           & {\semg{\Delta'_1,\Delta} \otimes \semg{\Delta'_2}} \arrow[d, "{m_{\Delta'_1,\Delta}\otimes id}"]                             \\
{\left( \semg{\Delta'_1} \otimes \coprod_{x:\semg{X}} {\sem{A}(\gamma , x)} \right) \otimes \semg{\Delta'_2}} \arrow[d, "d"]                                                                 & \left( \semg{\Delta'_1} \otimes \semg{\Delta} \right) \otimes \semg{\Delta'_2} \arrow[l, "(id \otimes \semg{e}) \otimes id"] \\
{\coprod_{x : \semg{X}} {\left( \sem{\Delta'_1}(\gamma , x) \otimes
      \sem{A}(\gamma , x) \right) \otimes \sem{\Delta'_2}(\gamma , x)}} \arrow[r, outer sep=.15cm, "\coprod_{x:\semg{X}}({m^{-1}_{(\Delta'_1,a:A),\Delta'_2}})"] & {\coprod_{x : \semg{X}} {\sem{\Delta'_1,a : A,\Delta'_2}(\gamma , x)}} \arrow[d, "{[ \sem{e'}(\gamma , x) ]_{(x : \semg{X})}}"]     \\
\semg{C}                                                                                                                                                                                      & {\sem{C}(\gamma, x)} \arrow[l]
\end{tikzcd}
\end{center}

where $d$ is the distributivity morphism, and the last morphism implicitly
weakens $\sem{C}$.

\paragraph{$\bigoplus\beta$}
The $\beta$ rule for $\bigoplus$ is given by,
\[
  \semg{\letin {\sigma~x~a} {\sigma~M~a'} {e'}} = \semg{e'[M/x,a'/a]}
\]
\begin{proof}
  \begin{align*}
  \llbracket \mathsf{let}\, {\sigma~x~a} = & {\sigma~M~a'} \, \mathsf{in}\, {e'} \rrbracket \gamma \\
  & = {[ \sem{e'}(\gamma , x) ]_{(x : \semg{X})}} \circ \coprod_{x:\semg{X}}(m^{-1}) \circ d \circ (id
  \otimes \semg{\sigma~M~a'}) \otimes id \circ m \otimes id \circ m \\
  & = {[ \sem{e'}(\gamma , x) ]_{(x : \semg{X})}} \circ \coprod_{x:\semg{X}}m^{-1} \circ d \circ (id
  \otimes (i_M \circ \semg{a'})) \otimes id \circ m \otimes id \circ m \\
  & = {[ \sem{e'}(\gamma , x) ]_{(x : \semg{X})}} \circ \coprod_{x:\semg{X}} m^{-1} \circ d \circ (id \otimes i_M) \otimes id \circ (id \otimes  \lambda) \otimes id \circ m \otimes id \circ m \\
  & = {[ \sem{e'}(\gamma , x) ]_{(x : \semg{X})}} \circ \coprod_{x:\semg{X}} m^{-1} \circ i_M \circ (id \otimes  \lambda) \otimes id \circ m \otimes id \circ m \\
  & = \sem{e'}(\gamma , M) \circ m^{-1} (id \otimes  \lambda) \otimes id \circ m \otimes id \circ m \\
  & = \sem{e'}(\gamma , M) \tag{coherence}
  \end{align*}

  \begin{align*}
    \semg{e'[M/x,a'/a]}
    &= \semg{e'} \circ \semg{M/x,a'/a} \tag{\cref{lem:subst}} \\
    &= \sem{e'}(\gamma , x)
  \end{align*}
\end{proof}

\paragraph{$\bigoplus\eta$}
It suffices to show
\[
  \semg{\letin {\sigma~x~a} {c'} {f[(\sigma~x~a)/c]}} = \semg{f[c'/c]} = \semg{f}
  \]
First, expanding the left hand side, we have.
\begin{proof}
  \begin{align*}
    \llbracket \mathsf{let}\, {\sigma~x~a} = {c'} & \, \mathsf{in}\, {f[(\sigma~x~a)/c]} \rrbracket \gamma \\
    &= [ \semg{f[(\sigma~x~a)/c]} ]_{(x : \semg{X})} \circ \coprod_{x:\semg{X}}(m^{-1}) \circ d \circ (id \otimes \lambda) \otimes id \circ m \otimes id \circ m\\
    &= [ \semg{f} \circ (id \otimes i_x) \otimes id ]_{(x : \semg{X})} \circ \coprod_{x:\semg{X}}(m^{-1}) \circ d \circ (id \otimes \lambda) \otimes id \circ m \otimes id \circ m\\
    &= \semg{f} \circ [ (id \otimes i_x) \otimes id \circ (m^{-1})]_{(x : \semg{X})} \circ d \circ (id \otimes \lambda) \otimes id \circ m \otimes id \circ m\\
  \end{align*}
Since the domain has the universal property of a coproduct (due to distributivity), to prove
this is equal to $\semg{f}$, it is sufficient to prove they are equal
when composed with the injections:

\begin{align*}
  \semg{f} &\circ [ (id \otimes i_x) \otimes id \circ (m^{-1})]_{(x : \semg{X})} \circ d \circ (id \otimes \lambda) \otimes id \circ m \otimes id \circ m \circ (id \otimes i_y) \otimes id\\
  &= \semg{f} \circ [ (id \otimes i_x) \otimes id \circ (m^{-1})]_{(x : \semg{X})} \circ d  \circ (id \otimes i_y) \otimes id \circ (id \otimes \lambda) \otimes id \circ m \otimes id \circ m \tag{naturality}\\
  &= \semg{f} \circ [ (id \otimes i_x) \otimes id \circ (m^{-1})]_{(x : \semg{X})} \circ i_y \otimes id \circ (id \otimes \lambda) \otimes id \circ m \otimes id \circ m \tag{naturality}\\
  &= \semg{f} \circ (id \otimes i_y) \otimes id \circ (m^{-1}) \otimes id \circ (id \otimes \lambda) \otimes id \circ m \otimes id \circ m \\
  &= \semg{f} \circ (m^{-1}) \otimes id \circ (id \otimes \lambda) \otimes id \circ m \otimes id \circ m \circ (id \otimes i_y) \otimes id \\
  &= \semg{f} \circ (id \otimes i_y) \otimes id \tag{coherence}\\
\end{align*}
\end{proof}

\subsubsection{Equalizer}
\paragraph{Equalizer Introduction}
$\semg{\equalizerin{e}} : \semg{\Delta} \to \semg{\equalizer e f g}$ where
$\semg{e} : \semg{\Delta} \to \semg{A}$ and
$\semg{f} \circ \semg{e} = \semg{g} \circ \semg{e}$. By the universal property
of the equalizer the preceding equality induces a unique morphism
$\semg{\Delta} \to Eq(\semg{f} , \semg{g}) = \semg{\equalizer e f g}$. Define
$\semg{\equalizerin{e}}$ to be this map.

\paragraph{Equalizer Elimination}
$\semg{\equalizerpi e} : \semg{\Delta} \to \semg{A}$ is defined using the map
$\pi_{eq}$ from $Eq(\semg{f} , \semg{g})$ to the domain of $f$ and $g$.
\[
  \semg{\equalizerpi e} = \pi_{eq} \circ \semg{e}
\]
\paragraph{Equalizer $\beta$}
The $\beta$ rule for $\equalizer e f g$ is given as,
\[
\semg{\equalizerpi {\equalizerin e}} = \semg{e}
\]
where $\semg{f} \circ \semg{e} = \semg{g} \circ \semg{e}$.
\begin{proof}
  In $\semcat$ the universal property of $Eq(\semg{f}, \semg{g})$ implies that
  the following diagram commutes, implying the $\beta$ rule.

\begin{center}
\begin{tikzcd}
	{Eq(\semg{f},\semg{g})} & {\semg{A}} & {\semg{B}} \\
	{\semg{\Delta}}
	\arrow["{\pi_{eq}}", from=1-1, to=1-2]
	\arrow["{\semg{f}}", shift left, from=1-2, to=1-3]
	\arrow["{\semg{g}}", shift right, swap, from=1-2, to=1-3]
	\arrow["{\semg{\equalizerin e}}", from=2-1, to=1-1]
	\arrow["{\semg{e}}"', from=2-1, to=1-2]
\end{tikzcd}
\end{center}
\end{proof}

\paragraph{Equalizer $\eta$}

The $\eta$ rule for $\equalizer e f g$ is given as,
\[
\semg{\equalizerin {\equalizerpi e}} = \semg{e}
\]
\begin{proof}
  Likewise, the universal property of $Eq(\semg{f}, \semg{g})$ implies $\eta$
  rule via this diagram.

\begin{center}
\begin{tikzcd}
	{Eq(\semg{f},\semg{g})} & {\semg{A}} & {\semg{B}} \\
	{\semg{\Delta}}
	\arrow["{\pi_{eq}}", from=1-1, to=1-2]
	\arrow["{\semg{f}}", shift left, from=1-2, to=1-3]
	\arrow["{\semg{g}}", shift right, swap, from=1-2, to=1-3]
	\arrow["{\semg{e}}", from=2-1, to=1-1]
	\arrow["{\semg{\equalizerpi e}}"', from=2-1, to=1-2]
\end{tikzcd}
\end{center}
\end{proof}

\subsection{Grammar Semantics Respects Additional Axioms}
We verify that the denotational semantics validates each of the axioms we have
assumed.

\paragraph{\cref{ax:dist}} Distributivity
\begin{theorem}[\Agda]
  In $\Grammar$, $\semg{\lamblto{e}{\letin{\sigma\,f\,e'}{e}{\withlamb{x}{e'\,.\pi\,x}}}}$ has
  an inverse.
\end{theorem}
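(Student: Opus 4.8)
The plan is to pass to the pointwise description of $\Grammar$ and reduce the claim to the fact that $\Set$ validates the distributivity of dependent sums over dependent products. Since an object of $\Grammar$ is a function $\StringSem\to\Set$ and a morphism is a family of functions of sets indexed by strings, it suffices to produce, for each string $w$, an inverse to the set-function that $\semg{\lamblto{e}{\letin{\sigma\,f\,e'}{e}{\withlamb{x}{e'\,.\pi\,x}}}}$ computes to at $w$, and then to note that these inverses are natural in $w$ — which will be automatic, since every construction involved is uniform in $w$. First I would unfold the denotation using the clauses of \cref{fig:semantics}: at a string $w$, the domain grammar $\LinSigTyLimit{f}{\PiTy{x}{X}{Y(x)}}{\LinPiTyLimit{x}{X}{A\,x\,(f\,x)}}$ denotes the set of pairs $(f,h)$ where $f$ assigns to each $x$ an element of $Y(x)$ and $h$ assigns to each $x$ an element of the parse set $A\,x\,(f\,x)$ at $w$, i.e. $\coprod_{f}\prod_{x}\sem{A}(x,f(x))\,w$; and the codomain $\LinPiTyLimit{x}{X}{\LinSigTyLimit{y}{Y(x)}{A\,x\,y}}$ denotes $\prod_{x}\coprod_{y}\sem{A}(x,y)\,w$. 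Tracing the given term through the semantic diagrams for $\lto$-introduction, indexed-sum elimination, and indexed-product introduction/elimination, and applying the $\beta\eta$-equations verified in \cref{sec:denotational}, one finds that its denotation at $w$ sends a pair $(f,h)$ to the function $x\mapsto(f(x),h(x))$; here the term implicitly coerces $e'\,.\pi\,x : A\,x\,(f\,x)$ into the summand at index $f\,x$, so the $x$-component lands in $\coprod_{y}\sem{A}(x,y)\,w$.

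Next I would exhibit the inverse directly as a family of set-functions. Given $g\in\prod_{x}\coprod_{y}\sem{A}(x,y)\,w$, write $g(x)=(y_x,a_x)$ with $y_x\in Y(x)$ and $a_x\in\sem{A}(x,y_x)\,w$, and send $g$ to the pair consisting of the section $f := (x\mapsto y_x)$ together with the family $h := (x\mapsto a_x)$, which is a well-typed element of $\coprod_{f}\prod_{x}\sem{A}(x,f(x))\,w$. No choice principle is needed, since $y_x$ and $a_x$ are simply the two projections of the given $g(x)$; and the section $f$ is allowed to depend on $w$ precisely because a parse of $\bigoplus_{f}(-)$ at $w$ is by definition a choice of $f$ together with the remaining data. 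That the two round-trips are the identity is exactly the $\beta\eta$-laws for $\Sigma$ and $\Pi$ in $\Set$ — unpack-then-repack a dependent pair, and uncurry-then-curry — and the family is natural in $w$ since the definition mentions $w$ only through the carrier sets. Hence the denoted parse transformer is an isomorphism in $\Grammar$. Equivalently, one may observe that $\Grammar\simeq\Set/\Sigma^*$ is a topos, hence a $\Pi$-pretopos, in which the distributivity morphism is always invertible; the explicit pointwise description is then needed only to identify that abstract isomorphism with the denotation of this particular term.

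I expect the main obstacle to be the bookkeeping in the first step: following the term's denotation through the composite diagrams of \cref{sec:denotational} — in particular the indexed-sum elimination clause, which itself invokes the distributivity of Cartesian products over coproducts in $\Set$ — and confirming that all the generalized associators and unitors cancel, so that the result is exactly the pairing map $(f,h)\mapsto\big(x\mapsto(f(x),h(x))\big)$ and not some reassociated variant of it. Once that identification is pinned down, constructing the inverse and checking the round-trip equations is routine.
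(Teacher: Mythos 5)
Your proposal is correct and matches the paper's own proof: the paper likewise notes that $\Grammar$ is a topos (hence distributive) and then exhibits exactly the pointwise inverse you describe, namely $p \mapsto \big(x\mapsto (p\,x).fst,\; x\mapsto (p\,x).snd\big)$. The only difference is that you spell out the unfolding of the denotations and the round-trip checks, which the paper leaves implicit.
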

\begin{proof}
  Distributivity is true in the denotational semantics, as the category $\Grammar$ is a topos, which are well-known to be distributive.
  The following map forms the desired inverse,
  \[
    \lamb {\gamma} {\lamb {w} {\lamb {p} {(\lamb {\left( \lamb {x} {(p~x).fst} \right)} , \lamb {x} {(p~x).snd})}}}
  \]
\end{proof}

\paragraph{\cref{ax:disjointness}} $\sigma$-Disjointness
\begin{theorem}[\Agda]
  In $\Grammar$, $\semg{\sigma x}$ and $\semg{\sigma x'}$ are disjoint for
  $x \neq x'$.
\end{theorem}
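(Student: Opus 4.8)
The plan is to unfold the denotational reading of \Cref{ax:disjointness} and reduce it to the standard fact that coproducts of sets are disjoint. Recall from the semantics of $\sigma$ that $\semg{\sigma\,x}$ is the parse transformer given by the $x$-th coproduct injection $i_x : \sem{A}(\gamma,x)\,w \to \coprod_{y\in\sem{X}\gamma}\sem{A}(\gamma,y)\,w$ into the interpretation of $\bigoplus_{y:X}A$. A semantic witness of disjointness for $\semg{\sigma\,x}$ and $\semg{\sigma\,x'}$ is a parse transformer from the grammar $G := \sem{\equalizer{b}{\left(\sigma\,x\circ\pi_1\right)}{\left(\sigma\,x'\circ\pi_2\right)}}\gamma$, where $b : A(x)\,\&\,A(x')$, into the empty grammar $\sem{0}\gamma = (w \mapsto \emptyset)$. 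Since a map into $\sem{0}\gamma$ from any grammar exists (and is unique) exactly when the source is pointwise empty, it suffices to prove $G\,w = \emptyset$ for every string $w$.

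First I would compute $G\,w$ directly from \Cref{fig:semantics}. By the interpretation of binary $\&$ as a product of sets, $\sem{A(x)\,\&\,A(x')}\gamma\,w = \sem{A}(\gamma,x)\,w \times \sem{A}(\gamma,x')\,w$, and hence $\sem{\sigma\,x\circ\pi_1}\gamma\,w\,(a,a') = i_x(a)$ and $\sem{\sigma\,x'\circ\pi_2}\gamma\,w\,(a,a') = i_{x'}(a')$. By the interpretation of the equalizer as a pointwise subset,
\[
  G\,w \;=\; \{\,(a,a') \in \sem{A}(\gamma,x)\,w \times \sem{A}(\gamma,x')\,w \;\mid\; i_x(a) = i_{x'}(a') \,\}.
\]
But the interpretation of $\bigoplus$ is a disjoint (tagged) union of sets, so every element of $\coprod_{y\in\sem{X}\gamma}\sem{A}(\gamma,y)\,w$ records the index of the summand it comes from; in particular $i_x(a) = i_{x'}(a')$ forces $x = x'$, contradicting $x \neq x'$. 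Therefore $G\,w = \emptyset$ for all $w$, and the unique parse transformer $G \to \sem{0}\gamma$ witnesses the required disjointness.

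More conceptually, $\Grammar$ is equivalent to the slice topos $\Set/\Sigma^*$, hence is (l)extensive, so its coproduct injections are monic and any two distinct ones have initial pullback; the grammar $G$ is exactly a presentation of that pullback (the equalizer of $\sigma\,x\circ\pi_1$ and $\sigma\,x'\circ\pi_2$ computes the pullback of $i_x$ and $i_{x'}$), so $G$ is initial. There is no genuine obstacle here; the only care needed is in threading the layers of definitions together --- the semantics of $\&$ as a binary product, of $\bigoplus$ as a set-indexed coproduct, and of the equalizer as a pointwise subset --- so as to recognize that the construction in \Cref{ax:disjointness} is nothing but the pullback of two distinct coproduct injections, whose initiality is precisely disjointness of coproducts in $\Set$.
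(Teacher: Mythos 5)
Your proposal is correct and follows essentially the same route as the paper: the paper's proof likewise observes that the denotation of $\bigoplus$ is a tagged coproduct of sets, so the images of $\semg{\sigma\,x}$ and $\semg{\sigma\,x'}$ carry distinct first components and cannot agree, whence the equalizer grammar is pointwise empty. Your write-up merely makes explicit the intermediate steps (the semantics of $\&$, the equalizer as a pointwise subset, and the reduction of disjointness to emptiness) that the paper's one-line argument leaves implicit.
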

\begin{proof}
  This is trivially true, as the denotation of linear sum types is as $\Sigma$
  types in $\Grammar$. For any input, the first projections of $\semg{\sigma x}$
  is $x$ and the first projection of
  $\semg{\sigma x'}$ is $x'$. Because $x \neq x'$, the images of
  $\semg{\sigma x}$ and $\semg{\sigma x'}$ cannot agree.
\end{proof}

\paragraph{\cref{ax:string-top}} String is strongly equivalent to $\top$.
\begin{theorem}[\Agda]
  In $\Grammar$, $\semg{!}$ has an inverse.
\end{theorem}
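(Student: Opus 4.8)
The plan is to show that $\semg{\StringGram}$ is a \emph{terminal} object of $\Grammar$, which makes the result immediate. Indeed, $\top$ is the empty additive conjunction, so $\semg{\top}\,w = \prod_{x \in \emptyset}(\cdots) = \{*\}$ for every string $w$, i.e.\ $\semg{\top}$ is the terminal grammar, and $\semg{!}$ is the unique parse transformer $\semg{\StringGram} \to \semg{\top}$. Once we know $\semg{\StringGram}$ is also terminal, a unique map between terminal objects is necessarily an isomorphism. Equivalently and more concretely: by the semantics of $\uparrow$ and $\lto$, $\semg{\uparrow(\top \lto \StringGram)} = \prod_{w'}\semg{\top}\,w' \to \semg{\StringGram}\,w' = \prod_{w'}\semg{\StringGram}\,w'$, so an interpretation of $\mathsf{read}$ is exactly a choice of one element of $\semg{\StringGram}\,w'$ for each $w'$; once each such set is a singleton this choice is forced, both composites with $\semg{!}$ are identities by terminality, and this simultaneously validates \cref{ax:string-top}.

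So the only real work is to compute $\semg{\StringGram}\,w$. Unfolding definitions, $\StringGram = \CharGram^{*}$ with $\CharGram = \bigoplus_{c:\Sigma}\literal c$, and the Kleene star $A^{*}$ is the indexed inductive type $\mu\,F$ for the (non-indexed) strictly positive functor $F$ with $\el(F)(G) \cong I \oplus (A \otimes G)$. By the semantics of \cref{fig:semantics}, $\semg{\mu\,F}$ is the initial algebra of the corresponding polynomial endofunctor on $\Grammar$; unwinding this initial algebra, and using that $\semg{\literal c}\,v$ is inhabited by a unique element exactly when $v = c$ and that $\semg{A \otimes B}\,w$ is the set of splittings of $w$, yields the concrete description
\[
  \semg{\CharGram^{*}}\,w \;\cong\; \{\, (c_1,\dots,c_n) \mid n \ge 0,\ c_i \in \Sigma,\ c_1\cdots c_n = w \,\},
\]
the set of ways of spelling $w$ as a concatenation of single characters. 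A straightforward induction on $w$ (concatenation in $\Sigma^{*}$ has a unique factorization into letters) shows this set has exactly one element for every $w$, namely the letter sequence of $w$. Hence $\semg{\StringGram}$ is terminal, $\semg{!}$ is an isomorphism, and \cref{ax:string-top} holds in $\Grammar$.

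The only non-routine step is the identification of the initial algebra $\semg{\mu\,F}$ with this list-of-characters grammar; this is the standard computation of a $W$-type/free-monoid construction, already discharged in the Agda development via $IW$-trees, so it is safe to cite rather than redo here. Everything else --- terminality of $\top$, the characterization of $\semg{\uparrow(\top \lto \StringGram)}$, and the uniqueness argument --- is formal.
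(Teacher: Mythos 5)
Your proposal is correct, and it shares the paper's overall skeleton: both arguments reduce the claim to showing that $\semg{\StringGram}\,w$ is a singleton for every $w$ (equivalently, that $\semg{\StringGram}$ is terminal in $\Grammar \simeq \Set/\Sigma^*$, since $\semg{\top}$ plainly is). Where you diverge is in how that singleton-ness is established. You unwind the initial algebra for $\CharGram^{*}$ into the concrete description of $\semg{\CharGram^{*}}\,w$ as the set of factorizations of $w$ into single characters, and then invoke uniqueness of the letter-by-letter factorization of a word in the free monoid; the paper instead exhibits $\semg{\StringGram}\,w$ as a retract of $\semg{\LinSigTy{w}{\StringSem}{\internalize{w}}}\,w$ and shows \emph{that} grammar is a pointwise singleton (each $\semg{\internalize{w}}\,w'$ being inhabited uniquely exactly when $w = w'$), concluding via the fact that a retract of a singleton is a singleton. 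Your route is more direct and self-contained once the list description of the initial algebra is granted --- and you are right to flag that identification as the one non-routine step, which is exactly what the Agda development discharges via $IW$-trees; the paper's route avoids unwinding the initial algebra explicitly by detouring through the reified string grammar, at the cost of an extra induction. Your closing observation that terminality forces the interpretation of $\mathsf{read}$ and hence validates \cref{ax:string-top} is a genuine (if unasked-for) bonus consistent with the paper's intent.
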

\begin{proof}
  Because $\semg{\top}w$ is a singleton set for all $\gamma$ and $w$, it suffices to show that
  $\semg{\StringGram}w$ is likewise a singleton.

  First, we prove by induction on $w$ that $\semg{\StringGram}w$ is a retract of
  $\semg{\LinSigTy{w}{\StringSem}{\internalize{w}}}w$, a sum over a nonlinear
  type of strings. Then, again by
  induction on $w$, we show that
  $\semg{\LinSigTy{w}{\StringSem}{\internalize{w}}}w$ is isomorphic to
  $\semg{\top}w$.

  Each $\semg{\internalize{w}}w'$ is inhabited if and only if $w$ is equal to
  $w'$, and the parses of $w$ for $\semg{\internalize{w}}$ are unique. That is,
  $\semg{\LinSigTy{w}{\StringSem}{\internalize{w}}}w$ is a singleton set.
  So, $\semg{\StringGram}w$ is a retract of a singleton set, and is itself
  singleton. Thus, $\semg{\StringGram}w \cong \semg{\top}w$ in $\Set$.
\end{proof}

\fi
\end{document}